\newcommand{\footref}[1]{%
    $^{\ref{#1}}$%
}
\algrenewcommand\algorithmicrequire{\textbf{Input:}}
\algrenewcommand\algorithmicensure{\textbf{Output:}}
\def\myalgorithm#1{\textsf{#1}}
\def\closedinterval#1#2{\left\{#1,\ldots,#2\right\}}
\newtheorem{theorem}{\textbf{Theorem}}
{\bfseries}{\itshape}{\rmfamily}
\newtheorem{proposition}[theorem]{\textbf{Proposition}}
\newtheorem{lemma}[theorem]{\textbf{Lemma}}
\newtheorem{remark}[theorem]{\textbf{Remark}}
\newtheorem{corollary}[theorem]{\textbf{Corollary}}
\newtheorem{definition}[theorem]{\textbf{Definition}}
\newtheorem{notation}[theorem]{Notation}
\newtheorem{example}[theorem]{Example}
\DeclareMathOperator{\spn}{Span}
\DeclareMathOperator{\stb}{Stab}
\DeclareMathOperator{\rk}{rk}
\DeclareMathOperator{\GL}{GL}
\DeclareMathOperator{\RP}{RP}
\DeclareMathOperator{\comp}{\mathop{\circ}}
\newcommand\AL[1]{
	\mathscr{S}_{{#1}}
	}
\newcommand\Bl[2]{\mathcal{L}(K^{#1},K^{#2};K)}
\newcommand\VecSp[1]{#1}
\newcommand\GLGL[2]{\GL({K^#1})\times\GL({K^#2})}
\newcommand\Near[2]{
	\ifthenelse{\equal{#2}{}}{
		\mathcal{V}_1(#1)}{
		\mathcal{V}_{#2}(#1)
		}
	}
\newcommand\Cover[2]{
	\AL{#1}(\spn(#2))
	}
\newcommand\QCover[2]{
	\tilde{\mathscr{S}}_{#1}(\{#2\})
	}
\newcommand\QQCover[2]{
	\tilde{\mathscr{S}}_{#1}(#2)
	}
\newcommand\CCover[2]{
	\AL{#1}(#2)
	}
\newcommand\ALQ[2]{
	\ifthenelse{\equal{#2}{}}{\quotient{\Cover{r}{#1}}{\stb(\VecSp{#1})}}{\quotient{\Cover{#2}{#1}}{\stb(\VecSp{#1})}}
	}
\newcommand\ShProdMap[1]{\VecSp{T}}
\newcommand\Trans[1]{{#1}^{\mathrm{T}}}
\def\mproof#1{%
    \trivlist
    \item[%
        \hskip 10\p@
        \hskip \labelsep
        {\sc #1.}%
    ]
    \ignorespaces
}
\def\mqed{%
    \unskip
    \kern 10\p@
    \hfill
    \begingroup
        \unitlength\p@
        \linethickness{.4\p@}%
        \framebox(6,6){}%
    \endgroup
    \global\@qededtrue
}
\def\bigslant#1#2{%
    \raise0em\hbox{$#1$}\lower0.2em\hbox{/}\raisebox{-0.5em}[0.1pt]{\footnotesize$#2$}%
}
\def\medslant#1#2{%
    \hbox{\footnotesize$#1$}\lower0.2em\hbox{/}\raisebox{-0.5em}[0.1pt]{\tiny$#2$}%
}
\newcommand{\quotient}[2]{%
\hbox{\(#1\)}\kern-.0em%
/%
\kern-.1em\lower.35ex\hbox{\(#2\)}}
\def\@fixme[#1]#2{\oldmarginpar{\smash{\hbox to 0pt{\hss\tikz\node[rotate=-90,rectangle,draw=red,fill=red!15!white,opacity=0.4,very thick,rounded
corners](X){\parbox[c]{#1}{\tiny #2}};\hss}}}}%
\def\fixme{\@ifnextchar[{\@fixme}{\@fixme[8em]}}
\let\olog\log
\def\mlog#1{\olog^{(#1)}}
\def\@zlog{\ifnum\@tempcnta=1\olog\else\mlog{\the\@tempcnta}\fi}
\def\@loginner{\ifx\reserved@a\log\def\next\log{\@log}\else\let\next\@zlog\fi\next}
\def\@log{\advance\@tempcnta1\futurelet\reserved@a\@loginner}
\DeclareRobustCommand{\log}{\@tempcnta0\@log}
\begin{document}
\title{Improved method for finding optimal formulae for bilinear maps in a finite field}

\author{Svyatoslav Covanov}
\ead{svyatoslav.covanov@inria.fr}
\address{
Universit\'e de Lorraine, LORIA, UMR 7503, Vandoeuvre-l\`es-Nancy, F-54506, France}
\address{
	Inria, Villers-l\`es-Nancy, F-54600, France}
\address{
	CNRS, LORIA, UMR 7503, Vandoeuvre-l\`es-Nancy, F-54506, France
}

\begin{abstract}
	In 2012, Barbulescu, Detrey, Estibals and Zimmermann proposed a new framework to exhaustively search for
	optimal	formulae for evaluating bilinear maps over finite fields, such as Strassen or Karatsuba formulae.
	The main contribution of this work is a new criterion to aggressively prune useless branches in the 
	exhaustive search, thus leading	to the computation of new optimal formulae. We apply in particular our approach to the short product modulo $X^5$
	and the circulant product modulo $(X^5-1)$.
	Moreover, we are able to prove that there is essentially only one optimal decomposition of the product of 
	$3\times 2$ by $2\times 3$ matrices
	up to the action of some group of automorphisms.
\end{abstract}
\begin{keyword}
	bilinear rank, optimal formulae, polynomial multiplication, matrix multiplication, finite field arithmetic, bilinear map
\end{keyword}
\maketitle

\section{Introduction}
Finding optimal formulae for computing bilinear maps is a problem of algebraic complexity
theory~\cite{Brgisser:2010:ACT:1965416,BROCKETT1978207,strassen,doi:10.1137/0208037}, initiated by the discoveries of
Karatsuba and Ofman~\cite{karatsuba-en} and Strassen~\cite{strassen}.
It consists in determining
almost optimal algorithms for important problems of complexity theory, among which the well studied complexity of
matrix multiplication~\cite{strassen,Pan:1978:SAO:1382432.1382584,COPPERSMITH1990251,LeGall:2014:PTF:2608628.2608664}
and the complexity of polynomial multiplication~\cite{karatsuba-en,1963-toom,1971-scho,DBLP:journals/corr/HarveyHL14}.

As far as polynomial multiplication is concerned, the first improvement over the schoolbook method
came from Karatsuba and Ofman~\cite{karatsuba-en} in 1962, who proposed a decomposition of the bilinear map
associated to the product of two polynomials of degree $1$
$$A = a_0 + a_1 X \text{ and } B = b_0 + b_1 X.$$
Using the schoolbook algorithm, computing the product $A\cdot B$
requires ${4}$ multiplications over the coefficient ring: $a_0b_0$, $a_1b_0$, $a_0b_1$, $a_1b_1$.
With the algorithm proposed by Karatsuba, the coefficients of the product $A\cdot B$
can be retrieved from the computation of the ${3}$ following multiplications: $a_0b_0$, $(a_0+a_1)(b_0+b_1)$, $a_1b_1$.
In particular, Karatsuba's algorithm can be applied recursively to improve the binary complexity of the multiplication of two $n$-bit integers:
instead of $O(n^2)$ with the naive schoolbook algorithm, we obtain $O(n^{\log_2 3})$.

In 1969, Strassen~\cite{strassen} proposed formulae improving on the cost of the product of two $2\times 2$ matrices.
When applied recursively on large matrices, this leads to a binary complexity of $O(n^{\log_2 7})$
instead of $O(n^{\log_2 8}) = O(n^3)$. Smirnov describes in~\cite{Smirnov2013} practical algorithms for matrices of higher
dimensions. One can notice that, most of the time, optimal algorithms for matrix multiplication are unknown.
For example, it is possible compute the product of $3\times 3$ matrices over $\mathbb{C}$ with $23$ multiplications~\cite{MR0395320},
but the best known lower bound is still $19$~\cite{Blaser200343}.

\paragraph{\textbf{State of the art}}
An obstacle to finding optimal formulae is the fact that the decomposition of bilinear maps is known to be NP-hard~\cite{Hastad1990644}.
In terms of method, the least-squares method seems to be one of the most popular~\cite{Smirnov2013}.
Another way to decompose a bilinear map consists in using ingredients from geometry~\cite{Bernardi201351} and to find a generalization
of the decomposition of singular value decomposition for matrices to general tensors.
However, these methods
are essentially used over an algebraically closed field $K$ (e.g. $K=\mathbb{C}$) and are not meant to produce all the possible
decompositions for a bilinear map.
In our context, we are looking for a method computing optimal formulae (for the bilinear rank) over a finite field $K$.
These formulae can be used for the same bilinear map over any extension of $K$.
Thus, they can be used in the context of the asymptotic multiplication of polynomials over a finite field for example.
Furthermore, for a set of formulae of $\mathbb{Q}$, we can deduce formulae over a finite field $K$.
It should be possible, given all the optimal formulae over $K$ to obtain formulae over $\mathbb{Q}$.
In other terms, finding optimal formulae over finite fields can be used to improve on the multiplication algorithms
over larger fields.

Montgomery proposed in~\cite{1388200} an algorithm to compute such a decomposition for the particular case
of polynomials of small degree over a finite field. The author takes advantage
of the fact that the number of possible formulae is always finite on a finite field. He obtains new formulae for the multiplication
of polynomials of degree $4$, $5$ and $6$ over $\mathbb{F}_2$. In~\cite{Oseledets20082052}, Oseledets proposes
a heuristic approach
to solve the bilinear rank problem for the polynomial product over $\mathbb{F}_2$.
Later, Barbulescu et al.\ proposed in~\cite{Barbulescu2012}
a unified framework, extending the idea proposed by Oseledets.
This allows the authors to compute
the bilinear rank of different applications, such as the short product or the middle
product over a finite field. Their algorithm allows one to generate all the possible rank decompositions
of any bilinear map over a finite field. We extend this work in the current article.
\par

\paragraph{\textbf{Contributions}}
The work presented is an improvement to the algorithm introduced in~\cite{Barbulescu2012}, allowing one 
to increase the family of bilinear maps over a finite field for which we are able to compute all the optimal formulae.
Our algorithm relies on the automorphism group stabilizing a bilinear map,
and on the notion of ``stem'' of a vector space associated to such a bilinear map.
The main theorem of this work is Theorem~\ref{thm:mainthm} and it states that Algorithm~\ref{alg:mainalg}
is able to find all decompositions of a bilinear map over a finite field.
It can be used for proving lower bounds on the rank of a bilinear map and it has applications for improving
upper bounds on the Chudnovsky-Chudnovsky algorithms~\cite{CHUDNOVSKY1988285,Randriambololona2012489,Rambaud2015}.
Specifically, we compute all the decompositions
for the short product of polynomials $P$ and $Q$ modulo $X^5$
and the product of $3\times 2$ by $2\times 3$ matrices. The latter problem was out of reach with
the method used in~\cite{Barbulescu2012}. We prove, in particular, that the set of possible decompositions for this matrix product
is essentially unique, up to the action of the automorphism group.
It is difficult to propose a complexity analysis showing the impact of our method, since
it takes into account intrinsic properties of the bilinear maps that are considered.
\par

\paragraph{\textbf{Roadmap}}
This article is organized as follows. In Section~\ref{sec:bilrank}, we present the theoretical tools
and the framework for this article, corresponding to the framework introduced in~\cite{Barbulescu2012}.
In Section~\ref{sec:bdezstab}, we present, {with kind permission of the authors}, unpublished improvements~\cite{KaraNWithOrbits} taking into
account the symmetries of bilinear maps.
In Section~\ref{sec:algstruct}, we describe the algebraic structure of specific bilinear maps. This section can be skipped
on a first read, because it is only required in proofs of the following section.
In Section~\ref{sec:newalgo}, we describe the theoretical aspect of our main contribution, which relies on the construction
of coverings, and illustrate it with the examples of the short product and the matrix product.
We discuss specific algorithmic aspects in Section~\ref{sec:computetopo}:
this part is quite technical and can be skipped on a first read.
Finally, experimental timings 
are given in Section~\ref{sec:expresults}.
\par

\section{Preliminaries}
\label{sec:bilrank}
We present in this section the definition of the mathematical objects that we manipulate in this work
and we define the bilinear rank. We choose the characterization given by de Groote~\cite{deGroote:1987:LCB:23720}
or B\"urgisser et al.~\cite[Ch.\ 14]{Brgisser:2010:ACT:1965416}. In particular, we introduce here the framework of~\cite{Barbulescu2012}
and the underlying linear algebra problem.

\subsection{Problem statement}
Let ${K}$ be a field. Given a bilinear map $\mathbf{\Phi} :
{K}^m \times {K}^n  \rightarrow {K}^\ell$, the bilinear rank problem
consists in finding the minimal number of multiplications between scalars
used for evaluating $\mathbf\Phi$. The set $\mathcal{L}(K^m,K^n;K^\ell)$ denotes
the set of bilinear maps from $K^m\times K^n$ to $K^\ell$. Any bilinear map $\mathbf{\Phi}$
from $K^m\times K^n$ to $K^\ell$ can be seen as an element of $\Bl{m}{n}^\ell$,
whose coordinates are the bilinear forms $(\Phi_h)_{0\leq h < \ell }$.

\begin{example}[Multiplication of linear polynomials]
Let $A = a_0 + a_1 X$ and $B = b_0 + b_1 X$ be two polynomials over $K$. The product $A\cdot B$ is associated to the bilinear map
$\mathbf{\Phi}$ taking as input the vectors $\mathbf{a} = (a_0,a_1)$ and $\mathbf{b} = (b_0,b_1)$ such that
$$\mathbf\Phi =\begin{pmatrix} \Phi_0 \\ \Phi_1 \\ \Phi_2\end{pmatrix} : (\mathbf{a},\mathbf{b}) \mapsto 
	\begin{pmatrix} a_0 b_0\\ a_0 b_1 + a_1 b_0\\ a_1 b_1\end{pmatrix}.$$

	Denoting by $\phi_0$, $\phi_1$, $\phi_2$ and $\phi_3$ the bilinear forms
	$(\mathbf{a},\mathbf{b}) \mapsto a_0 b_0$, $(\mathbf{a},\mathbf{b}) \mapsto a_0 b_1$, 
	$(\mathbf{a},\mathbf{b}) \mapsto a_1 b_0$ and $(\mathbf{a},\mathbf{b}) \mapsto a_1 b_1$,
	respectively,
	we have
	$$\mathbf{\Phi} =
		\phi_0\cdot \begin{pmatrix}
	1\\ 0\\0
	\end{pmatrix} + \phi_1\cdot \begin{pmatrix}
	0\\ 1\\0
	\end{pmatrix}+\phi_2\cdot \begin{pmatrix}
	0\\ 1\\0
			\end{pmatrix} + \phi_3 \cdot \begin{pmatrix} 0 \\ 0 \\1\end{pmatrix},$$
	which corresponds to the schoolbook algorithm.

Let $\psi$ be an element of $\Bl{2}{2}$ such that $\psi : (\mathbf{a},\mathbf{b}) \mapsto (a_0 + a_1) (b_0 +b_1)$.
Then, since $\phi_1+\phi_2 = \psi - \phi_0 - \phi_3$, we can rewrite $\mathbf{\Phi}$ as

$$\mathbf{\Phi} =
	\phi_0\cdot \begin{pmatrix}
1\\ -1\\0
\end{pmatrix} + \psi\cdot \begin{pmatrix}
0\\ 1\\0
\end{pmatrix}+\phi_2\cdot \begin{pmatrix}
0\\ -1\\1
\end{pmatrix}.$$

The bilinear forms $\phi_0$, $\psi$ and $\phi_2$ each correspond to exactly one multiplication over $K$.
This decomposition corresponds to the Karatsuba algorithm.
Thus, we can deduce that the bilinear rank of $\mathbf\Phi$ is at most $3$.
Actually, one can show that the bilinear rank of $\mathbf\Phi$ is
equal to $3$.
\end{example}

Formally, a bilinear form $\phi \in \Bl{m}{n}$ is said to have rank one if there exist two linear forms $\alpha \in \mathcal{L}(K^m;K)$ and
$\beta \in \mathcal{L}(K^n;K)$ such that $\phi(\mathbf{a},\mathbf{b}) = \alpha(\mathbf{a}) \cdot \beta(\mathbf{b})$.
For $i\in \{0,\ldots,m-1\}$ and $j \in \{0,\ldots,n-1\}$, we denote by $e_{i,j}$ the bilinear forms $e_{i,j} : (\mathbf{a},\mathbf{b}) \mapsto a_ib_j$.
The $e_{i,j}$'s have rank one and form the canonical basis of $\Bl{m}{n}$. This implies that
any bilinear form can be expressed as a linear combination of bilinear forms of rank one.

\begin{definition}[Bilinear rank]
	\label{def:bilrank}
	The rank of a bilinear form $\Phi$, denoted by $\rk(\Phi)$, is defined as the minimal number of bilinear forms $\phi_t$ of rank one
	such that $\Phi$ is a linear combination of the $\phi_t$'s. Then, a family $(\phi_t)_t$ of cardinality $\rk(\Phi)$
	is said to be an optimal decomposition of $\Phi$.

	We extend this definition to bilinear maps $\mathbf{\Phi}\in\Bl{m}{n}^\ell$: the rank $r$ of $\mathbf{\Phi}$ is the cardinality
	of a minimal set of bilinear forms $(\phi_t)_{0\leq t < r}$ of rank one for which there exist
	vectors $\mathbf{c}_t \in K^\ell$ such that 
	$$\mathbf{\Phi} = \sum_{0\leq t < r} \phi_t \cdot\mathbf{c}_t.$$
\end{definition}

We have a matrix equivalent of Definition~\ref{def:bilrank}.
Indeed, for $\Phi \in \Bl{m}{n}$, there exists a matrix $M\in \mathcal{M}_{m,n}(K)$
such that $\Phi(\mathbf{a},\mathbf{b}) = \Trans{\mathbf{a}}\cdot M \cdot \mathbf{b}$
for $\mathbf{a} \in K^m$ and $\mathbf{b} \in K^n$. In this situation, the usual matrix rank of $M$ is equal to the rank
of $\Phi$ defined as above.
Let $\mathbf{\Phi}=(\Phi_0,\ldots,\Phi_{\ell-1})$ be a bilinear map of rank $r$, for which each $\Phi_h$ for
$0\leq h < \ell$
is represented by $M_h \in \mathcal{M}_{m,n}(K)$. Consequently, there exists a set of $r$ matrices $N_t \in \mathcal{M}_{m,n}(K)$
of rank one such that
$$\forall h \in \closedinterval{0}{\ell-1},\ M_h \in \spn(\{N_0,\ldots,N_{r-1}\}).$$

\begin{example}[Short product of polynomials of degree $2$]
We describe in this example the matrices associated to the short product of two polynomials of degree $2$.

Let $A$ and $B$ be the polynomials $A = a_0 +a_1X+a_2X^2$ and $B = b_0 + b_1 X+b_2X^2$.
We denote by $C$ the polynomial $A\cdot B \bmod X^3$:
$$C = a_0 b_0 + (a_0b_1 + a_1b_0)X + (a_0b_2+a_1b_1+a_2b_0)X^2.$$
We consider $A$ and $B$ as vectors of $K^3$ denoted by $\mathbf{a}$
and $\mathbf{b}$, respectively. Let $\Phi_0$, $\Phi_1$ and $\Phi_2$ be bilinear forms defined as
\begin{equation*}
\begin{split}
\Phi_0 : \ & (\mathbf{a},\mathbf{b}) \mapsto a_{0}b_{0}, \\
\Phi_1 : \ & (\mathbf{a},\mathbf{b}) \mapsto a_{0}b_{1} + a_{1}b_{0}, \\
\Phi_2 : \ & (\mathbf{a},\mathbf{b}) \mapsto a_{0}b_{2} + a_{1}b_{1}+a_2b_0. \\
\end{split}
\end{equation*}
	In order to represent the corresponding matrices,
	we use the canonical basis for $\Bl{3}{3}$, i.e. the bilinear forms $e_{i,j}$ satisfying
$e_{i,j} : (\mathbf{a},\mathbf{b}) \mapsto a_ib_{j}$, for $0\leq i,j<3$. Then, the matrices $M_h$ associated to $\Phi_h$ are
\begin{equation*}
\begin{split}
M_0 = \begin{pmatrix}
	1 & 0 & 0\\
	0 & 0 & 0\\
	0 & 0 & 0\\
\end{pmatrix},
M_1 = \begin{pmatrix}
	0 & 1 & 0\\
	1 & 0 & 0\\
	0 & 0 & 0\\
\end{pmatrix},
M_2 = \begin{pmatrix}
	0 & 0 & 1\\
	0 & 1 & 0\\
	1 & 0 & 0\\
\end{pmatrix}.
\end{split}
\end{equation*}
\end{example}

\subsection{A linear algebra problem}
\label{sec:delta}

The approach of~\cite{Barbulescu2012} consists in computing the rank of a bilinear map $\mathbf{\Phi}=(\Phi_0,\ldots,\Phi_{\ell-1})$
by considering $T = \spn(\{\Phi_0,\ldots,\Phi_{\ell-1}\})$, which is a subspace of $\Bl{m}{n}$.
Indeed, finding formulas for computing the $\Phi_t$'s is equivalent to finding a family of rank-one bilinear forms
generating $\VecSp{T}$.
Thus, we need to extend the definition
of the rank to subspaces of $\Bl{m}{n}$.

\begin{notation}
	\label{not:delta}
	For $\VecSp{T}$ a subspace of $\Bl{m}{n}$, we denote by $\CCover{m,n,r}{\VecSp{T}}$ the set of subspaces $\VecSp{V}\subset\Bl{m}{n}$ spanned by a free family of
rank-one bilinear forms of size $r$ such that
	$T\subset \VecSp{V}$.
	
	When $\VecSp{T} = \spn(\emptyset)$, $\CCover{m,n,r}{\VecSp{T}}$ is the set of subspaces $\VecSp{V}\in\Bl{m}{n}$ spanned by a free family of rank-one bilinear forms of size $r$ and we denote it simply by $\AL{m,n,r}$.

When $m$ and $n$ are clear from the context, these sets are simply denoted by $\CCover{r}{\VecSp{T}}$ and $\AL{r}$.
\end{notation}
We use Notation~\ref{not:delta} to define the rank of a subspace $\VecSp{T} \in \Bl{m}{n}$ in Definition~\ref{def:ranksp}.

\begin{definition}[Rank of a subspace of $\Bl{m}{n}$]
	\label{def:ranksp}
	Let $\VecSp{T}$ be a subspace of $\Bl{m}{n}$. We denote by
	$\rk(\VecSp{T})$ the smallest $r$ such that $\CCover{r}{\VecSp{T}} \neq \emptyset$.
	The set $\CCover{\rk(\VecSp{T})}{\VecSp{T}}$ is the said to be the set of optimal decompositions of $\VecSp{T}$.
\end{definition}
We observe that $\rk(\VecSp{T}) \geq \dim(\VecSp{T})$.

Let $\mathbf{\Phi} = (\Phi_0,\ldots,\Phi_{\ell-1}) \in \Bl{m}{n}^\ell$ and
$\VecSp{T} = \spn(\{\Phi_0,\ldots,\Phi_{\ell-1}\})\subset \Bl{m}{n}$.
Decomposing a bilinear map $\mathbf{\Phi} \in \Bl{m}{n}^{\ell}$ into linear combination of
 $r$ rank-one bilinear forms is equivalent to computing
$\CCover{r}{\VecSp{T}}$.
Our approach focuses on the latter point of view,
which is also the point of view taken by Algorithm~\cite[Alg. 1]{Barbulescu2012}.

\paragraph{\textbf{General strategy for computing the bilinear rank}}
Taking into account the formalism proposed in Section~\ref{sec:delta}, the algorithmic strategy we use
to compute the bilinear rank of a bilinear map is stated as follows.
\begin{itemize}
	\item Let $\VecSp{T} = \spn(\{\Phi_0,\ldots,\Phi_{\ell-1}\}) \subset \Bl{m}{n}$ of dimension $\ell$;
\item start with the known lower bound $r = \ell$ on the bilinear rank;
\item compute $\CCover{r}{\VecSp{T}}$;
\item if $\CCover{r}{\VecSp{T}} = \emptyset$, increment $r$ and return to the previous step;
\item if $\CCover{r}{\VecSp{T}} \neq \emptyset$, $r$ is the bilinear rank and $\CCover{r}{\VecSp{T}}$ the set of optimal decompositions.
\end{itemize}
\par
\subsection{The BDEZ Algorithm (Barbulescu, Detrey, Estibals, Zimmermann)}
\label{sec:exhaustivesearch}
We describe in this section Algorithm~\cite[Alg. 1]{Barbulescu2012}, which is a recursive method to solve
the bilinear rank problem for a bilinear map $\mathbf{\Phi} = (\Phi_0,\ldots,\Phi_{\ell-1})$ over a finite field.
As described above, this is essentially equivalent to
computing $\CCover{r}{\VecSp{T}}$ for $\VecSp{T} = \spn(\{\Phi_0,\ldots,\Phi_{\ell-1}\})$ of dimension $\ell$.

In order to get all the vector spaces
$\VecSp{V} \in \AL{r}$ such that $\VecSp{T} \subset \VecSp{V}$, we compute
the vector spaces $\VecSp{W} \in \AL{r-\ell}$  such that $\VecSp{T}\oplus \VecSp{W} \in \AL{r}$.
In other terms, instead of enumerating all the elements of $\AL{r}$, we rather enumerate complementary subspaces of $\VecSp{T}$
in $\AL{r-\ell}$.
This restriction can be done thanks to
Proposition~\cite[Prop. 1]{Barbulescu2012}, reformulated as Proposition~\ref{prop:baseincomplete} using the formalism
of Section~\ref{sec:delta}.

\begin{proposition}
	\label{prop:baseincomplete}
	Let $\VecSp{T}$ be a subspace of dimension $\ell$ of $\Bl{m}{n}$, let $r\geq \ell$ be an integer.
	For any $\VecSp{V} \in \CCover{r}{\VecSp{T}}$, there exists $\VecSp{W} \in \AL{r-\ell}$ such that $\VecSp{T}\oplus \VecSp{W} = \VecSp{V}$.
\end{proposition}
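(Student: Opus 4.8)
The plan is to reduce the statement to the Steinitz exchange lemma (equivalently, the basis extension theorem) applied to the two distinguished families that come with the data. By Notation~\ref{not:delta}, membership $\VecSp{V}\in\CCover{r}{\VecSp{T}}$ means precisely that $\VecSp{V}$ admits a basis $(\phi_0,\ldots,\phi_{r-1})$ consisting of rank-one bilinear forms, so that $\dim(\VecSp{V})=r$, and that $\VecSp{T}\subset\VecSp{V}$. Since $\dim(\VecSp{T})=\ell$, I would first fix an arbitrary basis $(\tau_1,\ldots,\tau_\ell)$ of $\VecSp{T}$; by hypothesis this is a linearly independent family sitting inside $\VecSp{V}$.

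The core step is to extend $(\tau_1,\ldots,\tau_\ell)$ to a basis of $\VecSp{V}$ using \emph{only} vectors drawn from the rank-one generating family $(\phi_0,\ldots,\phi_{r-1})$. This is exactly what the exchange lemma guarantees: given a linearly independent subset $L$ of a finite-dimensional space and a generating set $G$, there is a basis $B$ with $L\subseteq B\subseteq L\cup G$. Taking $L=\{\tau_1,\ldots,\tau_\ell\}$ and $G=\{\phi_0,\ldots,\phi_{r-1}\}$, I obtain a subset $S\subseteq\{0,\ldots,r-1\}$ such that $(\tau_1,\ldots,\tau_\ell)\cup(\phi_i)_{i\in S}$ is a basis of $\VecSp{V}$. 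Comparing cardinalities with $\dim(\VecSp{V})=r$ forces $|S|=r-\ell$.

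I would then set $\VecSp{W}=\spn\{\phi_i : i\in S\}$. The family $(\phi_i)_{i\in S}$ is free, being a subset of a basis, and each of its members is a rank-one form inherited from the decomposition of $\VecSp{V}$; hence $\VecSp{W}$ is spanned by a free family of $r-\ell$ rank-one bilinear forms, that is, $\VecSp{W}\in\AL{r-\ell}$. Finally, since $(\tau_1,\ldots,\tau_\ell)\cup(\phi_i)_{i\in S}$ is a basis of $\VecSp{V}$, its span is all of $\VecSp{V}$ and the two parts intersect only in $\{0\}$, which yields $\VecSp{T}\oplus\VecSp{W}=\VecSp{V}$ as required.

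There is no serious obstacle here; the only point deserving care is that the vectors adjoined to complete the basis must be selected from the rank-one family $(\phi_i)$ rather than from arbitrary elements of $\VecSp{V}$, since otherwise $\VecSp{W}$ would not be guaranteed to lie in $\AL{r-\ell}$. The basis extension theorem is tailored to deliver exactly this, so the rank-one property of $\VecSp{W}$ comes for free, and the whole argument is independent of the ground field $K$.
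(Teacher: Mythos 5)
Your proof is correct and takes essentially the same approach as the paper: the paper's inductive construction of the subspaces $\VecSp{W}_t$ (adding at each step an element of the rank-one basis $\mathcal{B}$ of $\VecSp{V}$ not yet lying in $\VecSp{T}\oplus\VecSp{W}_{t-1}$) is precisely the standard proof of the Steinitz exchange lemma you invoke, specialized to $L$ a basis of $\VecSp{T}$ and $G=\mathcal{B}$. The only difference is presentational, in that you cite the exchange lemma as a black box while the paper writes out the greedy induction explicitly.
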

\begin{proof}
	Let $\mathcal{B}$ be a basis of $V$ composed of rank-one matrices. We define inductively a sequence of subspaces
	$(\VecSp{W}_t)_{0\leq t\leq r-\ell}$, such that for any $t$ we have
	$\VecSp{W}_t\in\AL{t}$, as follows.
	\begin{itemize}
		\item The set $\VecSp{W}_0$ is the null subspace and satisfies
		$\VecSp{T}\oplus \VecSp{W}_{0} \subset V$ and $\dim{\VecSp{T}\oplus \VecSp{W}_{0}} = \ell$.

		\item For $t \in \closedinterval{1}{r-\ell}$, assuming that $\VecSp{T}\oplus \VecSp{W}_{t-1}
		\subset V$ and
			$\dim{(\VecSp{T}\oplus \VecSp{W}_{t-1})} = \ell+t-1$, there exists $\Phi\in \mathcal{B}$
		such that $\Phi\not\in \VecSp{T}\oplus \VecSp{W}_{t-1}$
		(otherwise $\VecSp{T}\oplus \VecSp{W}_{t-1} =V$ and
		$\dim{V} \leq r-1$, which is a contradiction).
			Then, we define $\VecSp{W}_t$ as $\VecSp{W}_t = \VecSp{W}_{t-1}\oplus\spn(\{\Phi\})$.
		The subspace $\VecSp{W}_t$ satisfies
			$\VecSp{T}\oplus \VecSp{W}_{t} \subset V$, $\dim{(\VecSp{T}\oplus \VecSp{W}_{t})} = \ell+t$
		and $\VecSp{W}_t \in \AL{t}$.
	\end{itemize}
	Taking $\VecSp{W} = \VecSp{W}_{r-\ell}$, Proposition~\ref{prop:baseincomplete} is proved.
\end{proof}
We denote by $\mathcal{G}$ the set of rank-one bilinear forms up to a multiplicative factor, isomorphic
to $\AL{m,n,1}$.
In a finite field, $\mathcal{G}$ is
a finite set of cardinality ${(\#K^m-1) (\#K^n-1)}/{(\#K-1)^2}$.
Algorithm~\myalgorithm{BDEZ} requires a test to determine whether, for
$\VecSp{V}\in\Bl{m}{n}$ of dimension $r$, we have $\VecSp{V} \in \AL{r}$: we denote
by~\myalgorithm{HasRankOneBasis} this test. A naive method to perform this test is described
in Algorithm~\ref{alg:basictest}. We could think of other methods based on solving bilinear systems,
but it does not seem efficient in our applications. However, an optimized version of this algorithm
is used for particular bilinear maps (such as product of $2\times 3$ by $3\times 2$ matrices, for example).

\begin{algorithm}[H]                      
	\caption{\myalgorithm{HasRankOneBasis} (naive method)}
\label{alg:basictest}

\begin{algorithmic}[1]                    
	\Require{$\VecSp{V}$ subspace of $\Bl{m}{n}$}
	\Ensure{Boolean indicating whether $V \in \AL{\dim(\VecSp{V})}$}
	\State $\mathcal{H} \leftarrow \mathcal{G}\cap \VecSp{V}$ \Comment $\#\mathcal{G}$ membership tests (Gaussian elimination)
	\If{$\dim(\spn(\mathcal{H})) = \dim(\VecSp{V})$}
	\State \Return \textbf{true}
	\Else
	\State \Return \textbf{false}
	\EndIf
\end{algorithmic}
\end{algorithm}

Algorithm~\myalgorithm{BDEZ} can be described as a  
recursive optimized version of the backtracking method constructing all the sets of cardinality
$r-\ell$ of independent bilinear forms of rank one.
The input of the first call to~\myalgorithm{BDEZ} is: a target subspace $\VecSp{T}$ of dimension $\ell$ 
and an integer $r$ ($r$ is a lower bound on the rank of $\VecSp{T}$, as explained at the end
of Section~\ref{sec:delta}).
\begin{algorithm}[H]                      
	\caption{\myalgorithm{BDEZ}}
\label{alg:recapproach}

\begin{algorithmic}[1]                    
	\Require{$\VecSp{T}\subset \Bl{m}{n}$ of dimension $\ell$, an integer $r$}
	\Ensure{$\CCover{r}{\VecSp{T}}$}
	\Function{\myalgorithm{ExpandSubspace}}{$\VecSp{V},\mathcal{H},d,r$}
	\If{$d = r$ and $\dim \VecSp{V} = r$ and \myalgorithm{HasRankOneBasis}($\VecSp{V}$)}
	\State \Return $\{\VecSp{V}\}$
	\Else
	\State $\mathcal{S} \leftarrow \emptyset$
	\For{$i \in \closedinterval{0}{\#\mathcal{H}-1}$} \Comment $\mathcal{H} = \{\phi_i | \ i \in [0,\#\mathcal{H}-1]\}$
	\State $\mathcal{H}' \leftarrow \{\phi_{i+1},\ldots,\phi_{\#\mathcal{H}-1}\} \mod \phi_i$
	\Comment Gauss reduction modulo $\phi_i$
	\label{enumrec:hmodw}
	\State $\mathcal{S} \leftarrow$ $\mathcal{S}\ \cup\ $\Call{\myalgorithm{ExpandSubspace}}{$\VecSp{V}\oplus \spn(\{\phi\}_i), \mathcal{H}', d+1, r$}
	\EndFor
	\State \Return $\mathcal{S}$
	\EndIf
	\EndFunction
	\State \Return \Call{\myalgorithm{ExpandSubspace}}{$\VecSp{T}, \mathcal{G} \bmod \VecSp{T}, \ell, r$}
	\Comment Gauss reduction of $\mathcal{G}$ modulo a basis of $\VecSp{T}$
	\label{bdez:calltoexp}
\end{algorithmic}
\end{algorithm}

Algorithm~\myalgorithm{BDEZ} takes into account, on Line~\ref{enumrec:hmodw}, the equivalence relation ``modulo $\VecSp{V}$'':
two distinct elements $\phi$ and $\phi'$ of $\mathcal{H}$ may be such that $V+\spn(\{\phi\}) = V+\spn(\{\phi'\})$.
Reducing each element of $\mathcal{H}$ against $\VecSp{V}$ (via Gauss reduction) allows us
to consider a single representative for each such equivalence class modulo $\VecSp{V}$.
A similar reduction is performed on Line~\ref{bdez:calltoexp} to compute $\mathcal{G} \bmod \VecSp{T}$.

The recursive calls of this algorithm can be represented by a tree in which each node at depth $r-\ell$ corresponds to a vector space
$\VecSp{T}\oplus W_{u_1,u_2,\ldots ,u_{r-\ell}}$ of dimension $r$ generated by a basis of $\VecSp{T}$ and rank-one matrices
$\phi_{u_1},\phi_{u_2},\ldots,\phi_{u_{r-\ell}}$.
For example, assuming that the initial set of rank-one bilinear forms is $\mathcal{G} = \{\phi_0,\phi_1,\phi_2,\phi_3\}$ and
ignoring the reductions computed on Line~\ref{enumrec:hmodw},
we would obtain  generically, for $r-\ell=3$, the
tree given in Figure~\ref{Graph:karan1}.

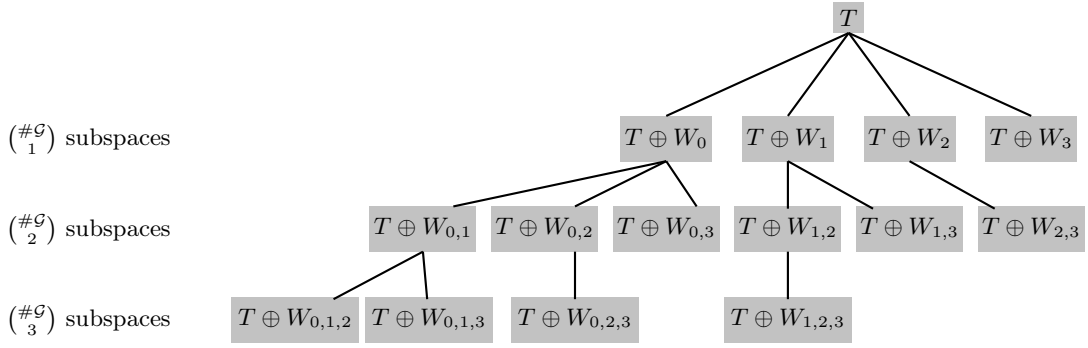
\begin{figure}[H]
\begin{center}
    \begin{tikzpicture}[scale=0.40,font=\small]
		\node (ND1) at (-25,-3.5) {$\binom{\#\mathcal{G}}{1}$ subspaces};
		\node (ND1) at (-25,-6.5) {$\binom{\#\mathcal{G}}{2}$ subspaces};
		\node (ND1) at (-25,-9.5) {$\binom{\#\mathcal{G}}{3}$ subspaces};
	    \fill[black!20!lightgray!60] (-0.5,0)
        rectangle (0.5,1);
        \node at (0,0.5) {$\VecSp{T}$};
        \foreach\i in {0,1,2,3} {
    	    \fill[black!20!lightgray!60] (4*\i-6.0-1.50,-2.75)
	        rectangle (4*\i-6.0+1.50,-4.25);
		\node (N\i) at (4*\i-6.0,-3.5) {$\VecSp{T}\oplus W_{\i}$};
	        \draw[thick] (0,0) -- (4*\i-6.0,-2.75);
        }
        \foreach\i in {1,2,3} {
       	    \fill[black!20!lightgray!60] (4*\i-18.0-1.75,-5.75)
	        rectangle (4*\i-18.0+1.75,-7.25);
	        \node (N0\i) at (4*\i-18.0,-6.5) {$\VecSp{T}\oplus W_{0,\i}$};
            \draw[thick] (-6.0,-4.25) -- (4*\i-17.0,-5.75);
        }
        \foreach\i in {2,3} {
       	    \fill[black!20!lightgray!60] (7.0+4*\i-17.0-1.75,-5.75)
	        rectangle (7.0+4*\i-17.0+1.75,-7.25);
	        \node (N1\i) at (7.0+4*\i-17.0,-6.5) {$\VecSp{T}\oplus W_{1,\i}$};
            \draw[thick] (4-6.0,-4.25) -- (N1\i);
        }
        \fill[black!20!lightgray!60] (6.0-1.75,-5.75)
        rectangle (6.0+1.75,-7.25);
        \node (N23) at (6.0,-6.5) {$\VecSp{T}\oplus W_{2,3}$};
        \draw[thick] (2.0,-4.25) -- (N23);
        \foreach\i in {2,3} {
        	\fill[black!20!lightgray!60] (4*\i*1.1-19.0-8.0-2.1,-8.75)
	        rectangle (4*\i*1.1-19.0-8.0+2.1,-10.25);
	        \node (N01\i) at (4*\i*1.1-19.0-8.0,-9.5) {$\VecSp{T}\oplus W_{0,1,\i}$};
            \draw[thick] (4-18,-7.25) -- (N01\i);
        }
        \fill[black!20!lightgray!60] (-9.0-2.1,-8.75)
        rectangle (-9.0+2.1,-10.25);
        \node (N023) at (-9.0,-9.5) {$\VecSp{T}\oplus W_{0,2,3}$};
        \draw[thick] (8.0-17.0,-7.25) -- (N023);
        \fill[black!20!lightgray!60] (-2.0-2.1,-8.75)
        rectangle (-2.0+2.1,-10.25);
        \node (N123) at (-2.0,-9.5) {$\VecSp{T}\oplus W_{1,2,3}$};
        \draw[thick] (-2.0,-7.25) -- (N123);
    \end{tikzpicture}
\end{center}
	\caption{Tree of recursive calls in an exhaustive search with depth $r-\ell=3$}
    \label{Graph:karan1}
\end{figure}
\section{Improving on~\myalgorithm{BDEZ} using symmetries}
\label{sec:bdezstab}
We present in this section, with kind permission from the authors,
an unpublished improvement~\cite{KaraNWithOrbits} to Algorithm~\myalgorithm{BDEZ}.
This improvement takes into account the fact that we can define rank-preserving automorphisms of
$\Bl{m}{n}$.
Their action is defined in Section~\ref{subsec:isom}.
\subsection{Action of automorphisms on $\Bl{m}{n}$}
\label{subsec:isom}
We work with subspaces of $\Bl{m}{n}$ rather than with bilinear maps, as in Section~\ref{sec:delta}.
We describe in this section the rank-preserving group of automorphisms $\sigma$ acting on subspaces $\VecSp{T} \subset \Bl{m}{n}$,
also referred to as the $\RP$-automorphisms group.

\begin{definition}
	\label{def:automat}
	An element $\sigma = (\mu,\nu) \in \GLGL{m}{n}$
	acts on $\Bl{m}{n}$ via
	$$\Phi \comp \sigma : (\mathbf{a},\mathbf{b}) \mapsto \Phi(\mu(\mathbf{a}),\nu(\mathbf{b})).$$
	Such an element is called $\RP$-automorphism.
\end{definition}
\begin{proposition}
	The action of $\GLGL{m}{n}$ is a group action and its elements are all invertible.
\end{proposition}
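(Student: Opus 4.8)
The plan is to verify the two group-action axioms (identity and compatibility with the group law), check that the action is well-defined, and then read off invertibility directly from the group structure. First I would confirm that $\Phi \comp \sigma$ stays inside $\Bl{m}{n}$: fixing $\mathbf{b} \in K^n$, the map $\mathbf{a} \mapsto \Phi(\mu(\mathbf{a}), \nu(\mathbf{b}))$ is the composite of the linear map $\mu$ with the linear form $\Phi(\cdot, \nu(\mathbf{b}))$, hence linear in $\mathbf{a}$; by symmetry it is linear in $\mathbf{b}$. Thus $\Phi \comp \sigma$ is again a bilinear form and the assignment is well-defined on $\Bl{m}{n}$.

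Next I would establish the axioms. For the identity $e = (\Id, \Id)$ one has $(\Phi \comp e)(\mathbf{a}, \mathbf{b}) = \Phi(\mathbf{a}, \mathbf{b})$, so $\Phi \comp e = \Phi$. For compatibility, write $\sigma = (\mu, \nu)$ and $\tau = (\mu', \nu')$ and compute
\[((\Phi \comp \sigma) \comp \tau)(\mathbf{a}, \mathbf{b}) = \Phi(\mu(\mu'(\mathbf{a})), \nu(\nu'(\mathbf{b}))) = (\Phi \comp (\sigma\tau))(\mathbf{a}, \mathbf{b}),\]
using that the product in $\GLGL{m}{n}$ is componentwise composition, $\sigma\tau = (\mu \comp \mu', \nu \comp \nu')$. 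Hence $(\Phi \comp \sigma) \comp \tau = \Phi \comp (\sigma\tau)$, which together with the identity axiom exhibits a (right) group action.

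Finally, invertibility follows formally. Each $\sigma$ defines a map $L_\sigma : \Phi \mapsto \Phi \comp \sigma$ on $\Bl{m}{n}$, and this map is linear in $\Phi$ since $(\lambda \Phi_1 + \Phi_2)(\mu(\mathbf{a}), \nu(\mathbf{b})) = \lambda \Phi_1(\mu(\mathbf{a}), \nu(\mathbf{b})) + \Phi_2(\mu(\mathbf{a}), \nu(\mathbf{b}))$. Because $\GLGL{m}{n}$ is a group, $\sigma$ admits an inverse $\sigma^{-1} = (\mu^{-1}, \nu^{-1})$, and the compatibility axiom gives $L_{\sigma^{-1}} \comp L_\sigma = L_\sigma \comp L_{\sigma^{-1}} = L_e = \Id$. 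Therefore $L_\sigma$ is a linear bijection of $\Bl{m}{n}$, i.e. an invertible automorphism, establishing that every element of the group acts invertibly.

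I do not expect any genuine obstacle: the argument is a routine verification. The only point requiring care is the bookkeeping of the composition order — one must confirm that the chosen convention (group multiplication equals componentwise composition) makes $\comp$ a right rather than a left action, so that $L_{\sigma^{-1}}$ is genuinely the two-sided inverse of $L_\sigma$. A slip in this ordering would not change the conclusion but would momentarily obscure which group element inverts which automorphism.
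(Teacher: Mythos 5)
Your proof is correct and follows essentially the same route as the paper: the central compatibility computation $((\Phi \comp \sigma)\comp \sigma')(\mathbf{a},\mathbf{b}) = \Phi(\mu(\mu'(\mathbf{a})),\nu(\nu'(\mathbf{b}))) = (\Phi \comp (\sigma\comp\sigma'))(\mathbf{a},\mathbf{b})$ is exactly the paper's displayed identity, and invertibility is deduced from the group structure of $\GLGL{m}{n}$ just as the paper does. Your version is merely more thorough, adding the well-definedness check (bilinearity of $\Phi\comp\sigma$), the identity axiom, and the explicit two-sided inverse $L_{\sigma^{-1}}$, all of which the paper leaves implicit.
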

\begin{proof}
	For $\sigma = (\mu,\nu),\sigma' = (\mu',\nu') \in \GLGL{m}{n}$ and $\Phi \in \Bl{m}{n}$,
	we have
	$$\forall \mathbf{a},\mathbf{b},\ 
	((\Phi \comp \sigma)\comp \sigma') (\mathbf{a},\mathbf{b}) =
	(\Phi \comp \sigma)(\mu'(\mathbf{a}),\nu'(\mathbf{b})) =
	\Phi(\mu(\mu'(\mathbf{a})),\nu(\nu'(\mathbf{b}))) =
	(\Phi \comp (\sigma\comp \sigma'))(\mathbf{a},\mathbf{b}).
	$$
	Thus, the action that we defined is indeed a group action.
	Since all the elements of $\GLGL{m}{n}$ are invertible, we have automorphisms.
\end{proof}
\begin{proposition}[$\RP$-automorphisms preserve the rank]
\label{prop:preserverank}
	Let $\sigma \in \GLGL{m}{n}$.
	\begin{itemize}
	\item 	For any $\Phi \in \Bl{m}{n}$, we have $\rk(\Phi\comp \sigma) = \rk(\Phi)$.
	\item	For any subspace $\VecSp{T} \subset \Bl{m}{n}$, we also have
	$\rk(\VecSp{T} \comp \sigma) = \rk(\VecSp{T})$.
	\end{itemize}
\end{proposition}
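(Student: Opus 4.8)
The plan is to reduce both statements to a single structural fact: for any $\sigma = (\mu,\nu) \in \GLGL{m}{n}$, the map $\Phi \mapsto \Phi \comp \sigma$ is a linear automorphism of the vector space $\Bl{m}{n}$ that restricts to a bijection of the set of rank-one forms onto itself. First I would check linearity: since substituting the arguments $(\mu(\mathbf{a}),\nu(\mathbf{b}))$ into a bilinear form is linear in the form, we have $(\Phi + \lambda\Phi')\comp\sigma = (\Phi\comp\sigma) + \lambda(\Phi'\comp\sigma)$ for all scalars $\lambda$. Because $\mu$ and $\nu$ lie in $\GL$, the pair $\sigma^{-1} = (\mu^{-1},\nu^{-1})$ is again an $\RP$-automorphism, and it provides a two-sided inverse for $\comp\sigma$ by the composition law established in the previous proposition; hence $\comp\sigma$ is a bijection. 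For the preservation of rank-one forms, I would take $\phi$ of rank one, so that $\phi(\mathbf{a},\mathbf{b}) = \alpha(\mathbf{a})\cdot\beta(\mathbf{b})$ for nonzero linear forms $\alpha,\beta$, and compute $(\phi\comp\sigma)(\mathbf{a},\mathbf{b}) = (\alpha\comp\mu)(\mathbf{a})\cdot(\beta\comp\nu)(\mathbf{b})$. Since $\mu,\nu$ are invertible, $\alpha\comp\mu$ and $\beta\comp\nu$ are again nonzero linear forms, so $\phi\comp\sigma$ has rank one.

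For the first bullet, I would use these two facts together with a minimal decomposition. Writing $\rk(\Phi) = r$, there exist rank-one forms $\phi_0,\ldots,\phi_{r-1}$ and scalars with $\Phi = \sum_{t} \lambda_t\phi_t$. Applying the linear map $\comp\sigma$ gives $\Phi\comp\sigma = \sum_t \lambda_t(\phi_t\comp\sigma)$, a combination of $r$ rank-one forms, whence $\rk(\Phi\comp\sigma) \leq \rk(\Phi)$. Applying the identical argument to $\Phi\comp\sigma$ with the $\RP$-automorphism $\sigma^{-1}$, and using $(\Phi\comp\sigma)\comp\sigma^{-1} = \Phi$, yields the reverse inequality, so the two ranks coincide.

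For the second bullet, the cleanest route is to show that $\comp\sigma$ induces a bijection $\CCover{r}{\VecSp{T}} \to \CCover{r}{\VecSp{T}\comp\sigma}$ for every $r$, where $\VecSp{T}\comp\sigma$ denotes the image subspace $\{\Phi\comp\sigma : \Phi\in\VecSp{T}\}$. Given $\VecSp{V}\in\CCover{r}{\VecSp{T}}$, spanned by a free family of $r$ rank-one forms with $\VecSp{T}\subset\VecSp{V}$, the image $\VecSp{V}\comp\sigma$ is spanned by the images of these forms, which are again rank one and still free because $\comp\sigma$ is a bijective linear map; moreover $\comp\sigma$ preserves inclusion, so $\VecSp{T}\comp\sigma \subset \VecSp{V}\comp\sigma$. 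Thus $\VecSp{V}\comp\sigma \in \CCover{r}{\VecSp{T}\comp\sigma}$, and $\comp\sigma^{-1}$ supplies the inverse assignment. Consequently $\CCover{r}{\VecSp{T}} \neq \emptyset$ if and only if $\CCover{r}{\VecSp{T}\comp\sigma} \neq \emptyset$, and taking the smallest such $r$ gives $\rk(\VecSp{T}\comp\sigma) = \rk(\VecSp{T})$ by Definition~\ref{def:ranksp}.

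There is no genuine obstacle here; the entire argument rests on invertibility of $\mu$ and $\nu$. The only points demanding care are the two places where invertibility is silently used: that a rank-one form is sent to a rank-one form rather than collapsed (the linear forms $\alpha\comp\mu$, $\beta\comp\nu$ must stay nonzero), and that a free family is sent to a free family (so that the dimension $r$ is preserved). Both are immediate from the fact that $\comp\sigma$ is a linear bijection, so I would state them explicitly but not belabor them.
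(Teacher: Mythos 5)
Your proposal is correct and takes essentially the same approach as the paper: establish that $\Phi \mapsto \Phi\comp\sigma$ is linear and carries rank-one forms to rank-one forms, then transport decompositions, invoking $\sigma^{-1}$ for the reverse inequality and for the subspace statement. The paper's own proof is just a terser version of this (it leaves the $\sigma^{-1}$ argument and the bijection on the sets $\CCover{r}{\VecSp{T}}$ implicit), so your write-up simply makes explicit the steps the paper compresses into ``by linearity and by definition of the rank.''
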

\begin{proof}
	First, let $\phi \in \Bl{m}{n}$ of rank one.
	There exist $\alpha \in \mathcal{L}(K^m;K)$ and $\beta \in \mathcal{L}(K^n;K)$ such that
	$\phi : (\mathbf{a},\mathbf{b}) \mapsto \alpha(\mathbf{a})\cdot \beta(\mathbf{b})$.
	There exist $\mu \in \GL(K^m)$ and $\nu \in \GL(K^n)$ such that
	$\phi \comp \sigma : (\mathbf{a},\mathbf{b}) \mapsto \alpha(\mu(\mathbf{a}))\cdot \beta(\nu(\mathbf{b}))$.
	Since $\alpha \comp \mu \in \mathcal{L}(K^m;K)$ and $\beta \comp \nu \in \mathcal{L}(K^n;K)$,
	$\phi \comp \sigma$ is a rank-one bilinear form.

	Since the $\RP$-automorphisms in Definition~\ref{def:automat} preserve the rank of rank-one bilinear forms,
	by linearity and by definition of the rank of a bilinear form, it preserves the rank of any bilinear form.
	For any subspace $\VecSp{T} \subset \Bl{m}{n}$ and any $\sigma \in \GLGL{m}{n}$, we have
	$$\rk(\VecSp{T}\comp \sigma) = \rk(\VecSp{T}).$$
\end{proof}
\begin{remark}
	Note that, when $m=n$,
	Proposition~\ref{def:automat} is not the most general notion of $\RP$-automorphisms that we may have:
	for simplicity, we do not take into account the possible transposition $\tau$ acting on any $\Phi \in \Bl{m}{m}$, via
	$\Phi \circ \tau : (\mathbf{a},\mathbf{b}) \mapsto \Phi(\mathbf{b},\mathbf{a})$.
\end{remark}
\begin{notation}[Group action on matrices]
	The group
	$\GLGL{m}{n}$ is isomorphic to the group $\GL_m(K)\times \GL_n(K)$, acting on matrices $M$ via
	$M \cdot (X,Y) = \Trans{X} \cdot M \cdot Y.$
	Thus, we often consider elements of $\GLGL{m}{n}$ as elements of $\GL_m(K) \times \GL_n(K)$ and
	vice versa.
\end{notation}

\begin{example}[Action of $\GLGL{2}{2}$]
	Let us consider the subspace $V$ of $\Bl{2}{2}$ generated by the bilinear forms represented by the
matrices $M_1$ and
$M_2$ defined as
\begin{equation*}
\begin{split}
M_1 = \begin{pmatrix}
1 & 0 \\
0 & 0
\end{pmatrix},
M_2 = \begin{pmatrix}
0 & 1 \\
0 & 0 
\end{pmatrix}.
\end{split}
\end{equation*}
We take $\sigma = (X,Y)$ such that $X = Y = \begin{pmatrix}
0 & 1 \\
1 & 0
\end{pmatrix}$.

The subspace $V' = V \circ \sigma$ is generated by $M'_1$ and $M'_2$, defined as
\begin{equation*}
\begin{split}
	M'_1= M_1 \cdot \sigma = \Trans{X}\cdot M_1 \cdot Y = \begin{pmatrix}
0 & 0 \\
0 & 1 
\end{pmatrix},\ 
	M'_2 = M_2 \cdot \sigma = \Trans{X}\cdot M_2 \cdot Y = \begin{pmatrix}
0 & 0 \\
1 & 0
\end{pmatrix}.
\end{split}
\end{equation*}
\end{example}

Since we will often refer to subgroups of $\GLGL{m}{n}$ stabilizing elements of $\Bl{m}{n}$ in the following, we
define the notion of setwise stabilizer.
\begin{definition}[Setwise stabilizer]
	For a subset $\mathcal{T}\subset \Bl{m}{n}$, we denote by $\stb(\mathcal{T})$ the subgroup of $\GLGL{m}{n}$
	stabilizing $\mathcal{T}$:
	$$\stb(\mathcal{T}) = \left\{\sigma \in \GLGL{m}{n}\ |\ \mathcal{T}\circ \sigma = \mathcal{T}\right\}.$$
	We use the same notation for a subspace $\VecSp{T}\subset \Bl{m}{n}$.
\end{definition}
In the rest of this work, we often refer to the ``stabilizer'' of a given set
$\mathcal{T}$.
Each time, we exclusively mean
the setwise stabilizer of $\mathcal{T}$, which is, in general, different from
the pointwise stabilizer of $\mathcal{T}$. Indeed, the pointwise stabilizer of $\mathcal{T}$ is defined as
$\{\sigma \in \GLGL{m}{n}\ |\ \forall \Phi \in \mathcal{T},\ \Phi \circ \sigma
 = \Phi\}$.

The algorithmic improvement originally presented in~\cite{KaraNWithOrbits} comes from the fact that,
for any target space $\VecSp{T} \subset \Bl{m}{n}$ of dimension $\ell$ and any integer $r \geq \ell$,
we have 
$$\forall \sigma \in \stb(\VecSp{T}),\ \CCover{r}{\VecSp{T}} \circ \sigma = \CCover{r}{\VecSp{T}},$$
because $\sigma$ preserves the rank.
Thus, we can restrict our interest to the computation of the quotient
$\quotient{\CCover{r}{\VecSp{T}}}{\stb(\VecSp{T})}$
instead of $\CCover{r}{\VecSp{T}}$.

\subsection{\myalgorithm{BDEZ} with stabilizer}
In order to find all the elements of $\CCover{r}{\VecSp{T}}$, it is sufficient to
obtain one representative per equivalence class of $\ALQ{\VecSp{T}}{r}$, from which one can recover the whole orbits
through the group action of $\stb(\VecSp{T})$.
Moreover, we can compute $\quotient{\CCover{r}{\VecSp{T}}}{\stb(\VecSp{T})}$ faster than $\CCover{r}{\VecSp{T}}$.
Thus, we adapt our general strategy to this idea.

\paragraph{\textbf{General strategy for computing the bilinear rank using $\RP$-automorphisms}}
The new algorithmic strategy we are
considering is stated as follows,
for a target subspace $\VecSp{T} \subset \Bl{m}{n}$ of dimension $\ell$ and the associated subgroup $\stb(\VecSp{T})$ of $\RP$-automorphisms
stabilizing $\VecSp{T}$:
\begin{itemize}
\item start with an initial guess $r = \ell$;
\item compute $\quotient{\CCover{r}{\VecSp{T}}}{\stb(\VecSp{T})}$ (the set $\CCover{r}{\VecSp{T}}$ up to the action of $\stb(\VecSp{T})$);
\item if $\quotient{\CCover{r}{\VecSp{T}}}{\stb(\VecSp{T})} = \emptyset$, increment $r$ and return to the previous step;
\item enumerate $\CCover{r}{\VecSp{T}}$ using the action of $\stb(\VecSp{T})$;
\item at the end, $r$ is the rank and $\CCover{r}{\VecSp{T}}$ the set of optimal decompositions.
\end{itemize}
\par

Algorithm~\myalgorithm{BDEZStab} is a recursive approach for the computation of one representative per equivalence class.
The input of the first call to~\myalgorithm{BDEZStab} is: a target subspace $\VecSp{T}$ of dimension $\ell$,
the group $\stb(\VecSp{T})$ and
an integer $r\geq \ell$.
\begin{algorithm}
	\caption{\myalgorithm{BDEZStab}}
	\label{alg:recstabapproc}
	\begin{algorithmic}[1]
		\Require{$\VecSp{T}\subset \Bl{m}{n}$, the stabilizer $\stb(\VecSp{T})$, an integer $r$}
		\Ensure{A set of representatives of $\quotient{\CCover{r}{\VecSp{T}}}{\stb(\VecSp{T})}$}
		\Function{$\myalgorithm{ExpandSubspace}$}{$V,\mathcal{H},U,d,r$} \Comment $\VecSp{V} \subset \Bl{m}{n}, \mathcal{H} \subset \mathcal{G},
		U \subset \stb(\VecSp{T}), r \in\mathbb{N}$
	\If{$d = r$ and $\dim \VecSp{V} = r$ and \myalgorithm{HasRankOneBasis}($\VecSp{V}$)}
		\State \Return $\{\VecSp{V}\}$
	\Else
		\State $\mathcal{S} \leftarrow \emptyset$
		\State $\mathcal{O} \leftarrow \quotient{\mathcal{H}}{U}$ 
		\Comment $\phi$ and $\phi'$ lie in the same orbit if $V\oplus \spn(\{\phi\}) = (V \oplus \spn(\{\phi'\})) \circ \sigma$
		\label{alg:lineorbitscomp}
		\For{$i \in \closedinterval{0}{\#\mathcal{O}-1}$}  \Comment $\mathcal{O} = \left\{O_i\  | \ i \in \closedinterval{0}{\#\mathcal{O}-1}\right\}$ 
\label{recstabline7}
		\State $\phi \leftarrow \myalgorithm{Representative}(O_i)$ \Comment Choose a representative of the orbit $O_i$
		\State $U' \leftarrow \stb{(\VecSp{V}\oplus \spn(\{\phi\}))} \cap U$\label{line:computestab}
		\State $\mathcal{H}' \leftarrow \cup_{j \geq i} O_j$
		\State $\mathcal{S} \leftarrow \mathcal{S}\ \cup\ $\Call{\myalgorithm{ExpandSubspace}}{$\VecSp{V},\mathcal{H}',
		U', d+1, r$} 
		\EndFor
		\State\Return $\mathcal{S}$
		\EndIf
		\EndFunction
		\State \Return \Call{\myalgorithm{ExpandSubspace}}{$\VecSp{T},\mathcal{G},\stb(\VecSp{T}),\ell, r$}
\end{algorithmic}
\end{algorithm}

Figure~\ref{Graph:karan2} describes this recursive approach using a tree
and illustrates how some branches are pruned, relying on Proposition~\ref{prop:stabaction}.
We assume that the initial set of rank-one bilinear forms is $\{\phi_0,\phi_1,\phi_2,\phi_3\}$ and that we have $\sigma \in \stb(\VecSp{T})$ such that
$\sigma(\phi_0) = \phi_1$, $\sigma(\phi_1) = \phi_0$, $\sigma(\phi_2) = \phi_3$ and $\sigma(\phi_3) = \phi_2$.

\begin{proposition}
	\label{prop:stabaction}
	Let $\VecSp{T}$ and $\VecSp{V}$ be subspaces of $\Bl{m}{n}$ such that $\VecSp{V} \in \CCover{r}{\VecSp{T}}$.
	Then, given the orbit $\phi \circ \stb(\VecSp{T})$ of a bilinear form $\phi$ of rank one,
	if $\VecSp{V}$ satisfies
	$\VecSp{V}\cap \left(\phi \circ \stb(\VecSp{T})\right)\neq \emptyset$,
	then there exists an element $\VecSp{V}'$ in the equivalence class of $\VecSp{V}$
	for the action of $\stb(\VecSp{T})$ and such that $\phi \in \VecSp{V}'$.
\end{proposition}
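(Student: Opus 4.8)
The plan is to produce $\VecSp{V}'$ explicitly as a translate of $\VecSp{V}$ by a single, well-chosen element of $\stb(\VecSp{T})$, so that no abstract orbit-counting is needed. First I would unpack the hypothesis: by definition of the orbit $\phi\circ\stb(\VecSp{T})=\{\phi\circ\sigma\mid\sigma\in\stb(\VecSp{T})\}$, the assumption $\VecSp{V}\cap(\phi\circ\stb(\VecSp{T}))\neq\emptyset$ means precisely that there exists $\sigma\in\stb(\VecSp{T})$ with $\phi\circ\sigma\in\VecSp{V}$. The whole argument then rests on choosing the group element that ``undoes'' this $\sigma$.

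The candidate I would take is $\VecSp{V}'=\VecSp{V}\circ\sigma^{-1}$, and I would verify two points. First, $\VecSp{V}'$ lies in the equivalence class of $\VecSp{V}$: since $\stb(\VecSp{T})$ is a subgroup of $\GLGL{m}{n}$ it is closed under inverses, so $\sigma^{-1}\in\stb(\VecSp{T})$, and $\VecSp{V}'=\VecSp{V}\circ\sigma^{-1}$ is by definition in the same $\stb(\VecSp{T})$-orbit as $\VecSp{V}$. Second, $\phi\in\VecSp{V}'$: using the fact, established earlier, that composition is a right group action, I compute $(\phi\circ\sigma)\circ\sigma^{-1}=\phi\circ(\sigma\circ\sigma^{-1})=\phi$, so applying $\sigma^{-1}$ to the element $\phi\circ\sigma\in\VecSp{V}$ yields $\phi\in\VecSp{V}\circ\sigma^{-1}=\VecSp{V}'$.

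To keep the argument airtight I would also recall that composing with $\sigma^{-1}$ acts linearly on $\Bl{m}{n}$ and sends rank-one forms to rank-one forms (Proposition~\ref{prop:preserverank}); hence $\VecSp{V}'$ is again a subspace spanned by a free family of $r$ rank-one forms, and since $\sigma^{-1}$ fixes $\VecSp{T}$ setwise we still have $\VecSp{T}\subset\VecSp{V}'$. This shows $\VecSp{V}'\in\CCover{r}{\VecSp{T}}$, which is not strictly demanded by the statement but explains why the conclusion is exactly what is needed to prune redundant branches.

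I do not expect a genuine obstacle: the content is purely the bookkeeping of a right group action. The only point demanding care is the direction of the inverse --- translating $\VecSp{V}$ by $\sigma$ instead of $\sigma^{-1}$ would move $\phi\circ\sigma$ away from $\phi$ rather than back to it --- together with reading the membership $\phi\circ\sigma\in\VecSp{V}$ correctly off the nonempty-intersection hypothesis.
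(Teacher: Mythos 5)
Your proof is correct and takes exactly the same route as the paper: the paper's argument likewise extracts $\sigma \in \stb(\VecSp{T})$ with $\phi \circ \sigma \in \VecSp{V}$ from the nonempty intersection and sets $\VecSp{V}' = \VecSp{V} \circ \sigma^{-1}$. You merely spell out the verifications (closure of $\stb(\VecSp{T})$ under inverses, the right-action identity $(\phi\circ\sigma)\circ\sigma^{-1}=\phi$, and preservation of $\CCover{r}{\VecSp{T}}$) that the paper leaves implicit.
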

\begin{proof}
	There exists $\sigma \in \stb(\VecSp{T})$ such that $\phi\circ \sigma \in \VecSp{V}$.
	We can then take $\VecSp{V}' = \VecSp{V} \circ (\sigma^{-1})$, which meets all the conditions.
\end{proof}

The particularity of \myalgorithm{BDEZStab} is that, instead of enumerating all the elements of $\mathcal{H}$ as
in \myalgorithm{BDEZ}, we restrict the enumeration to one element per equivalence class for the action of
$U \subset \stb(V)$.
We use in particular the fact that the additional computations such as stabilizers on Line~\ref{line:computestab}
are negligible, compared to the speed-up obtained by pruning branches in~\myalgorithm{BDEZ}.
Heuristically, \myalgorithm{BDEZStab} is faster than \myalgorithm{BDEZ} by a factor $\#\stb{(\VecSp{T})}$.
This method constitutes the state of the art for the current work:
our contribution is compared to the performance of this algorithm.

\begin{figure}[H]
\begin{center}
    \begin{tikzpicture}[scale=0.40,font=\small]

	    \fill[black!20!lightgray!60] (-0.5,0)
        rectangle (0.5,1);
        \node at (0,0.5) {$\VecSp{T}$};
        \foreach\i in {0,2} {
    	    \fill[black!20!lightgray!60] (4*\i-6.0-1.50,-2.75)
	        rectangle (4*\i-6.0+1.50,-4.25);
		\node (N\i) at (4*\i-6.0,-3.5) {$\VecSp{T}\oplus W_{\i}$};
	        \draw[thick] (0,0) -- (4*\i-6.0,-2.75);
        }
        \foreach\i in {1,3} {
    	    \fill[black!20!lightgray!20] (4*\i-6.0-1.50,-2.75)
	        rectangle (4*\i-6.0+1.50,-4.25);
		\node (N\i) at (4*\i-6.0,-3.5) {$\VecSp{T}\oplus W_{\i}$};
	        \draw[gray,dashed] (0,0) -- (4*\i-6.0,-2.75);
        }
        \foreach\i in {1,2,3} {
       	    \fill[black!20!lightgray!60] (4*\i-18.0-1.75,-5.75)
	        rectangle (4*\i-18.0+1.75,-7.25);
	        \node (N0\i) at (4*\i-18.0,-6.5) {$\VecSp{T}\oplus W_{0,\i}$};
            \draw[thick] (-6.0,-4.25) -- (4*\i-17.0,-5.75);
        }
        \foreach\i in {2,3} {
       	    \fill[black!20!lightgray!20] (7.0+4*\i-17.0-1.75,-5.75)
	        rectangle (7.0+4*\i-17.0+1.75,-7.25);
	        \node (N1\i) at (7.0+4*\i-17.0,-6.5) {$\VecSp{T}\oplus W_{1,\i}$};
            \draw[gray,dashed] (4-6.0,-4.25) -- (N1\i);
        }
        \fill[black!20!lightgray!60] (6.0-1.75,-5.75)
        rectangle (6.0+1.75,-7.25);
        \node (N23) at (6.0,-6.5) {$\VecSp{T}\oplus W_{2,3}$};
        \draw[thick] (2.0,-4.25) -- (N23);
        \foreach\i in {2} {
        	\fill[black!20!lightgray!60] (4*\i*1.1-19.0-8.0-2.1,-8.75)
	        rectangle (4*\i*1.1-19.0-8.0+2.1,-10.25);
	        \node (N01\i) at (4*\i*1.1-19.0-8.0,-9.5) {$\VecSp{T}\oplus W_{0,1,\i}$};
            \draw[thick] (4-18,-7.25) -- (N01\i);
        }
        \foreach\i in {3} {
        	\fill[black!20!lightgray!20] (4*\i*1.1-19.0-8.0-2.1,-8.75)
	        rectangle (4*\i*1.1-19.0-8.0+2.1,-10.25);
	        \node (N01\i) at (4*\i*1.1-19.0-8.0,-9.5) {$\VecSp{T}\oplus W_{0,1,\i}$};
            \draw[gray,dashed] (4-18,-7.25) -- (N01\i);
        }
        \fill[black!20!lightgray!60] (-9.0-2.1,-8.75)
        rectangle (-9.0+2.1,-10.25);
        \node (N023) at (-9.0,-9.5) {$\VecSp{T}\oplus W_{0,2,3}$};
        \draw[thick] (8.0-17.0,-7.25) -- (N023);
        \fill[black!20!lightgray!20] (-2.0-2.1,-8.75)
        rectangle (-2.0+2.1,-10.25);
        \node (N123) at (-2.0,-9.5) {$\VecSp{T}\oplus W_{1,2,3}$};
        \draw[gray,dashed] (-2.0,-7.25) -- (N123);
	\draw[thick,blue,->] (-5.75,-4.25) to[bend right]node[pos=0.5,black,scale=1.5,yshift=-1.00mm]{\scriptsize{$\sigma$}} (-2.25,-4.25);
	\draw[thick,blue,->] (3.0,-4.25) to[bend right]node[pos=0.5,black,scale=1.5,yshift=-1.00mm]{\scriptsize{$\sigma$}} (6.50,-4.25);
	\draw[thick,blue,->] (9-26.0,-10.25) to[bend right]node[pos=0.5,black,scale=1.5,yshift=-2.00mm]{\scriptsize{$\sigma$}} (N013);
    \end{tikzpicture}
\end{center}
	\caption{Pruning branches in an exhaustive search using $\RP$-automorphisms.}
    \label{Graph:karan2}
\end{figure}
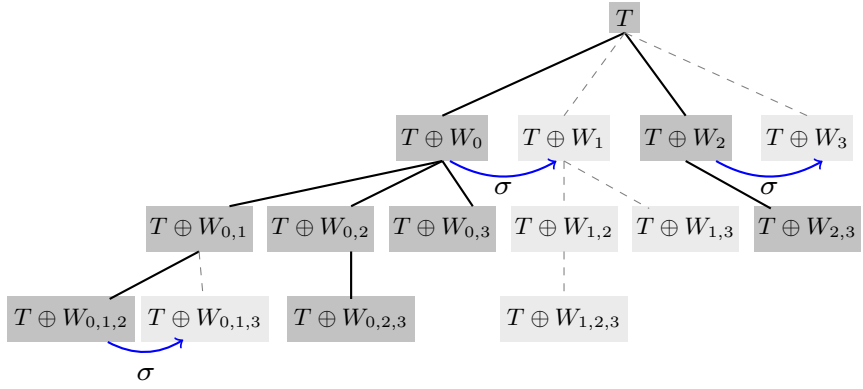
\section{Algebraic structure of some bilinear maps}
\label{sec:algstruct}
In this section, we describe the structure of some vector spaces corresponding
to bilinear maps that are considered in our applications.
This section can be skipped on a first read. It is needed to prove properties of
specific bilinear maps that are stated in Section~\ref{sec:newalgo}.
In particular, we need to know the structure of the stabilizer of a vector space in order
to be able to improve on the exhaustive search.
\subsection{Short product}
\label{subsec:structstabshort}
For the purpose of this section, we restrict our discussion to the specific case defined
as follows. Let $\ell$ be a positive integer, let $\boldsymbol{\Phi}$
be the bilinear map $\boldsymbol{\Phi}\in \mathcal{L}(K^\ell,K^\ell;K^\ell)$
defined by the short product 
$$\boldsymbol{\Phi} : \begin{pmatrix}a_0\\\vdots\\a_{\ell-1}\end{pmatrix},\begin{pmatrix}b_0\\\vdots\\b_{\ell-1}\end{pmatrix}
\mapsto \begin{pmatrix}c_0\\\vdots\\c_{\ell-1}\end{pmatrix}$$
such that $\sum_{0\leq i<\ell} c_iX^{\ell-1-i} = (\sum_{0\leq i<\ell} a_iX^i)(\sum_{0\leq i<\ell} b_iX^{\ell-1-i}) \mod X^\ell$.
Let $\VecSp{T}$ be the subspace of $\Bl{\ell}{\ell}$ spanned by the $\ell$ bilinear forms that
are the coordinates of $\boldsymbol{\Phi}$, denoted by
$\Phi_0,\ldots,\Phi_{\ell-1}$.

The matrix representing the element $\sum_{0\leq i < \ell} m_i \Phi_i \in \VecSp{T}$,
where $m_i \in K$, is
\begin{center}
$M(m_0,\ldots,m_{\ell-1}) =$
\begin{tikzpicture}[baseline={([yshift=-1ex]current bounding box.center)}]
\matrix (m) [inner sep=1.5pt, column sep=6pt, row sep=6pt,matrix of math nodes,nodes in empty cells,right delimiter={]},left delimiter={[} ]{
m_{0} &  m_{1} &    & m_{\ell-1}       \\
0      &             &    &           \\
       &             &    & m_{1}\\
0      &             & 0  & m_{0}    \\
} ;
\draw[loosely dotted] (m-1-4) -- (m-1-2);
\draw[loosely dotted] (m-1-4) -- (m-3-4);
\draw[loosely dotted] (m-3-4) -- (m-1-2);
\draw[loosely dotted] (m-4-4) -- (m-1-1);
\draw[loosely dotted] (m-4-3) -- (m-2-1);
\draw[loosely dotted] (m-4-3) -- (m-4-1);
\draw[loosely dotted] (m-2-1) -- (m-4-1);
\end{tikzpicture},
\end{center}
in the canonical basis.
This matrix is an upper triangular Toeplitz matrix.

Let $N$ be the matrix $M(0,1,\ldots,0)$.
The matrix $N$ is a nilpotent matrix such that
$$\forall j \in \closedinterval{0}{\ell-1},\ M(0,\ldots,0,1,\underbrace{0,\ldots,0}_{j\ \text{zeros}}) = N^{\ell-1-j},$$
and $N^\ell = 0$.
The elements of the algebra $K[N]$ are the upper triangular Toeplitz matrices and $K[N] \cong K[X]/(X^\ell)$.

We provide in Theorem~\ref{thm:stabshortprod} a useful property describing the action of $\stb(\ShProdMap{\ell})$
on $\ShProdMap{\ell}$.

\begin{restatable}{theorem}{stabshortprod}
\label{thm:stabshortprod}
	Let any integer $\ell\geq 2$:
	\begin{enumerate}
		\item the orbit of the identity matrix $I=N^0$ for the action of $\stb(\VecSp{T})$ is the set
			of invertible matrices of $\VecSp{T}$;
		\item the orbit of $N$ for the action of $\stb(\VecSp{T})\cap \stb(I)$ is the set
			of nilpotent matrices of $\VecSp{T}$;
		\item for any pair $(\Psi,\Psi')$ of elements of $\ShProdMap{\ell}$ such that $\rk(\Psi) = \ell$
	and $\rk(\Psi') = \ell-1$, there exists $\sigma \in \stb(\ShProdMap{\ell})$ such that
	$$( \Psi \circ \sigma, \Psi' \circ \sigma) = (I,N);$$
\item we have $\stb(I)\cap \stb(N) \subset \stb(\VecSp{T})$ and the cardinality of $\stb(\VecSp{T})$ is $(\#K)^{3\ell-4} (\#K-1)^3$.
	\end{enumerate}
\end{restatable}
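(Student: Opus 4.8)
The plan is to work throughout with the identification $\VecSp{T} = K[N] \cong K[X]/(X^\ell)$ established above, so that invertibility, nilpotency and rank all become statements about this local commutative algebra: an element $m_0 I + m_1 N + \cdots + m_{\ell-1}N^{\ell-1}$ is invertible iff $m_0 \neq 0$, and has rank $\ell-1$ iff $m_0 = 0$ and $m_1 \neq 0$, i.e.\ iff it equals $N u$ with $u \in \VecSp{T}$ invertible. For item~(1) I would argue by double inclusion. If $\sigma = (X, Y) \in \stb(\VecSp{T})$ then $I \comp \sigma = \Trans{X} Y \in \VecSp{T}$ and, as $\RP$-automorphisms preserve rank (Proposition~\ref{prop:preserverank}), it is invertible; so the orbit of $I$ lies among the invertibles. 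Conversely, for invertible $U \in \VecSp{T}$ the pair $(I, U)$ lies in $\GL_\ell(K)\times\GL_\ell(K)$ and the map $M \mapsto \Trans{I} M U = M U$ stays in the algebra $\VecSp{T}=K[N]$, so $(I, U) \in \stb(\VecSp{T})$ and $I \comp (I, U) = U$; thus every invertible is attained.

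For item~(2) the point is that $\stb(I)$ forces $\Trans{X}Y = I$, i.e.\ $Y = (\Trans{X})^{-1}$, so its action is conjugation $M \mapsto P M P^{-1}$ with $P = \Trans{X}$; conjugation preserves rank and nilpotency, so the orbit of $N$ consists only of rank-$(\ell-1)$ nilpotents of $\VecSp{T}$, and the substance is that all of them occur. Given a target $M = N u$, both $N$ and $M$ are nilpotent of rank $\ell-1$, hence each is a single Jordan block of size $\ell$ and thus conjugate: $P N P^{-1} = M$ for some $P \in \GL_\ell(K)$. This $P$ automatically normalizes $\VecSp{T}$, since $P N^k P^{-1} = M^k \in \VecSp{T}$ for all $k$ and $\{I, N, \dots, N^{\ell-1}\}$ spans $\VecSp{T}$; hence $(X, Y) = (\Trans{P}, P^{-1}) \in \stb(\VecSp{T}) \cap \stb(I)$ sends $N$ to $M$.

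Item~(3) then follows formally: by item~(1) take $\sigma_1 = (I, \Psi^{-1}) \in \stb(\VecSp{T})$, which sends the invertible $\Psi$ to $I$ and, being rank-preserving, sends $\Psi'$ to the rank-$(\ell-1)$ element $\Psi' \Psi^{-1} \in \VecSp{T}$; by item~(2) there is $\sigma_2 \in \stb(\VecSp{T}) \cap \stb(I)$ with $(\Psi' \comp \sigma_1) \comp \sigma_2 = N$, and since $\sigma_2$ fixes $I$ the composite $\sigma = \sigma_1 \comp \sigma_2$ satisfies $(\Psi \comp \sigma, \Psi' \comp \sigma) = (I, N)$.

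Finally, for item~(4) I would first settle the inclusion and then count by iterating orbit--stabilizer. An element of $\stb(I) \cap \stb(N)$ is a conjugation by $P$ commuting with $N$, i.e.\ $P$ in the centralizer of the regular nilpotent $N$, which equals $K[N] = \VecSp{T}$; commutativity then makes $P$ fix $\VecSp{T}$ pointwise, which both proves $\stb(I) \cap \stb(N) \subset \stb(\VecSp{T})$ and gives $\#(\stb(I)\cap\stb(N)) = (\#K-1)(\#K)^{\ell-1}$, the number of invertible elements of $K[N]$. Orbit--stabilizer applied to item~(2) then yields $\#(\stb(\VecSp{T}) \cap \stb(I)) = (\#K-1)(\#K)^{\ell-2} \cdot (\#K-1)(\#K)^{\ell-1}$, where the first factor counts the rank-$(\ell-1)$ nilpotents and one uses $(\stb(\VecSp{T})\cap\stb(I))\cap\stb(N) = \stb(I)\cap\stb(N)$ via the inclusion just shown; applied to item~(1) it yields $\#\stb(\VecSp{T}) = (\#K-1)(\#K)^{\ell-1} \cdot \#(\stb(\VecSp{T})\cap\stb(I))$, the first factor counting the invertibles of $\VecSp{T}$. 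Multiplying the three factors gives $(\#K-1)^3(\#K)^{(\ell-1)+(\ell-2)+(\ell-1)} = (\#K)^{3\ell-4}(\#K-1)^3$. I expect the main obstacle to be item~(2): showing that every maximal-rank nilpotent of $\VecSp{T}$ is reached from $N$ by a conjugation that also stabilizes $\VecSp{T}$, which rests on recognizing rank-$(\ell-1)$ nilpotents as regular (single Jordan block) and the centralizer of $N$ as $K[N]$; the rest is the bookkeeping that makes the two orbit--stabilizer counts compose.
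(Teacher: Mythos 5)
Your proof is correct and follows essentially the same route as the paper's: identify $\VecSp{T}$ with $K[N]\cong K[X]/(X^\ell)$, realize the orbit of $I$ by right multiplication by units of $K[N]$, realize the orbit of $N$ by conjugations that also stabilize $\VecSp{T}$, compose these for item~(3), and use that the centralizer of $N$ is $K[N]$ together with orbit--stabilizer counting for item~(4). The only differences are presentational: the paper constructs the conjugating matrix explicitly from a cyclic vector and proves the centralizer fact by a shift argument, then counts $\#\stb(\VecSp{T})$ in one step via the transitive action on pairs $(\Psi,\Psi')$, whereas you invoke the Jordan form of a regular nilpotent and nest two orbit--stabilizer counts---the substance is identical.
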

\begin{proof}
	See~\ref{app:stabshort}.
\end{proof}

\subsection{Matrix product}
\label{sec:strucmatprod}
We denote by $\boldsymbol{\Phi}_{p,q,r}$ the bilinear map corresponding to the $p\times q$ by $q\times r$ matrix product:
$$\begin{matrix}
	\boldsymbol{\Phi}_{p,q,r}:& \mathcal{M}_{p,q}(K)\times \mathcal{M}_{q,r}(K) &\longrightarrow &\mathcal{M}_{p,r}(K)\\
	&(A,B) &\longmapsto& A\cdot B
\end{matrix}.$$
We denote by $\Phi_{i,j}$ the bilinear forms such that
$\Phi_{i,j}(A,B)$ is the coefficient $(i,j)$ of $\boldsymbol{\Phi}_{p,q,r}(A,B)$ for $i\in\closedinterval{0}{p-1},j \in \closedinterval{0}{r-1}$.
The elements $\Phi_{i,j}$ satisfy
$\Phi_{i,j}(A,B) = \sum_{0\leq h < q}a_{i,h}b_{h,j}$.

The bilinear map $\boldsymbol{\Phi}_{p,q,r}$ is represented by a subspace of $\Bl{pq}{qr}$
denoted by
$$\VecSp{T}_{p,q,r} = \spn(\{\Phi_{i,j}\}_{\substack{i,j}}).$$

In order to represent the elements of $\VecSp{T}_{p,q,r}$ in terms of matrices
of $\mathcal{M}_{pq,qr}$, we need an order on the $a_{i,h}$'s and $b_{h,j}$'s.
\begin{itemize}
	\item For the $a_{i,h}$'s, we fix the following order:
	$a_{i,h} \leq a_{i',h'}$ if $i\leq i'$ or $i=i'$ and $h\leq h'$, which is the row-major order.
	\item For the $b_{h,j}$'s, we fix the following order:
	$b_{h,j} \leq b_{h',j'}$ if $j\leq j'$ or $j=j'$ and $h\leq h'$, which is the column-major order.
\end{itemize}
Then, in the bases of $\mathcal{M}_{p,q}$ and $\mathcal{M}_{q,r}$ given by the $a_{i,h}$'s and $b_{h,j}$'s
ordered as above, the elements of $\VecSp{T}_{p,q,r}$ can be represented as matrices of $\mathcal{M}_{pq,qr}$
divided in blocks of size $q\times q$ equal to $I_q$ the identity matrix of $\mathcal{M}_{q,q}$.
Consequently, this space is isomorphic to $\mathcal{M}_{p,r} \otimes I_q$ and all the elements of $\VecSp{T}_{p,q,r}$
have a rank which is multiple of $q$.

\begin{example}[Matrix representation of elements of $\VecSp{T}_{2,2,2}$]
The elements of $\VecSp{T}_{2,2,2}$ are represented by matrices of $\mathcal{M}_{4,4}$
spanned by
$$
\begin{pmatrix}
1 & 0 & 0 & 0\\
0 & 1 & 0 & 0\\
0 & 0 & 0 & 0\\
0 & 0 & 0 & 0\\
\end{pmatrix},
\begin{pmatrix}
0 & 0 & 1 & 0\\
0 & 0 & 0 & 1\\
0 & 0 & 0 & 0\\
0 & 0 & 0 & 0\\
\end{pmatrix},
\begin{pmatrix}
0 & 0 & 0 & 0\\
0 & 0 & 0 & 0\\
1 & 0 & 0 & 0\\
0 & 1 & 0 & 0\\
\end{pmatrix},
\begin{pmatrix}
0 & 0 & 0 & 0\\
0 & 0 & 0 & 0\\
0 & 0 & 1 & 0\\
0 & 0 & 0 & 1\\
\end{pmatrix},
$$
corresponding to the coefficients $a_{0,0}b_{0,0}+a_{0,1}b_{1,0}$, $a_{0,0}b_{0,1}+a_{0,1}b_{1,1}$,
$a_{1,0}b_{0,0}+a_{1,1}b_{1,0}$ and $a_{1,0}b_{0,1}+a_{1,1}b_{1,1}$, respectively.
The previous matrices can also be expressed as
$$
\begin{pmatrix}
1 & 0\\
0 & 0
\end{pmatrix} \otimes I_2,
\begin{pmatrix}
0 & 1\\
0 & 0
\end{pmatrix} \otimes I_2,
\begin{pmatrix}
0 & 0\\
1 & 0
\end{pmatrix} \otimes I_2,
\begin{pmatrix}
0 & 0\\
0 & 1
\end{pmatrix} \otimes I_2,
$$
respectively.
\end{example}

Let $(e_i)$, $(f_h)$ and $(g_j)$ be the canonical bases of $K^p$, $K^q$ and $K^r$.
The subspace $\VecSp{T}_{p,q,r}$ can be easily characterized with the tensor notation:
it is generated by the vectors, for $i\in\closedinterval{0}{p-1},j \in \closedinterval{0}{r-1}$,
$$\Phi_{i,j} = \sum_{0\leq h< q}e_i\otimes f_h \otimes f_h \otimes g_j.$$

\begin{restatable}{theorem}{stabmat}
	\label{thm:stabmat}
	For the group action $M \cdot (X,Y) \mapsto \Trans{X} M Y$, the subgroup stabilizing
	the vector space $\VecSp{T}_{p,q,r}$ can be described as the group given by the pairs
	$({P\otimes \Trans{R}},Q\otimes (R^{-1}))$ for $P \in \GL_p$, $R\in \GL_q$, and $Q\in \GL_r$.
\end{restatable}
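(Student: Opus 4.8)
The plan is to prove the two inclusions separately, working throughout with the concrete description $\VecSp{T}_{p,q,r} = \{N\otimes I_q : N\in\mathcal{M}_{p,r}(K)\}$ established above, and with the matrix form of the action $M\cdot(X,Y)=\Trans{X}MY$.

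For the easy inclusion, I would simply substitute a pair of the claimed form and use the mixed-product property $(A\otimes B)(C\otimes D)=AC\otimes BD$. For $M=N\otimes I_q$ this gives
$$\Trans{(P\otimes\Trans{R})}\,(N\otimes I_q)\,(Q\otimes R^{-1}) = (\Trans{P}\otimes R)(N\otimes I_q)(Q\otimes R^{-1}) = \Trans{P}NQ\otimes I_q,$$
which is again of the form $N'\otimes I_q$ with $N'=\Trans{P}NQ$. Since $N\mapsto\Trans{P}NQ$ is a bijection of $\mathcal{M}_{p,r}(K)$ when $P,Q$ are invertible, every such pair stabilizes $\VecSp{T}_{p,q,r}$.

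For the reverse inclusion — the substantial part — set $A=\Trans{X}$ and $B=Y$, view $A$ as a $p\times p$ array of $q\times q$ blocks $A_{ai}$ and $B$ as an $r\times r$ array of $q\times q$ blocks $B_{jd}$. Since $\VecSp{T}_{p,q,r}$ is spanned by the matrices $E_{ij}\otimes I_q$ (with $E_{ij}$ the matrix units of $\mathcal{M}_{p,r}$), the stabilizer condition is equivalent to $A(E_{ij}\otimes I_q)B\in\VecSp{T}_{p,q,r}$ for all $i,j$. A direct block computation shows that the $(a,d)$ block of $A(E_{ij}\otimes I_q)B$ equals $A_{ai}B_{jd}$, so membership in $\VecSp{T}_{p,q,r}$ forces $A_{ai}B_{jd}\in K\,I_q$ for every $a,i,j,d$. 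This reduction is the key step: it converts the stabilizer condition into the single rigid requirement that every cross-product of a block of $A$ with a block of $B$ be a scalar multiple of $I_q$.

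It then remains to solve this block system. First I would rule out the degenerate case in which all products $A_{ai}B_{jd}$ vanish: the sum $\mathcal{R}\subseteq K^q$ of the row spaces of the $A_{ai}$ must be all of $K^q$, since a nonzero vector orthogonal to $\mathcal{R}$ would produce a kernel vector of the invertible $A$ supported in a single block column; but then the vanishing of all products makes $\mathcal{R}$ orthogonal to the column space of every $B_{jd}$, forcing $B=0$ and contradicting $B$ invertible. Hence some product equals $\lambda I_q$ with $\lambda\neq0$, so that block $A_{a_0i_0}$ is invertible, and right-multiplying the relations by its inverse shows that all $A_{ai}$ are scalar multiples of a single invertible matrix $R$ and all $B_{jd}$ are scalar multiples of $R^{-1}$. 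Writing $A_{ai}=\alpha_{ai}R$ and $B_{jd}=\beta_{jd}R^{-1}$ yields $A=\tilde{P}\otimes R$ and $B=\tilde{Q}\otimes R^{-1}$ with $\tilde{P}=(\alpha_{ai})$ and $\tilde{Q}=(\beta_{jd})$; invertibility of $A$, $B$ and $R$ forces $\tilde{P}\in\GL_p$ and $\tilde{Q}\in\GL_r$, and setting $P=\Trans{\tilde{P}}$, $Q=\tilde{Q}$ recovers $X=P\otimes\Trans{R}$, $Y=Q\otimes R^{-1}$ in the stated form. I expect the main obstacle to be not conceptual but rather the bookkeeping in the block computation and keeping track of the ordering conventions (row-major for the $a_{i,h}$, column-major for the $b_{h,j}$) that determine which tensor factor each of $P$, $Q$, $R$ acts on.
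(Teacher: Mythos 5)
Your proposal is correct and takes essentially the same approach as the paper: both proofs reduce the stabilizer condition to the rigid block system $X_{i,h}Y_{j,\ell}\in\spn(\{I_q\})$ obtained by acting on the basis elements $e_{i,j}\otimes I_q$, and then solve it to force all blocks to be scalar multiples of a single invertible $R$ and its inverse. If anything, your write-up is more complete than the paper's, since you also verify the easy inclusion and make explicit the degenerate-case exclusion and the pivot on one invertible block, steps the paper's proof passes over rather tersely.
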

\begin{proof}
	See~\ref{app:stabmat}.
\end{proof}
\begin{corollary}
	\label{prop:stabmat}
The elements of $\VecSp{T}_{p,q,r}$ of a given rank lie in the same orbit under
the action of $\stb(\VecSp{T}_{p,q,r})$.
\end{corollary}
\begin{example}[Action of the stabilizer of $\VecSp{T}_{2,2,2}$]
	The stabilizer of $\VecSp{T}_{2,2,2}$ is generated by the following elements of $\GLGL{4}{4}$:
	$$\left(
	\begin{pmatrix}
		1 & 0 & 1 & 0\\
		0 & 1 & 0 & 1\\
		0 & 0 & 1 & 0\\
		0 & 0 & 0 & 1\\
	\end{pmatrix},
	\begin{pmatrix}
		1 & 0 & 0 & 0\\
		0 & 1 & 0 & 0\\
		0 & 0 & 1 & 0\\
		0 & 0 & 0 & 1\\
	\end{pmatrix}
	\right),\ 
	\left(
	\begin{pmatrix}
		0 & 0 & 1 & 0\\
		0 & 0 & 0 & 1\\
		1 & 0 & 0 & 0\\
		0 & 1 & 0 & 0\\
	\end{pmatrix},
	\begin{pmatrix}
		1 & 0 & 0 & 0\\
		0 & 1 & 0 & 0\\
		0 & 0 & 1 & 0\\
		0 & 0 & 0 & 1\\
	\end{pmatrix}
	\right),$$
	$$\left(
	\begin{pmatrix}
		1 & 1 & 0 & 0\\
		0 & 1 & 0 & 0\\
		0 & 0 & 1 & 1\\
		0 & 0 & 0 & 1\\
	\end{pmatrix},
	\begin{pmatrix}
		1 & 1 & 0 & 0\\
		0 & 1 & 0 & 0\\
		0 & 0 & 1 & 1\\
		0 & 0 & 0 & 1\\
	\end{pmatrix}
	\right),\ 
	\left(
	\begin{pmatrix}
		0 & 1 & 0 & 0\\
		1 & 0 & 0 & 0\\
		0 & 0 & 0 & 1\\
		0 & 0 & 1 & 0\\
	\end{pmatrix},
	\begin{pmatrix}
		0 & 1 & 0 & 0\\
		1 & 0 & 0 & 0\\
		0 & 0 & 0 & 1\\
		0 & 0 & 1 & 0\\
	\end{pmatrix}
	\right),$$
	$$\left(
	\begin{pmatrix}
		1 & 0 & 0 & 0\\
		0 & 1 & 0 & 0\\
		0 & 0 & 1 & 0\\
		0 & 0 & 0 & 1\\
	\end{pmatrix},
	\begin{pmatrix}
		1 & 0 & 1 & 0\\
		0 & 1 & 0 & 1\\
		0 & 0 & 1 & 0\\
		0 & 0 & 0 & 1\\
	\end{pmatrix}
	\right),\ 
	\left(
	\begin{pmatrix}
		1 & 0 & 0 & 0\\
		0 & 1 & 0 & 0\\
		0 & 0 & 1 & 0\\
		0 & 0 & 0 & 1\\
	\end{pmatrix},
	\begin{pmatrix}
		0 & 0 & 1 & 0\\
		0 & 0 & 0 & 1\\
		1 & 0 & 0 & 0\\
		0 & 1 & 0 & 0\\
	\end{pmatrix}
	\right)
	.
	$$
The vector space of $\VecSp{T}_{2,2,2}$ is isomorphic to $\mathcal{M}_{2,2} \otimes I_2$.
Thus the elements of $\VecSp{T}_{2,2,2}$ have rank $0$, $2$ or $4$.

Via the action of $\stb(\VecSp{T}_{2,2,2})$, all the elements of rank $2$ can all be mapped to the element
$$
\begin{pmatrix}
1 & 0 & 0 & 0\\
0 & 1 & 0 & 0\\
0 & 0 & 0 & 0\\
0 & 0 & 0 & 0\\
\end{pmatrix}.
$$
Similarly, via the action of $\stb(\VecSp{T}_{2,2,2})$, all the elements of rank $4$ can all be mapped to the element
$$
\begin{pmatrix}
1 & 0 & 0 & 0\\
0 & 1 & 0 & 0\\
0 & 0 & 1 & 0\\
0 & 0 & 0 & 1\\
\end{pmatrix}.
$$
\end{example}
\section{Coverings of subspaces of bilinear forms}
\label{sec:newalgo}
Our contribution consists in reducing the number of vector spaces $\VecSp{W}$ that we need to
enumerate in order to get those that satisfy $\VecSp{T}\oplus\VecSp{W}\in\AL{r}$, where $\VecSp{T}$
is the vector space representing a given bilinear map.
To this effect, we restrict the enumeration to vector spaces $\VecSp{W}$ satisfying some properties which are
intrinsic to $\VecSp{T}$.
In this section, the definition and theoretical aspects of the set of vector spaces satisfying these
properties are treated, illustrated via the example of the short product and the matrix product.
In Section~\ref{sec:computetopo},
we deal with practical and computational aspects.

\subsection{Theoretical aspect}
\label{sec:covsets}
Our strategy consists, first, for any $r\geq \ell$, in constructing $g$ sets $\mathcal{E}_{i,r}$ for $i \in \closedinterval{0}{g-1}$, that are all subsets
of $\AL{r-\ell+k_i}$, where $k_i$ is a nonnegative integer, and that
satisfy some property described in Definition~\ref{def:coveringdef}.
\begin{definition}[Covering of a vector space]
	\label{def:coveringdef}
	Let $r$ be a nonnegative integer, and $\{k_i\}$ a set of nonnegative integers
	such that $k_i \leq \ell$.
	Let $\VecSp{T}$ be a subspace of $\Bl{m}{n}$ of dimension $\ell$.
	Let $\{\mathcal{E}_{i,r}\}_{0\leq i < g}$ be
	a set of subsets where
	$\mathcal{E}_{i,r} \subset \AL{r-\ell+k_i}$, for all $i \in
	\closedinterval{0}{g-1}$. Then, $(\mathcal{E}_{i,r})_{0\leq i < g}$ 
	is said to be a covering of $\VecSp{T}$ if and only if,
	for any vector space $\VecSp{W} \in \AL{r-\ell}$
	such that $\VecSp{T}\oplus \VecSp{W} \in \AL{r}$, there exist an index
	$i \in \closedinterval{0}{g-1}$, a subspace $\VecSp{V} \in \mathcal{E}_{i,r}$, and an $\RP$-automorphism $\sigma \in \stb(\VecSp{T})$ such that $\VecSp{T}
	+ (\VecSp{V}\circ \sigma) = \VecSp{T} \oplus \VecSp{W}$.
\end{definition}
\begin{proposition}
	Given $\VecSp{T}\subset \Bl{m}{n}$ as above and a covering $(\mathcal{E}_{i,r})_{0\leq i < g}$ of 
	${\VecSp{T}}$, then, for any $r\geq \ell$, we
	have
	$$\CCover{r}{\VecSp{T}} \subset \{\VecSp{T} + \VecSp{V}\ |\ 
\exists i \in \closedinterval{0}{g-1},\ \VecSp{V} \in \mathcal{E}_{i,r}\circ \stb(\VecSp{T})\}.$$
\end{proposition}
\begin{proof}
	Let $\VecSp{V} \in \CCover{r}{\VecSp{T}}$. By Proposition~\ref{prop:baseincomplete}, there exists $\VecSp{W}\in
	\AL{r-\ell}$ such that $\VecSp{T} \oplus \VecSp{W} = \VecSp{U}$.
	Then, by Definition~\ref{def:coveringdef}, there exist an index
	$i \in \closedinterval{0}{g-1}$, a subspace $\VecSp{V} \in \mathcal{E}_{i,r}$,
	and an $\RP$-automorphism $\sigma \in \stb(\VecSp{T})$ such that $\VecSp{T}
	+ (\VecSp{V}\circ \sigma) = \VecSp{T} \oplus \VecSp{W}$.
	Taking $\VecSp{V}' = \VecSp{V} \circ \sigma$, we thus have
	$\VecSp{U} = \VecSp{T} + \VecSp{V}'$ and $\VecSp{V}' \in \mathcal{E}_{i,r} \circ \stb(\VecSp{T})$, which proves the inclusion.
\end{proof}
Thus, assuming that we have
a method for computing the $\mathcal{E}_{i,r}$'s, we are able to cover the whole set $\CCover{r}{\VecSp{T}}$.
For example, the set composed of the single set
$\mathcal{E}_{0,r} = \quotient{\AL{r-\ell}}{\stb(\VecSp{T})}$ is a covering of
${\VecSp{T}}$ and can be enumerated using \myalgorithm{BDEZStab}.
We describe below
how we construct the $\mathcal{E}_{i,r}$'s that we use in practice.

\begin{definition}[Stem of a vector space]
	\label{def:propertyintr}
	For a vector space $\VecSp{T}$, a set $\{\VecSp{F}_i\}_{0\leq i <g}$ of $g$ subspaces $\VecSp{F}_i \subset \VecSp{T}$
	of dimension $k_i$ is said to be a \emph{stem} of $\VecSp{T}$
	if and only if,
		for any basis $\mathcal{B}$ of $\VecSp{T}$,
			there exist $i \in \closedinterval{0}{g-1}$, an $\RP$-automorphism $\sigma\in\stb(\VecSp{T})$ and 
			a free family $\mathcal{F} \subset \mathcal{B}$ of size $k_i$ such that
		\begin{equation*}
		\label{eq:propertyintr}
			\spn(\mathcal{F})\circ \sigma = \VecSp{F}_i.
		\end{equation*}
\end{definition}
\begin{proposition}
	\label{prop:formeirs}
	For a vector space $\VecSp{T}\subset \Bl{m}{n}$, a stem of $\VecSp{T}$
	given by $g$ subspaces $\VecSp{F}_i \subset \VecSp{T}$, and
	$g$ subgroups $U_i \subset \stb(\VecSp{T})\cap \stb(\VecSp{F}_i)$,
	the set $\{\mathcal{E}_{i,r}\}_{0\leq i <g}$, where each
	$\mathcal{E}_{i,r}$ is a set of representatives of the quotient $\quotient{\CCover{r-\ell+k_i}{\VecSp{F}_i}}{U_i}$,
	is a covering of $\VecSp{T}$.
\end{proposition}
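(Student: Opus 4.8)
I need to show that the collection $\{\mathcal{E}_{i,r}\}$, where each $\mathcal{E}_{i,r}$ is a set of representatives of $\quotient{\CCover{r-\ell+k_i}{\VecSp{F}_i}}{U_i}$, satisfies Definition~\ref{def:coveringdef}. So let me start from an arbitrary $\VecSp{W} \in \AL{r-\ell}$ with $\VecSp{T}\oplus\VecSp{W} \in \AL{r}$, and find $i$, a representative $\VecSp{V}\in\mathcal{E}_{i,r}$, and $\sigma\in\stb(\VecSp{T})$ with $\VecSp{T}+(\VecSp{V}\circ\sigma) = \VecSp{T}\oplus\VecSp{W}$.

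The plan is to first exploit the stem property. Since $\VecSp{T}\oplus\VecSp{W}\in\AL{r}$, it has a rank-one basis; intersecting with $\VecSp{T}$ via the argument of Proposition~\ref{prop:baseincomplete} (run in reverse, extracting $\VecSp{W}$'s rank-one generators first) produces a rank-one basis of $\VecSp{T}\oplus\VecSp{W}$ whose first $\ell$ elements span a complement containing information about $\VecSp{T}$. More carefully, I would take the rank-one basis $\mathcal{B}_0$ of $\VecSp{V}_0 := \VecSp{T}\oplus\VecSp{W}$ and use it to produce a basis $\mathcal{B}$ of $\VecSp{T}$ itself by Gaussian reduction, so that a free subfamily of $\mathcal{B}_0$ projects onto $\mathcal{B}$. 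Applying Definition~\ref{def:propertyintr} to this basis $\mathcal{B}$ yields an index $i$, an automorphism $\sigma\in\stb(\VecSp{T})$, and a free family $\mathcal{F}\subset\mathcal{B}$ of size $k_i$ with $\spn(\mathcal{F})\circ\sigma = \VecSp{F}_i$, i.e.\ $\VecSp{F}_i\circ\sigma^{-1} = \spn(\mathcal{F})\subset\VecSp{T}$.

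Next I would transport everything by $\sigma^{-1}$. Set $\VecSp{V}_0' = \VecSp{V}_0\circ\sigma^{-1}$; since $\sigma\in\stb(\VecSp{T})$ this still contains $\VecSp{T}$, it still lies in $\AL{r}$ by Proposition~\ref{prop:preserverank}, and it contains $\VecSp{F}_i\circ\sigma^{-1} = \spn(\mathcal{F})$. The key point is that $\VecSp{F}_i\circ\sigma^{-1}\subset\VecSp{V}_0'$ together with $\dim\VecSp{V}_0' = r = (r-\ell+k_i)+(\ell-k_i)$ and $\VecSp{V}_0'\in\AL{r}$ shows $\VecSp{V}_0'\in\CCover{r-\ell+k_i}{\VecSp{F}_i\circ\sigma^{-1}}$ after a dimension bookkeeping; equivalently, pulling back by $\sigma$, $\VecSp{V}_0'\circ\sigma = \VecSp{V}_0 \in \CCover{r-\ell+k_i}{\VecSp{F}_i}$ is an optimal-type decomposition containing $\VecSp{F}_i$. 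I would then choose the representative $\VecSp{V}\in\mathcal{E}_{i,r}$ of the $U_i$-orbit of $\VecSp{V}_0$ inside $\CCover{r-\ell+k_i}{\VecSp{F}_i}$: there is $\rho\in U_i\subset\stb(\VecSp{T})\cap\stb(\VecSp{F}_i)$ with $\VecSp{V}\circ\rho = \VecSp{V}_0$. Finally $\VecSp{V}\circ(\rho\sigma^{-1}\cdots)$ recombined with the observation that $U_i$ and $\sigma$ both lie in $\stb(\VecSp{T})$ gives a single $\sigma'\in\stb(\VecSp{T})$ with $\VecSp{T}+(\VecSp{V}\circ\sigma') = \VecSp{T}\oplus\VecSp{W}$, as required.

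The main obstacle I anticipate is the dimension/rank bookkeeping in the middle step: verifying precisely that $\VecSp{V}_0$ (which contains both $\VecSp{T}$ and the sub-stem $\VecSp{F}_i$) genuinely lies in $\CCover{r-\ell+k_i}{\VecSp{F}_i}$, i.e.\ that its rank-one dimension relative to $\VecSp{F}_i$ is exactly $r-\ell+k_i$ rather than something smaller. This requires that $\VecSp{F}_i$ really sits inside $\VecSp{T}$ as a $k_i$-dimensional piece of a rank-one basis, which is exactly what the stem property delivers, so the two definitions must be threaded together carefully. A secondary subtlety is confirming that composing the several automorphisms ($\sigma$, $\rho$, and any reduction automorphism) yields a single element of $\stb(\VecSp{T})$, which follows since $\stb(\VecSp{T})$ is a group and each factor stabilizes $\VecSp{T}$.
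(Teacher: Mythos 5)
Your outline follows the same skeleton as the paper's proof: obtain a basis $\mathcal{B}$ of $\VecSp{T}$ by projecting rank-one elements of $\VecSp{T}\oplus\VecSp{W}$ onto $\VecSp{T}$ along $\VecSp{W}$, apply the stem property to $\mathcal{B}$ to get $i$, $\mathcal{F}$ and $\sigma$, then adjust by automorphisms and select a representative. However, the middle step --- which you yourself flag as ``the main obstacle'' --- is not a bookkeeping issue but a genuine error. You assert that $\VecSp{V}_0=\VecSp{T}\oplus\VecSp{W}$ (equivalently $\VecSp{V}_0'=\VecSp{V}_0\circ\sigma^{-1}$) belongs to $\CCover{r-\ell+k_i}{\VecSp{F}_i}$. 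This is impossible whenever $k_i<\ell$: by Notation~\ref{not:delta}, an element of $\CCover{r-\ell+k_i}{\VecSp{F}_i}$ is spanned by a free family of $r-\ell+k_i$ rank-one forms and hence has dimension exactly $r-\ell+k_i$, while $\dim \VecSp{V}_0=r$. Consequently $\VecSp{V}_0$ can never be $U_i$-equivalent to an element of $\mathcal{E}_{i,r}\subset\AL{r-\ell+k_i}$, and your final recombination step, which rests on this membership, collapses with it.

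The object that actually lies in $\CCover{r-\ell+k_i}{\VecSp{F}_i}$ is the smaller space $(\VecSp{W}\oplus\spn(\mathcal{F}))\circ\sigma$, and the real content of the proof --- absent from your sketch --- is showing that $\VecSp{V}=\VecSp{W}\oplus\spn(\mathcal{F})$ lies in $\AL{r-\ell+k_i}$ at all, i.e.\ that it has a rank-one basis even though $\mathcal{F}$ itself need not consist of rank-one forms. The paper does this with a substitution modulo $\VecSp{W}$: each $t\in\mathcal{F}$ is by construction the $\VecSp{T}$-component of a rank-one form $\psi=t+w$ with $w\in\VecSp{W}$, so, writing $\mathcal{F}'$ for the set of these $\psi$'s, one gets $\VecSp{W}\oplus\spn(\mathcal{F})=\VecSp{W}\oplus\spn(\mathcal{F}')$, which is spanned by the union of a rank-one basis of $\VecSp{W}$ with the rank-one family $\mathcal{F}'$. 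Then $\VecSp{V}\circ\sigma\supset\spn(\mathcal{F})\circ\sigma=\VecSp{F}_i$ gives the desired membership, one picks the representative $\VecSp{V}''=(\VecSp{V}\circ\sigma)\circ\gamma\in\mathcal{E}_{i,r}$ with $\gamma\in U_i$, and the covering identity $\VecSp{T}+(\VecSp{V}''\circ\gamma^{-1}\circ\sigma^{-1})=\VecSp{T}+\VecSp{V}=\VecSp{T}\oplus\VecSp{W}$ holds precisely because $\spn(\mathcal{F})\subset\VecSp{T}$ is absorbed into $\VecSp{T}$ --- a point your sketch also leaves unverified. Note that your construction of $\mathcal{B}$ as $\VecSp{W}$-projections of rank-one elements supplies exactly the ingredient needed for the substitution $\mathcal{F}\mapsto\mathcal{F}'$, so the gap is repairable, but as written the proof does not go through.
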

\begin{proof}
Let $\VecSp{W}\in \AL{r-\ell}$ be such that
$\VecSp{T}\oplus\VecSp{W} \in \AL{r}$.
Take a basis $\mathcal{W}$ of $\VecSp{W}$, and complete it into a basis
of $\VecSp{T}\oplus \VecSp{W}$ using $\ell$ rank-one bilinear forms, denoted
by $\{\psi_i\}_{0\leq i<\ell}$. For all $i \in \closedinterval{0}{\ell-1}$,
write $\psi_i = t_i + w_i$, with $t_i\in \VecSp{T}$ and $w_i \in \VecSp{W}$.

The $t_i$'s are linearly independent. Otherwise, there would exist
coefficients $(\lambda_i)_{0\leq i< \ell}$ such that
$\sum_{i=0}^{\ell-1} \lambda_i t_i = 0$, whence
$\sum_{i=0}^{\ell-1} \lambda_i \psi_i =\sum_{i=0}^{\ell-1} \lambda_i w_i$,
which would then contradict the fact that $\{\psi_i\}_{0\leq i< \ell}$
completes $\mathcal{W}$ into a basis of $\VecSp{T}\oplus \VecSp{W}$.

Consequently, $\mathcal{B} = \{t_i\}_{0\leq i <\ell}$ is a free
family of $\ell$ vectors of $\VecSp{T}$ and, as $\dim(\VecSp{T}) = \ell$,
$\mathcal{B}$ is a basis of $\VecSp{T}$.
Then, by Definition~\ref{def:propertyintr}, there exist an index $i
\in \closedinterval{0}{g-1}$, a subset $\mathcal{F}\subset \mathcal{B}$
of size $k_i = \dim(\VecSp{F}_i)$, and an $\RP$-automorphism $\sigma \in
\stb(\VecSp{T})$ such that $\spn(\mathcal{F}) \circ \sigma = \VecSp{F}_i$.

Let $\VecSp{V} = \VecSp{W} \oplus \spn(\mathcal{F})$. Writing $\mathcal{F} =
\{t_i\}_{i \in I}$, with $I \subset \closedinterval{0}{\ell-1}$,
we define $\mathcal{F}' = \{\psi_i\}_{i\in I}$.
Since $\psi_i = t_i +w_i$ and $\spn(\mathcal{F}') \in \AL{k_i}$,
we have $\VecSp{V} = \VecSp{W}\oplus \spn(\mathcal{F}) = \VecSp{W}
\oplus\spn(\mathcal{F}') \in \AL{r-\ell+k_i}$.

Now, consider $\VecSp{V}' = \VecSp{V} \circ \sigma = (\VecSp{W} \oplus
\spn(\mathcal{F}))\circ \sigma$: we also have $\VecSp{V}' \in \AL{r-\ell+k_i}$,
as $\RP$-automorphisms preserve the bilinear rank, and $\VecSp{F}_i = \spn(\mathcal{F}) \circ \sigma \subset \VecSp{V}'$, whence $\VecSp{V}' \in \CCover{r-\ell+k_i}{\VecSp{F}_i}$.

Finally, let $\VecSp{V}'' \in \mathcal{E}_{i,r}$ be a representative of
the equivalence class of $\VecSp{V}'$ in the quotient set
$\quotient{\CCover{r-\ell+k_i}{\VecSp{F}_i}}{U_i}$:
there exists an $\RP$-automorphism $\gamma \in U_i$ such that $\VecSp{V}'' = \VecSp{V}'\circ \gamma$. We then have

$$\VecSp{T} + (\VecSp{V}'' \circ \gamma^{-1}\circ \sigma^{-1})
 = \VecSp{T} + (\VecSp{V}'\circ \sigma^{-1}) = \VecSp{T} + \VecSp{V}=
\VecSp{T} + (\VecSp{W}\oplus \spn(\mathcal{F})) = \VecSp{T} \oplus \VecSp{W}
$$
where the last equality comes from the fact that
$\spn(\mathcal{F}) \subset \VecSp{T}$.
Finally, as $\gamma^{-1}\circ \sigma^{-1} \in \stb(\VecSp{T})$,
this proves the result.
\end{proof}

Given $\VecSp{T}$ and a stem of $\VecSp{T}$, we can derive a new algorithm that computes $\CCover{r}{\VecSp{T}}$
via the computation of some intermediate sets
$\mathcal{E}_{i,r} = \quotient{\CCover{r-\ell+k_i}{\VecSp{F}_i}}{U_i}$ for $i\in \closedinterval{0}{g-1}$.
\begin{example}[Two examples of stems]
	For any vector space $\VecSp{T}$,
	let $\mathcal{B}$ be a basis of $\VecSp{T}$. There exists a subset of $\mathcal{B}$
	generating $\VecSp{T}$ (namely, $\mathcal{B}$): $\{\VecSp{T}\}$ is a stem of $\VecSp{T}$.
	There exists also a subset of $\mathcal{B}$ generating $\spn(\emptyset)$
	(namely, $\emptyset$):
	$\{\spn(\emptyset)\}$ is a stem of $\VecSp{T}$.
	\begin{itemize}
		\item An enumeration algorithm that uses $\{\VecSp{T}\}$ as a stem
		amounts to computing $\quotient{\CCover{r}{\VecSp{T}}}{\stb(\VecSp{T})}$.
		In this case, we did not decompose the original problem into simpler problems.
		\item If the stem chosen is the set $\{\spn(\emptyset)\}$, this is equivalent to enumerate a set
		of representatives of the quotient
		$\quotient{\CCover{r-\ell}{\spn(\emptyset)}}{\stb(\VecSp{T})}$.
		For this purpose, no better methods than \emph{\myalgorithm{BDEZStab}} is known.
	\end{itemize}
\end{example}
Thus, \myalgorithm{BDEZStab} can be seen as an approach derived from the stem $\{\spn(\emptyset)\}$.
We propose here other strategies that are derived from stems, given by sets of subspaces $\VecSp{F}_i \subset \VecSp{T}$
of dimension $k_i$. The enumeration of a set $\CCover{r-\ell+k_i}{\VecSp{F}_i}$ is interesting in practice if
its cardinality is less than $\#\AL{r-\ell}$. However,
its cost depends also on the algorithms used for the computation of quotients and stabilizers and
on how large $k_i$ is, which is detailed below.

No automatic method is known to determine, how to choose a stem for a given vector space $\VecSp{T}$:
we have to provide a stem for each $\VecSp{T}$. This task has to be done by hand specifically for each
bilinear map. We will actually do so in Section~\ref{subsec:ideaalgo} and~\ref{sec:apptoprods} for the examples of the short product and the matrix.
To this end, the determination of the stabilizer, as done in Section~\ref{sec:algstruct}, plays a key role.

In order to compute a set of the form $\quotient{\CCover{r-\ell+k_i}{\VecSp{F}_i}}{U_i}$,
we proceed in two steps. Let $\mathcal{F}_i$ be a basis of $\VecSp{F}_i$. Our strategy assumes that
we have a finite representation of a group $U_i$ such that $U_i\subset \stb(\VecSp{T})\cap \stb(\VecSp{F}_i)$.
In Proposition~\ref{prop:formeirs}, the larger the groups $U_i$ are, the smaller the $\mathcal{E}_{i,r}$'s are.
And we prefer to keep the $\mathcal{E}_{i,r}$'s as small as possible, since it gives smaller sets to enumerate.
Thus, this should lead us to choose
$U_i = \stb(\VecSp{T}) \cap \stb(\VecSp{F}_i)$.
However, in practice, the method used in our implementation is specialized to the choice
$U_i = \stb(\VecSp{T}) \cap \stb(\mathcal{F}_i) \subset \stb(\VecSp{T}) \cap \stb(\VecSp{F}_i)$
(we have $\stb(\mathcal{F}_i) \subset\stb(\VecSp{F}_i)$) because only in this case do
we have a practical algorithm to enumerate a set of representatives for the quotient
$\quotient{\CCover{r-\ell+k_i}{\VecSp{F}_i}}{U_i}$.

\begin{notation}
	For a free family $\mathcal{F}$ of $k$ bilinear forms and a positive integer $d$, 
	we let
	$$\QQCover{d+k}{\mathcal{F}} = \quotient{\Cover{d+k}{\mathcal{F}}}{\stb(\mathcal{F})}.$$
\end{notation}
In order to enumerate sets of the form $\quotient{\CCover{r-\ell+k_i}{\VecSp{F}_i}}{\stb(\VecSp{T})\cap
\stb(\mathcal{F}_i)}$, we adopt a two-step strategy.
\begin{remark}
This strategy requires the precomputation of a set of representatives of the quotient
$$\quotient{\AL{r-\ell+k_i}}{\GLGL{m}{n}}.$$
Section~\ref{sec:combobj} describes how to compute such a set.

However, there is a pratical limit on their dimension $k_i$, due to the precomputations that are used
in our method and that constitute a bottleneck.
Assuming that
$$\#\left(\quotient{\AL{d}}{\GLGL{d}{d}}\right)$$
behaves as $(d!)^{1.1}$ over $\mathbb{F}_2$ (which is an empirical estimate),
storing a set of representatives of
$$\quotient{\AL{d}}{\GLGL{d}{d}}$$
for $d=13$ would require $15$ terabytes for instance.
Consequently, given the largest
``$d$'' for which we are able to compute in practice
$$\quotient{\AL{d}}{\GLGL{m}{n}},$$
we have a practical constraint on how large the $r-\ell+k_i$'s may be:
we should have $r-\ell + k_i \leq d$ for all $i$.\smallskip
\end{remark}
Thus, we precompute the quotient $\quotient{\AL{r-\ell+k_i}}{\GLGL{m}{n}}$.
The first step consists in computing $\QQCover{r-\ell+k_i}{\mathcal{F}_i}$ and is detailed in Section~\ref{sec:computetopoone}.
The second step applies the action of the left transversal
$$\quotient{\stb(\mathcal{F}_i)}{\stb(\VecSp{T})\cap \stb(\mathcal{F}_i)},$$
which can be computed using the algorithms proposed in~\cite{opac-b1102239} for example.

We describe in Algorithm~\myalgorithm{CoveringSetsMethod} the global strategy to find optimal formulae for $\VecSp{T}$ 
in the sense of the bilinear rank,
that is, to enumerate $\CCover{r}{\VecSp{T}}$ given a stem.
We assume that we are given a subspace $\VecSp{T}$ and a set of $g$ free families $\mathcal{F}_0,\ldots,\mathcal{F}_{g-1}$ of $\VecSp{T}$
such that $\{\spn(\mathcal{F}_i)\}_i$ forms a stem of $\VecSp{T}$.
\begin{algorithm}[H]
	\caption{\myalgorithm{CoveringSetsMethod}}
	\label{alg:mainalg}
	\begin{algorithmic}[1]
		\Require{$\VecSp{T}\in\Bl{m}{n}$ of dimension $\ell$, an integer $r$, a stem $\{\mathcal{F}_i\}_{0\leq i< g}$, the sets $\left\{\quotient{\AL{r-\ell+k_i}}{\GLGL{m}{n}}\right\}_{0 \leq i < g}$}
		\Ensure{$\CCover{r}{\VecSp{T}}$}
		\State $\mathcal{S} \leftarrow \emptyset$
		\For{$i \in \closedinterval{0}{g-1}$}
			\State $\mathcal{Q} \leftarrow \QQCover{r-\ell+k_i}{\mathcal{F}_i}$, obtained from
			$\quotient{\AL{r-\ell+k_i}}{\GLGL{m}{n}}$
			\label{line:computeq}
			\Comment See Section~\ref{sec:computetopo}
			\State $\mathcal{L}\leftarrow\quotient{\stb(\mathcal{F}_i)}{\stb(\VecSp{T})\cap\stb(\mathcal{F}_i)}$
			\label{line:computer}
			\For{$\VecSp{W} \in \mathcal{Q}, \sigma \in \mathcal{L}$}
			\If{\myalgorithm{HasRankOneBasis}($\VecSp{T} + (\VecSp{W}\circ\sigma)$)}
				\State $\mathcal{S} \leftarrow \mathcal{S} \cup \{\VecSp{T}+(\VecSp{W}\circ \sigma)\}$
			\EndIf
			\EndFor
		\EndFor
		\State \Return $\bigcup_{\VecSp{V} \in \mathcal{S}}\VecSp{V} \circ \stb(\VecSp{T})$
	\end{algorithmic}
\end{algorithm}

\begin{theorem}
	\label{thm:mainthm}
	Let $R$ be the rank of $\VecSp{T}$.
	For any positive integer $r$,  Algorithm~\myalgorithm{CoveringSetsMethod} proves either that
	\begin{enumerate}
		\item $r <R$
		\item or $R \leq r$.
	\end{enumerate}
	In the case where $R \leq r$,
	any element of $\CCover{r}{\VecSp{T}}$ is included in the set returned by Algorithm~\myalgorithm{CoveringSetsMethod}.
\end{theorem}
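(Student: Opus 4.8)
The plan is to show that \myalgorithm{CoveringSetsMethod} is exactly the covering construction of Proposition~\ref{prop:formeirs} instantiated with the subgroups $U_i=\stb(\VecSp{T})\cap\stb(\mathcal{F}_i)$, and then to feed the resulting covering into the inclusion statement that follows Definition~\ref{def:coveringdef}. First I would record that $\stb(\mathcal{F}_i)\subset\stb(\VecSp{F}_i)$ (a setwise stabilizer of a family stabilizes its span), so $U_i\subset\stb(\VecSp{T})\cap\stb(\VecSp{F}_i)$, which is precisely the hypothesis Proposition~\ref{prop:formeirs} requires. It then suffices to prove that, for each $i$, the set $\mathcal{E}_{i,r}=\{\VecSp{W}\circ\sigma \mid \VecSp{W}\in\mathcal{Q},\ \sigma\in\mathcal{L}\}$ enumerated by the inner double loop (with $\mathcal{Q}=\QQCover{r-\ell+k_i}{\mathcal{F}_i}$ and $\mathcal{L}=\quotient{\stb(\mathcal{F}_i)}{\stb(\VecSp{T})\cap\stb(\mathcal{F}_i)}$) contains a full set of representatives of $\quotient{\CCover{r-\ell+k_i}{\VecSp{F}_i}}{U_i}$. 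A superset of such representatives still satisfies Definition~\ref{def:coveringdef}, since the covering property is an existential statement, so this is all we need.

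The technical heart is this orbit refinement, which I would handle by coset bookkeeping. Set $G=\stb(\mathcal{F}_i)$ and $X=\CCover{r-\ell+k_i}{\VecSp{F}_i}$; the group $G$ acts on $X$ on the right, the action being well defined because each $\sigma\in G$ fixes $\VecSp{F}_i=\spn(\mathcal{F}_i)$ setwise and preserves the rank, hence maps $X$ into $X$. By construction $\mathcal{Q}$ is a transversal of the $G$-orbits of $X$, while $\mathcal{L}$ meets each left coset $\sigma U_i$ exactly once, so every $g\in G$ factors as $g=\sigma u$ with $\sigma\in\mathcal{L}$ and $u\in U_i$. Now given any $\VecSp{V}\in X$, choose $\VecSp{W}\in\mathcal{Q}$ with $\VecSp{V}=\VecSp{W}\circ g$ and factor $g=\sigma u$; because the action is on the right, $\VecSp{V}=\VecSp{W}\circ(\sigma u)=(\VecSp{W}\circ\sigma)\circ u$, so $\VecSp{V}$ lies in the $U_i$-orbit of $\VecSp{W}\circ\sigma\in\mathcal{E}_{i,r}$. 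Hence $\mathcal{E}_{i,r}$ meets every $U_i$-orbit of $X$, as required. I expect this step to be the main obstacle: one must keep the conventions straight (the group law is composition $\circ$, the action is on the right, and $\mathcal{L}$ represents \emph{left} cosets) so that the factor $\sigma u$ genuinely lands a point inside one $U_i$-orbit instead of scattering across several.

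With the covering in hand, I would invoke the inclusion $\CCover{r}{\VecSp{T}}\subset\{\VecSp{T}+\VecSp{V}\mid \exists i,\ \VecSp{V}\in\mathcal{E}_{i,r}\circ\stb(\VecSp{T})\}$ and match it against the output. Take $\VecSp{U}\in\CCover{r}{\VecSp{T}}$; then $\VecSp{U}=\VecSp{T}+(\VecSp{V}\circ\tau)$ for some $i$, some $\VecSp{V}=\VecSp{W}\circ\sigma\in\mathcal{E}_{i,r}$, and some $\tau\in\stb(\VecSp{T})$. Applying $\tau^{-1}\in\stb(\VecSp{T})$ and using $\VecSp{T}\circ\tau^{-1}=\VecSp{T}$ gives $\VecSp{U}\circ\tau^{-1}=\VecSp{T}+(\VecSp{W}\circ\sigma)$, which is exactly the subspace the loop tests for the pair $(\VecSp{W},\sigma)$. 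Since $\RP$-automorphisms preserve the rank, $\VecSp{U}\circ\tau^{-1}\in\CCover{r}{\VecSp{T}}$ carries a rank-one basis, so \myalgorithm{HasRankOneBasis} returns true and $\VecSp{T}+(\VecSp{W}\circ\sigma)$ is added to $\mathcal{S}$. As the algorithm returns $\bigcup_{\VecSp{V}'\in\mathcal{S}}\VecSp{V}'\circ\stb(\VecSp{T})$, the output contains $(\VecSp{U}\circ\tau^{-1})\circ\stb(\VecSp{T})\ni\VecSp{U}$, proving the claimed inclusion in the case $R\le r$.

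Finally, the dichotomy would follow from this inclusion together with the membership test and the monotonicity of non-emptiness of $\CCover{\cdot}{\VecSp{T}}$ (a rank-one basis of a proper subspace containing $\VecSp{T}$ can always be extended by a rank-one form outside it). If $\mathcal{S}=\emptyset$ the output is empty, whence $\CCover{r}{\VecSp{T}}=\emptyset$ by the inclusion, and monotonicity forces $R>r$, that is $r<R$. If $\mathcal{S}\neq\emptyset$, then some subspace in $\mathcal{S}$ contains $\VecSp{T}$, has dimension at most $r$ (because $\VecSp{F}_i\subset\VecSp{T}\cap(\VecSp{W}\circ\sigma)$ forces $\dim(\VecSp{T}+\VecSp{W}\circ\sigma)\le r$), and carries a rank-one basis, which certifies $\rk(\VecSp{T})\le r$, i.e.\ $R\le r$. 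These two outcomes are complementary, so the algorithm decides between~(1) and~(2), which completes the proof.
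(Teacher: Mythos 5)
Your proof is correct and is essentially the argument the paper intends: the paper states Theorem~\ref{thm:mainthm} without an explicit proof, treating it as the direct combination of Proposition~\ref{prop:formeirs} (instantiated with $U_i=\stb(\VecSp{T})\cap\stb(\mathcal{F}_i)$, using $\stb(\mathcal{F}_i)\subset\stb(\VecSp{F}_i)$), the inclusion proposition that follows Definition~\ref{def:coveringdef}, and the two-step computation of $\mathcal{Q}$ and the left transversal $\mathcal{L}$ described in Section~\ref{sec:covsets}. Your coset factorization $g=\sigma u$ (right action paired with left cosets), the superset-of-representatives remark, and the dimension bound $\dim(\VecSp{T}+(\VecSp{W}\circ\sigma))\leq r$ used for the dichotomy are exactly the details the paper leaves implicit, and you supply them correctly.
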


The computation of the quotient $\mathcal{Q}$ on Line~\ref{line:computeq} is detailed in Section~\ref{sec:computetopoone}.
\subsection{A stem for the short product}
\label{subsec:ideaalgo}
We use the same notations as 
in Section~\ref{subsec:structstabshort}: we denote by $\Phi_0,\ldots,\Phi_{\ell-1}$ the bilinear forms such that
$$\forall i \geq 0,\ \Phi_i(A,B) = \sum_{j\in \closedinterval{0}{\ell-1-i}} a_{j}b_{j+i}$$
and by $\ShProdMap{\ell}$ the subspace $\spn(\{\Phi_0,\ldots,\Phi_{\ell-1}\})$.

In order to produce a covering of the vector spaces $\VecSp{W}$ satisfying
$\ShProdMap{\ell}\oplus \VecSp{W} \in \CCover{r}{\VecSp{T}}$
that we compute with~\myalgorithm{CoveringSetsMethod}, we need a stem of $\VecSp{T}$.
This stem is given in Proposition~\ref{prop:covshortprod}.
\begin{proposition}[Stem for the short product]
	\label{prop:covshortprod}
	For any $\ell\geq 2$ the singleton $\{\spn(\{\Phi_{0},\Phi_{1}\})\}$is a stem of ${\ShProdMap{\ell}}$:
	for any basis $\mathcal{B}$ of $\ShProdMap{\ell}$,
	there exists $\sigma \in \stb(\ShProdMap{\ell})$ and $\mathcal{F}\subset \mathcal{B}$ of cardinality $2$
	such that
	$$\spn(\mathcal{F}) \circ \sigma = \spn(\{\Phi_{0},\Phi_{1}\}).$$
\end{proposition}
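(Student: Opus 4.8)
The plan is to route the statement through the third item of Theorem~\ref{thm:stabshortprod}, which already supplies the hard transitivity: for any rank-$\ell$ element $\Psi$ and any rank-$(\ell-1)$ element $\Psi'$ of $\VecSp{T}$ there is $\sigma\in\stb(\VecSp{T})$ with $(\Psi\circ\sigma,\Psi'\circ\sigma)=(I,N)$. First I would record two dictionary facts. Under the algebra isomorphism $\VecSp{T}\cong K[N]\cong K[X]/(X^\ell)$ one has $\Phi_i=N^i$, so $\Phi_0=I$ and $\Phi_1=N$ and hence $\spn(\{\Phi_0,\Phi_1\})=\spn(\{I,N\})$. Next, a nonzero element written as $p(N)$ with lowest nonzero term $N^k$ (valuation $k$) factors as $p(N)=N^k\,u(N)$ with $u(N)$ a unit of the algebra, so $\rk(p(N))=\rk(N^k)=\ell-k$; in particular valuation $0$ means rank $\ell$ and valuation $1$ means rank $\ell-1$.

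The task is then to locate, inside an arbitrary basis $\mathcal{B}=\{b_0,\ldots,b_{\ell-1}\}$ of $\VecSp{T}$, a two-element subset whose span meets both rank strata. The naive idea of selecting one basis element of rank $\ell$ and one of rank $\ell-1$ does not work, since a basis need contain no valuation-$1$ element at all (for $\ell=3$, take $\{I,\,I+N,\,N^2\}$, whose valuations are $0,0,2$). To circumvent this I would pass to the quotient $q\colon \VecSp{T}\to \VecSp{T}/\VecSp{T}_2$, where $\VecSp{T}_2:=\spn(\{N^2,\ldots,N^{\ell-1}\})$, a surjection onto a two-dimensional space recording the coefficients of $N^0$ and $N^1$. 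Because $q$ is surjective and the $b_i$ span $\VecSp{T}$, the images $q(b_0),\ldots,q(b_{\ell-1})$ span $\VecSp{T}/\VecSp{T}_2\cong K^2$, so two of them, say $q(b_{i_0})$ and $q(b_{i_1})$, are linearly independent.

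Setting $\mathcal{F}=\{b_{i_0},b_{i_1}\}$, the two elements are distinct basis vectors, hence $\spn(\mathcal{F})$ is two-dimensional, and $q$ restricts to a surjection of $\spn(\mathcal{F})$ onto $K^2$. Thus $\spn(\mathcal{F})$ contains an element $\Psi$ with $q(\Psi)=(1,0)$ and an element $\Psi'$ with $q(\Psi')=(0,1)$; by the dictionary, $\rk(\Psi)=\ell$ and $\rk(\Psi')=\ell-1$. Having distinct ranks they are linearly independent and so form a basis of $\spn(\mathcal{F})$. Applying the third item of Theorem~\ref{thm:stabshortprod} to this pair yields $\sigma\in\stb(\VecSp{T})$ with $\Psi\circ\sigma=I$ and $\Psi'\circ\sigma=N$, whence $\spn(\mathcal{F})\circ\sigma=\spn(\{I,N\})=\spn(\{\Phi_0,\Phi_1\})$, which is precisely the condition of Definition~\ref{def:propertyintr} showing $\{\spn(\{\Phi_0,\Phi_1\})\}$ is a stem of $\VecSp{T}$.

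I expect the only genuinely delicate point to be the choice of invariant: the proposition is false if one tries to read ranks off individual basis vectors, and the fix is to work modulo $\VecSp{T}_2$, where linear independence of the leading $(N^0,N^1)$-parts is exactly what guarantees that $\spn(\mathcal{F})$ hits both valuation $0$ and valuation $1$. Once that observation is in place, the remaining steps are bookkeeping — the rank–valuation factorization and the independence of $\Psi,\Psi'$ — while Theorem~\ref{thm:stabshortprod}(3) carries the real weight.
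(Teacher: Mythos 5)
Your proof is correct and takes essentially the same route as the paper's: both arguments reduce to item 3 of Theorem~\ref{thm:stabshortprod} after locating two elements of $\mathcal{B}$ whose span contains an element of rank $\ell$ and an element of rank $\ell-1$. Your passage to the quotient by $\spn(\{\Phi_2,\ldots,\Phi_{\ell-1}\})$ is simply a cleaner write-up of the step the paper treats tersely, namely choosing $\Psi\in\mathcal{B}$ of rank $\ell$ and then $\Psi'\in\mathcal{B}$, $\lambda\in K$ with $\Psi'-\lambda\Psi$ of rank $\ell-1$.
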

\begin{proof}
	We first observe that for any $\Phi \in \spn(\{\Phi_{\ell-1-i},\ldots,\Phi_{\ell-1}\})$, $\rk(\Phi) \leq i+1$.
	Therefore, any element of rank $\ell$ in $\ShProdMap{\ell}$ has a nonzero coordinate over $\Phi_{0}$
	in its decomposition over the basis $(\Phi_0,\ldots,\Phi_{\ell-1})$ and, reciprocally,
	any element having a nonzero coordinate over $\Phi_{0}$ has rank $\ell$.
	Thus, a basis $\mathcal{B}$ of $\VecSp{T}$ necessarily contains an element of rank $\ell$ denoted by $\Psi$.
	The element $\Psi$ has a nonzero coordinate over $\Phi_{0}$, when we decompose it over $\left\{\Phi_0,\ldots,
	\Phi_{\ell-1}\right\}$.
	Similarly, there exist $\Psi' \in \mathcal{B}$ and $\lambda \in K$ for which
	$\Psi'-\lambda \Psi$ has  rank $\ell-1$.

	We then use Theorem~\ref{thm:stabshortprod} to find an element $\sigma \in \stb(\ShProdMap{\ell})$ such that
	$$(\Psi \circ \sigma,\Psi' \circ \sigma) = (\Phi_{0},\Phi_{1})\ \text{or}\ 
	(\Psi \circ \sigma,(\Psi-\lambda\Psi') \circ \sigma) = (\Phi_{0},\Phi_{1}),$$
	which concludes.
\end{proof}
We give in Table~\ref{tab:compshortprod3} the cardinality of coverings of $\CCover{r}{\ShProdMap{\ell}}$
given by Proposition~\ref{prop:covshortprod}.
\begin{table}[H]
	\centering
	\begin{tabular}{|c|c|}
		\hline
		\textbf{set} & \textbf{cardinality}\\
		\hline
		\hline
		$\Cover{2}{\emptyset} = \AL{2}$ & $980$ \\
		\hline
		\hline
		$\Cover{3}{\Phi_{0}}$ & $28$\\
		\hline
		\hline
		$\Cover{4}{\Phi_{0},\Phi_{1}}$ & $6$\\
		\hline
	\end{tabular}
	\caption{Comparison of the cardinality for $\ell=3$ of three coverings of ${\ShProdMap{3}}$ for $K=\mathbb{F}_2$.}
	\label{tab:compshortprod3}
\end{table}

In conclusion, we need to compute the following set: $\QCover{r-\ell+2}{\Phi_{0},\Phi_{1}}$.
We describe in Section~\ref{sec:computetopo} how  we perform Line~\ref{line:computeq} of Algorithm~\myalgorithm{CoveringSetsMethod}.
The set $\mathcal{L}$ on Line~\ref{line:computer} of \myalgorithm{CoveringSetsMethod} is, for the short product, a set containing one element, which is the identity element of $\GLGL{\ell}{\ell}$.
\subsection{A stem for the matrix product $3\times 2$ by $2\times 3$ over $\mathbb{F}_2$}
\label{sec:apptoprods}
We focus here on the special case given by the bilinear map
$$\begin{matrix}
	\boldsymbol{\Phi}_{3,2,3}:& \mathcal{M}_{3,2}(\mathbb{F}_2)\times \mathcal{M}_{2,3}(\mathbb{F}_2) &\longrightarrow &\mathcal{M}_{3,3}(\mathbb{F}_2)\\
	&(A,B) &\longmapsto& A\cdot B
\end{matrix}$$
over $K = \mathbb{F}_2$.
The rank of this bilinear map is known to be $15$~\cite{doi:10.1137/0120004}. However,
all the optimal formulae are not known.
We denote by $\Phi_{i,j}$ the bilinear forms such that
$\Phi_{i,j}(A,B)$ is the coefficient $(i,j)$ of $\boldsymbol{\Phi}_{3,2,3}(A,B)$ for $i,j \in \left\{0,1,2\right\}$.
The elements $\Phi_{i,j}$ satisfy
$\Phi_{i,j}(A,B) = a_{i,0}b_{0,j} + a_{i,1}b_{1,j}$.

The target subspace of $\Bl{6}{6}$ considered is denoted by
$$\VecSp{T}_{3,2,3} = \spn(\{\Phi_{i,j}\}_{\substack{i,j\in\left\{0,1,2\right\}}}).$$
The approach proposed in this section can be generalized to any matrix product
(albeit at the expense of combinatorial blowup).

We use the stem of $\VecSp{T}_{3,2,3}$  given by Proposition~\ref{prop:covmatprod}.
\begin{proposition}[Covering of the matrix product]
\label{prop:covmatprod}
	The set
	$$\mathcal{C} = \{\spn(\{\Phi_{0,0}+\Phi_{1,1}+\Phi_{2,2}\}),\spn(\{\Phi_{0,0}+\Phi_{1,1},\Phi_{0,1}+\Phi_{2,2}\}),
	\spn(\{\Phi_{0,0}+\Phi_{1,1},\Phi_{1,1}+\Phi_{2,2}\}),$$
	$$\spn(\{\Phi_{0,0}+\Phi_{1,1},\Phi_{2,2}\}),\spn(\{\Phi_{0,0},\Phi_{1,1},\Phi_{2,2}\})\}$$
	is a covering of $\VecSp{T}_{3,2,3}$:
	for any basis $\mathcal{B}$ of $\VecSp{T}_{3,2,3}$, there exists $\mathcal{F} \subset \mathcal{B}$ and
	$\sigma \in \stb(\VecSp{T}_{3,2,3})$ such that
	$$\spn(\mathcal{F})\circ \sigma \in \mathcal{C}.$$
\end{proposition}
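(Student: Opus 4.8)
\textit{Proof plan.} The plan is to transport the statement to a question about subspaces of $\mathcal{M}_{3,3}(\mathbb{F}_2)$ up to equivalence. By Theorem~\ref{thm:stabmat}, the stabilizer $\stb(\VecSp{T}_{3,2,3})$ consists of the pairs $(P\otimes \Trans{R},Q\otimes R^{-1})$; under the identification $\VecSp{T}_{3,2,3}\cong \mathcal{M}_{3,3}(\mathbb{F}_2)\otimes I_2$ sending $\Phi_{i,j}$ to $E_{i,j}\otimes I_2$, a short computation gives $\Trans{(P\otimes \Trans{R})}(F\otimes I_2)(Q\otimes R^{-1}) = (\Trans{P}FQ)\otimes I_2$, so the $R$-part cancels and the induced action on the $3\times 3$ factor is the ordinary left--right equivalence $F\mapsto \Trans{P}FQ$ with $(P,Q)\in\GL_3(\mathbb{F}_2)\times\GL_3(\mathbb{F}_2)$. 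Since the bilinear rank of an element of $\VecSp{T}_{3,2,3}$ is twice the matrix rank of its representative, Corollary~\ref{prop:stabmat} says that two elements are $\stb$-equivalent exactly when their representatives share a matrix rank. Thus, via Definition~\ref{def:propertyintr}, proving Proposition~\ref{prop:covmatprod} reduces to: for every basis $\mathcal{B}$ of $\mathcal{M}_{3,3}(\mathbb{F}_2)$, some subset $\mathcal{F}\subseteq\mathcal{B}$ has $\spn(\mathcal{F})$ equivalent, under $F\mapsto \Trans{P}FQ$, to one of the five representatives listed in $\mathcal{C}$.

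The first concrete step is to pin down the five target orbits by computing, for each $\VecSp{F}_i$, the dimension and the multiset of matrix ranks of its nonzero elements: $\spn(\{\Phi_{0,0}+\Phi_{1,1}+\Phi_{2,2}\})$ is the class of a single invertible matrix; $\spn(\{\Phi_{0,0}+\Phi_{1,1},\Phi_{0,1}+\Phi_{2,2}\})$ has profile $(2,2,3)$; $\spn(\{\Phi_{0,0}+\Phi_{1,1},\Phi_{1,1}+\Phi_{2,2}\})$ has profile $(2,2,2)$; $\spn(\{\Phi_{0,0}+\Phi_{1,1},\Phi_{2,2}\})$ has profile $(1,2,3)$; and $\spn(\{\Phi_{0,0},\Phi_{1,1},\Phi_{2,2}\})$ is the class of the diagonal subspace. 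Whenever the argument produces a candidate subset, I would confirm orbit membership by exhibiting an explicit equivalence $(P,Q)$ (or, for the two-dimensional cases, via the Kronecker classification of pencils over $\mathbb{F}_2$), rather than relying on the profile alone.

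The heart of the proof is a case analysis on $\mathcal{B}$. If some element of $\mathcal{B}$ is invertible, normalize it to $I_3$ and take $\mathcal{F}$ to be that singleton, landing in the first class. Otherwise every element has matrix rank at most $2$, and I would then examine the pencils $\spn(\{A,B\})$ spanned by pairs of basis elements: a rank-$2$ element $A$ together with a suitable partner $B\in\mathcal{B}$ produces one of the profiles $(2,2,3)$, $(2,2,2)$ or $(1,2,3)$, i.e. one of the three two-dimensional classes, whenever such a partner is present. The remaining situation is when no pair of basis elements realizes any of these three profiles; here I would show that $\mathcal{B}$ must contain three rank-$1$ matrices $u_1v_1^{\mathrm{T}},u_2v_2^{\mathrm{T}},u_3v_3^{\mathrm{T}}$ with $\{u_1,u_2,u_3\}$ and $\{v_1,v_2,v_3\}$ each linearly independent, so that a single pair $(P,Q)$ sends them simultaneously to $E_{0,0},E_{1,1},E_{2,2}$ and $\spn(\mathcal{F})$ falls into the diagonal class.

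I expect the main obstacle to be the exhaustiveness of this last step: ruling out a basis that contains neither a good pair nor a diagonal-equivalent triple. Concretely, a configuration such as two rank-$2$ matrices whose sum has rank $1$ yields the profile $(1,2,2)$, which is \emph{not} among the five classes, and must be disposed of by selecting a different subset of the \emph{same} basis. I would handle this by exploiting that $\mathcal{B}$ spans all of $\mathcal{M}_{3,3}(\mathbb{F}_2)$ --- in particular that every $E_{i,j}$ and $I_3$ occurs as an $\mathbb{F}_2$-subset-sum of $\mathcal{B}$ --- to force out either the required independent rank-$1$ matrices or a full-rank combination, while normalizing a first rank-$2$ element by $\stb$ and tracking the residual stabilizer to keep the remaining freedom under control. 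A finite verification over $\mathbb{F}_2$, organized by the multiset of ranks occurring in $\mathcal{B}$, then closes the outstanding cases.
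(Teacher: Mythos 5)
Your overall route is the same as the paper's: reduce via Theorem~\ref{thm:stabmat} and Corollary~\ref{prop:stabmat} to $3\times 3$ matrices under $F\mapsto \Trans{P}FQ$, handle an invertible basis element by normalizing it to $I_3$, and handle a basis consisting of matrix-rank-one elements by simultaneously normalizing three suitable ones to $E_{0,0},E_{1,1},E_{2,2}$. The genuine gap is the intermediate case, which you flag yourself but do not close: when $\mathcal{B}$ contains an element of matrix rank $2$, you must show that it admits a partner in $\mathcal{B}$ spanning one of the three two-dimensional classes, and your case (c) (``no good pair $\Rightarrow$ three independent rank-one elements of $\mathcal{B}$'') is not a valid fallback, because a basis containing matrix-rank-$2$ elements need not contain \emph{any} rank-one element at all. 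Your proposed repairs also do not fit the statement being proved: Definition~\ref{def:propertyintr} requires $\mathcal{F}$ to be a \emph{subset} of $\mathcal{B}$, so the fact that $I_3$ or the $E_{i,j}$ arise as subset-sums of $\mathcal{B}$ produces no usable $\mathcal{F}$ (the span of the summands has the wrong dimension), and a ``finite verification organized by the multiset of ranks'' over bases of a $9$-dimensional $\mathbb{F}_2$-space is not a proof as it stands.

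The missing idea, which is the heart of the paper's argument, is a partner-selection criterion combined with an orbit count. Normalize the rank-$2$ element to $J=E_{0,0}+E_{1,1}$ (Corollary~\ref{prop:stabmat}); since the normalized $\mathcal{B}$ is still a basis and the matrices with vanishing $(2,2)$-entry form a hyperplane, some \emph{other} basis element $B$ has nonzero $(2,2)$-entry, and nonvanishing of this entry is preserved by the residual stabilizer $H=\{\sigma\in\stb(\VecSp{T}_{3,2,3})\ :\ J\circ\sigma=J\}$. One then shows that the matrices of rank at most $2$ with nonzero $(2,2)$-entry form exactly three $H$-orbits, represented by $E_{2,2}$, $E_{0,1}+E_{2,2}$ and $E_{1,1}+E_{2,2}$; hence \emph{any} such $B$ works, and $\{J,B\}$ lands in one of the three listed classes. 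This criterion is precisely what dissolves your worry about the profile $(1,2,2)$: since $\det(J+uv^{\mathrm{T}})=u_2v_2$, a rank-one matrix with nonzero $(2,2)$-entry added to $J$ is invertible, which rules out both ways a $(1,2,2)$ profile could arise from a partner with nonzero $(2,2)$-entry. It also shows why the partner must be chosen carefully and why profiles alone mislead: $\spn(\{E_{0,0}+E_{1,1},E_{0,1}+E_{1,2}\})$ has profile $(2,2,2)$, yet all its elements have zero third row, so it is \emph{not} equivalent to $\spn(\{\Phi_{0,0}+\Phi_{1,1},\Phi_{1,1}+\Phi_{2,2}\})$ --- and, consistently, its second generator has zero $(2,2)$-entry. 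Without the coordinate criterion and the three-orbit classification, your case analysis does not close.
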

\begin{proof}
Let $\mathcal{B}$ be a basis of $\VecSp{T}_{3,2,3}$.
\begin{itemize}
	\item 
		If there exists an element $\Phi$ of rank $6$ in $\mathcal{B}$, then,
		according to Corollary~\ref{prop:stabmat}, there exists $\sigma \in \stb(\VecSp{T}_{3,2,3})$
		such that $\Phi_{0,0}+\Phi_{1,1}+\Phi_{2,2} \in \mathcal{B}\circ \sigma$.
		Otherwise, any element $\Phi$ of $\mathcal{B}$ has rank smaller or equal to $4$ and we have to distinguish two cases.
	\item If there exists an element $\Phi$ of rank $4$, there exists $\sigma$ such that
		$\Phi_{0,0}+\Phi_{1,1}\in \mathcal{B}\circ \sigma$
		and, consequently, there exists another element $\Phi' \in \mathcal{B}$ of rank $2$ or $4$ whose coordinate
		over $\Phi_{2,2}$ in the basis $(\Phi_{i,j})_{i,j}$ is nonzero:
		we need to look at the possible orbits in which $\Phi'$ is included under the action of the subgroup
		of $\stb(\VecSp{T}_{3,2,3})$ preserving the fact that $\Phi$ is in the orbit of $\Phi_{0,0}+\Phi_{1,1}$.
		We can prove that there exist $3$ such orbits and that there exists $\sigma \in \stb(\VecSp{T}_{3,2,3})$
		and $\mathcal{F}\subset \mathcal{B}$ of cardinality $2$
		such that
		$$\mathcal{F}\circ \sigma = \begin{cases}\{\Phi_{0,0}+\Phi_{1,1},\Phi_{0,1}+\Phi_{2,2}\}\\
								\hfil \text{or}\\
                                                           \{\Phi_{0,0}+\Phi_{1,1},\Phi_{1,1}+\Phi_{2,2}\}\\
								\hfil \text{or}\\
							      \{\Phi_{0,0}+\Phi_{1,1},\Phi_{2,2}\}.\\
		\end{cases}$$
	\item Otherwise, all the elements of $\mathcal{B}$ have rank $2$ and there exists $\mathcal{F}\subset \mathcal{B}$ and $\sigma
		\in \stb(\VecSp{T}_{3,2,3})$ such that
		$$\mathcal{F}\circ \sigma = \{\Phi_{0,0},\Phi_{1,1},\Phi_{2,2}\}.$$
\end{itemize}
\end{proof}

\section{How to compute subspaces containing specific bilinear forms}
\label{sec:computetopo}
We propose in this section a method for computing a covering of
${\CCover{r}{\VecSp{T}}}$,
where $\VecSp{T}$ is a target space of dimension $\ell$. The covering is a set of subspaces containing a specific set
of bilinear forms described as in Section~\ref{subsec:ideaalgo} or~\ref{sec:apptoprods}.
More specifically, we are interested in computing sets defined as $\QCover{r-\ell+k}{\Psi_0,\ldots,\Psi_{k-1}}$,
for $\Psi_0,\ldots,\Psi_{k-1}$ bilinear forms of $\Bl{m}{n}$. Those can be described as sets of subspaces of rank $r-\ell+k$
containing a prescribed set $\left\{\Psi_0,\ldots,\Psi_{k-1}\right\}$ of bilinear forms, up to the action of $\stb(\{\Psi_0,\ldots,\Psi_{k-1}\})$.
\subsection{General approach}
\label{sec:computetopoone}
First, our strategy consists in precomputing the quotient
$\quotient{\AL{m,n,r-\ell+k}}{\GLGL{m}{n}}$.
This quotient is smaller than $\AL{m,n,r-\ell+k}$ by construction.
We explain how to compute it in Section~\ref{sec:combobj}.

Algorithm~\ref{alg:computq} explains how we compute the quotient $\mathcal{Q}$ in
Algorithm~\myalgorithm{CoveringSetsMethod}.

\begin{algorithm}
	\caption{\myalgorithm{IntermediateSetViaQuotientComputation}}
	\label{alg:computq}
	\begin{algorithmic}[1]
		\Require{$\quotient{\AL{m,n,r-\ell+k}}{\GLGL{m}{n}},\{\Psi_0,\ldots,\Psi_{k-1}\} = 
		\mathcal{F}$}
		\Ensure{$\mathcal{Q}$ a sest of representatives per orbit of $\QQCover{r-\ell+k}{\mathcal{F}}$}
		\State $\mathcal{Q} \leftarrow \emptyset$
		\For{$\VecSp{W} \in \quotient{\AL{m,n,r-\ell+k}}{\GLGL{m}{n}}$}
		\label{line:typeofquot}
			\For{$\quotient{\left\{\{\Phi_0,\ldots,\Phi_{k-1}\} \subset\VecSp{W}\ |\ \forall t,\ \rk(\Phi_t) = \rk(\Psi_t)\right\}}{\stb(\VecSp{W})}$} \label{line:quotientofkelements}
				\If{$\exists \sigma \in \GLGL{m}{n},\ \{\Phi_0,\ldots,\Phi_{k-1}\} \circ \sigma =\{\Psi_0,\ldots,\Psi_{k-1}\}$} \label{line:findsigma}
					\State $\mathcal{Q} \leftarrow \mathcal{Q}\cup \{\VecSp{W}\circ \sigma\}$
				\EndIf
			\EndFor
		\EndFor
		\State \Return $\mathcal{Q}$
	\end{algorithmic}
\end{algorithm}
\begin{proof}[Correctness of Algorithm~\ref{alg:computq}]
	By construction, according to Line~\ref{line:findsigma}, any element of $\mathcal{Q}$ is an element of
	$$\CCover{r-\ell+k}{\{\Psi_0,\ldots,\Psi_{k-1}\}}.$$

	\begin{itemize}
		\item First, we prove that {any orbit of $\QCover{r-\ell+k}{\Psi_0,\ldots,\Psi_{k-1}}$
			has a representative in $\mathcal{Q}$}.

	Let $\VecSp{W}'$ be a representative of an orbit in 
	$\QCover{r-\ell+k}{\Psi_0,\ldots,\Psi_{k-1}}.$
	There exist $\sigma \in \GLGL{m}{n}$ and
	$\VecSp{W}$ a representative of an element of $\quotient{\AL{m,n,r-\ell+k}}{\GLGL{m}{n}}$ such that
	$\VecSp{W}\circ\sigma = \VecSp{W}'$.
	Thus, we have $\{\Psi_0,\ldots,\Psi_{k-1}\}\circ \sigma^{-1} \subset \VecSp{W}$
	and the set $$\left\{\Psi_0 \circ \sigma^{-1},\ldots,\Psi_{k-1}\circ \sigma^{-1} \right\}$$
	satisfies the predicate	on Line~\ref{line:findsigma}.
	Any $\sigma'$ such that $ \{\Psi_0,\ldots,\Psi_{k-1}\}\circ \sigma^{-1}\circ \sigma' = \{\Psi_0,\ldots,\Psi_{k-1}\}$ satisfies
	$$\sigma' \in \sigma\circ \stb(\{\Psi_0,\ldots,\Psi_{k-1}\}),$$
	which means that an element of $\VecSp{W}\circ \sigma\circ\stb(\{\Psi_0,\ldots,\Psi_{k-1}\}) = \VecSp{W}' \circ \stb(\{\Psi_0,\ldots,\Psi_{k-1}\})$
	is included in the list returned by Algorithm~\ref{alg:computq}.
	Thus, the list returned contains at least one representative per orbit of $\QQCover{r-\ell+k}{\{\Psi_0,\ldots,\Psi_{k-1}\}}$.
		\item In the following, we prove that each orbit of $\QCover{r-\ell+k}{\Psi_0,\ldots,\Psi_{k-1}}$ has a {unique}
	representative in $\mathcal{Q}$.

	Assume that there exist $\VecSp{W},\VecSp{W}'\in \mathcal{Q}$ and $\gamma \in \stb(\{\Psi_0,\ldots,\Psi_{k-1}\})$ such that $\VecSp{W} = \VecSp{W}' \circ \gamma$.
By construction, there exists $\VecSp{W}_0,\VecSp{W}'_0 \in \AL{r-\ell+k}$ and $\sigma,\sigma' \in \GLGL{m}{n}$
	such that $\VecSp{W} = \VecSp{W}_0 \circ \sigma$ and $\VecSp{W}' = \VecSp{W}'_0
	\circ \sigma'$. Then $\VecSp{W}'_0 = \VecSp{W}_0 \circ \sigma \circ \gamma^{-1} \circ
	\sigma'^{-1}$, whence $\VecSp{W}'_0 = \VecSp{W}_0$ as
	on Line~\ref{line:typeofquot} of Algorithm~\ref{alg:computq} we enumerate only one representative
	of each orbit of $\quotient{\AL{r-\ell+k}}{\GLGL{m}{n}}$.
	Thus, $\sigma \circ \gamma^{-1} \circ 
	\sigma'^{-1} \in \stb(\VecSp{W}_0)$.

	Still by construction, there exists $\{\Phi_0,\ldots,\Phi_{k-1}\}$ and 
	$\{\Phi'_0,\ldots,\Phi'_{k-1}\} \subset \VecSp{W}_0$ such that
	$$\{\Phi_0,\ldots,\Phi_{k-1}\} \circ \sigma = \{\Psi_0,\ldots,\Psi_{k-1}\}$$
	and
	$$\{\Phi'_0,\ldots,\Phi'_{k-1}\} \circ \sigma' = \{\Psi_0,\ldots,\Psi_{k-1}\}.$$
	Then,
	\begin{equation*}
	\begin{split}
	\{\Phi'_0,\ldots,\Phi'_{k-1}\} &= \{\Psi_0,\ldots,\Psi_{k-1}\} \circ \sigma'^{-1}
	= \{\Psi_0,\ldots,\Psi_{k-1}\} \circ \gamma^{-1} \circ \sigma'^{-1}
	= \{\Phi_0,\ldots,\Phi_{k-1}\} \circ \sigma \circ\gamma^{-1} \circ \sigma'^{-1}
	\end{split}
	\end{equation*}
	and $\{\Phi_0,\ldots,\Phi_{k-1}\}$ is in the same orbit as
	$\{\Phi'_0,\ldots,\Phi'_{k-1}\}$ under the action of $\stb(\VecSp{W}_0)$,
	which is contradictory with the definition of the quotient
	on Line~\ref{line:quotientofkelements}.
	\end{itemize}
\end{proof}
Testing the predicate on Line~\ref{line:findsigma} is a problem generalizing the problem of~\cite[Ch.\ 19]{Brgisser:2010:ACT:1965416}
and~\cite{doi:10.1137/0208037}:
given two pairs $(M_0,M_1)$ and $(N_0,N_1)$ of $(\mathcal{M}_{m,n})^2$, determine whether
there exists two invertible matrices $X$ and $Y$ such that $(\Trans{X} M_0 Y,\Trans{X} M_1 Y)= (N_0,N_1)$,
which is done by computing a Weierstrass--Kronecker canonical form for $(M_0,M_1)$.
When we consider more than two matrices, for example three matrices $(M_0,M_1,M_2)$ mapped to $(N_0,N_1,N_2)$,
we compute $(X,Y)$ such that $(M_0,M_1)$ is mapped to $(N_0,N_1)$ and we compose it with elements of
$\quotient{\stb{M_0}\cap \stb{M_1}}{\stb{M_2}}$, computed with the algorithms proposed in~\cite{opac-b1102239} for example.
The complexity for finding all the $\RP$-automorphisms $\sigma$ in \myalgorithm{IntermediateSetViaQuotientComputation}
is bounded by the cardinality of $\AL{r-\ell+k}$ (which is comparable
to~\myalgorithm{BDEZ}) by construction,
and is hard to estimate more precisely. In our applications, it appears to be negligible
compared to~\myalgorithm{BDEZ}.

\subsection{Application to the short product}
We come back to the example given in Section~\ref{subsec:ideaalgo} corresponding to the short product.
We recall that $\VecSp{T}$ is the subspace obtained from the bilinear map
given by the short product modulo $\ell$ and that we need to compute the set
$\mathcal{Q}=\QCover{r-\ell+2}{\Phi_{0},\Phi_{1}}$
for a given integer $r$.

If we take $\ell=3$, we can represent $\Phi_0$ and $\Phi_1$ by the matrices
$$I = \begin{pmatrix} 1 & 0 & 0\\ 0& 1 & 0 \\ 0 & 0 & 1\end{pmatrix}\ \text{and}\ N = \begin{pmatrix} 0 & 1 & 0\\ 0& 0 & 1 \\ 0 & 0 & 0\end{pmatrix}.$$
Thus, for a given couple $(M_0,M_1)$ of matrices representing bilinear forms of a subspace
$\VecSp{W} \in\quotient{\AL{r-\ell+2}}{\GLGL{\ell}{\ell}}$ , we are looking for invertible matrices $X$ and $Y$ such that
$$\Trans{X} M_0 Y = I\ \text{and}\ \Trans{X}M_1 Y = N,$$
which is done in Algorithm~\ref{alg:computqshort}. As it is precised on Line~\ref{line:gaussredshort} of Algorithm~\ref{alg:computqshort}, we find $X$ and $Y$ such that
$\Trans{X} M_0 Y = I$ via Gauss reduction. Then, we need to check whether $\Trans{X}M_1 Y$ and $N$ are similar or not
($(\Trans{X}M_1 Y)^\ell$ should be the null matrix for this purpose), as done on Line~\ref{line:similarshort} of Algorithm~\ref{alg:computqshort}.

\begin{algorithm}
	\caption{\myalgorithm{IntermediateSetViaQuotientComputation} (Short product)}
	\label{alg:computqshort}
	\begin{algorithmic}[1]
		\Require{$\quotient{\AL{r-\ell+2}}{\GLGL{\ell}{\ell}}$}
		\Ensure{One representative per orbit of $\mathcal{Q}$, defined as above}
		\State $\mathcal{Q} \leftarrow \emptyset$
		\For{$\VecSp{W} \in \quotient{\AL{\ell,\ell,r-\ell+2}}{\GLGL{\ell}{\ell}}$}
			\For{$\Psi \in \quotient{\{\Phi \in \VecSp{W}\ |\ \rk(\Phi) = \ell\}}{\stb(\VecSp{W})}$}
				\State Let $\sigma$ such that $\Psi \circ \sigma = I$ \Comment We obtain $\sigma$ via a Gauss reduction
				\label{line:gaussredshort}
				\State $\VecSp{W}' \leftarrow \VecSp{W} \circ \sigma$
				\For{$\Psi' \in \quotient{\{ \Phi \in \VecSp{W}'\ |\ \rk(\Phi) = \ell-1\}}{\stb(\VecSp{W}')\cap \stb(I)}$}
					\If{$\exists \sigma' \in \stb(I),\ \Psi'\circ \sigma' = N$}
					\Comment Using that $N$ and $\Psi'$ are similar \label{line:similarshort}
						\State $\mathcal{Q} \leftarrow \mathcal{Q}\cup \{\VecSp{W}\circ \sigma\circ \sigma'\}$
					\EndIf
				\EndFor
			\EndFor
		\EndFor
		\State \Return $\mathcal{Q}$
	\end{algorithmic}
\end{algorithm}

Once we have computed $\mathcal{Q}$, it remains to compute the left transversal
$$\mathcal{L} = \quotient{\stb(\{I,N\})}{\stb(\VecSp{T})\cap\stb(\{I,N\})}$$
and to compute $\mathcal{Q} \circ \mathcal{L}$.
According to Theorem~\ref{thm:stabshortprod}, we have $\#\mathcal{L} = 1$, which means that
Algorithm~\ref{alg:computqshort} actually returns $\QCover{r-\ell+2}{I,N}\circ \mathcal{L}$.

In terms of complexity, we do not have explicit bounds. However, we can state that the complexity depends linearly
on $\#\quotient{\AL{r-\ell+2}}{\GLGL{\ell}{\ell}}$ and on the number of pairs of bilinear forms $(\Phi,\Psi)$ per element of
$\quotient{\AL{r-\ell+2}}{\GLGL{\ell}{\ell}}$ such that $\rk(\Phi) = \ell$ and $\rk(\Psi) = \ell-1$.
\subsection{Computing the orbits of vector spaces of bilinear forms}
\label{sec:combobj}
In this section, we propose an approach for computing the set
$\quotient{\AL{m,n,d}}{\GLGL{m}{n}}$,
required by the algorithm described in Section~\ref{sec:computetopoone}.
Its cost is at least exponential in $d$, $m$ and $n$ and difficult to estimate.

\begin{notation}
	\label{ref:omega}
	We denote by ${\Omega}_d$ the quotient $\quotient{\AL{d,d,d}}{\GLGL{d}{d}}$
	for any $d\geq 1$.
\end{notation}

First, we describe how we represent elements of $\AL{m,n,d}$ and we prove that given the knowledge of $\Omega_d$
we can deduce the elements of
$\quotient{\AL{m,n,d}}{\GLGL{m}{n}}$
for any $m$ and $n$ from this precomputation.

Let $\VecSp{W}$ be an element of $\AL{m,n,d}$. There exist $d$ rank-one bilinear forms $\phi_t : (\mathbf{a},\mathbf{b}) \mapsto
\alpha_t(\mathbf{a})\cdot \beta_t(\mathbf{b})$
such that $\VecSp{W} = \spn \left(\{\phi_i\}_{i \in \closedinterval{0}{d-1}}\right)$.
In the canonical basis of $K^m$ and $K^n$, we represent $\alpha_t$ and $\beta_t$ as matrices
of $\mathcal{M}_{1,m}$ and $\mathcal{M}_{1,n}$.
Thus, there exist two matrices $U\in \mathcal{M}_{d,m}$ and $V\in \mathcal{M}_{d,n}$, whose rows are given by the linear forms
$\alpha_t$ and $\beta_t$ respectively, and $\VecSp{W}$ can be represented by the pair $(U,V)$. Such a
representation is not unique (for example, any permutation of the rows of $(U,V)$ gives a valid representation).
In particular, for a pair of matrices $(U,V)$ representing some vector space $\VecSp{W}$,
there exists $\sigma = \mu\times \nu$ in $\GL(K^m) \times \GL(K^n)$
such that the pair of matrices $U',V'$, such that $(U',V') = (U\circ \mu,V\circ \nu)$
represents $ \VecSp{W} \circ \sigma$, are the reduced column echelon form of the matrices $U$ and $V$, respectively.

\begin{example}
	Let us consider the vector space $\VecSp{W}$ of $\AL{3,4,6}$ generated by the
	rank-one bilinear forms represented by
	\begin{gather*}
	\begin{split}
	M_1 = \begin{pmatrix}
	1 & 1 & 0 &0 \\
	0 & 0 & 0 &0 \\
	0 & 0 & 0 &0 \\
	\end{pmatrix},
	M_2 = \begin{pmatrix}
	0 & 0 & 0 &0 \\
	0 & 0 & 0 &0 \\
	1 & 0 & 0 &0 \\
	\end{pmatrix},
	M_3 = \begin{pmatrix}
	0 & 0 & 0 &0 \\
	1 & 0 & 0 &0 \\
	0 & 0 & 0 &0 \\
	\end{pmatrix},\\
	M_4 = \begin{pmatrix}
	0 & 0 & 0 &0 \\
	0 & 1 & 0 &0 \\
	0 & 0 & 0 &0 \\
	\end{pmatrix},
	M_5 = \begin{pmatrix}
	0 & 0 & 0 &0 \\
	0 & 0 & 1 &0 \\
	0 & 0 & 0 &0 \\
	\end{pmatrix},
	M_6 = \begin{pmatrix}
	0 & 0 & 0 &0 \\
	0 & 0 & 0 &1 \\
	0 & 0 & 0 &0 \\
	\end{pmatrix}.
	\end{split}
	\end{gather*}
	The pair of matrices $(U,V)$ associated to $\VecSp{W}$ is
	$$
	\begin{pmatrix}
		1 & 0 & 0 \\
		0 & 0 & 1 \\
		0 & 1 & 0 \\
		0 & 1 & 0 \\
		0 & 1 & 0 \\
		0 & 1 & 0 \\
	\end{pmatrix}
	,
	\begin{pmatrix}
		1 & 1 & 0 & 0\\
		1 & 0 & 0 & 0\\
		1 & 0 & 0 & 0\\
		0 & 1 & 0 & 0\\
		0 & 0 & 1 & 0\\
		0 & 0 & 0 & 1\\
	\end{pmatrix}
	.$$
\end{example}


%

Assuming that we have a representation of the elements of
$\quotient{\AL{d,d,d}}{\GLGL{d}{d}}$ in terms of pairs of matrices $(U,V)\in \mathcal{M}_{d,d}\times \mathcal{M}_{d,d}$
in reduced column echelon form,
we obtain all the elements of
$$\quotient{\AL{m,n,d}}{\GLGL{m}{n}}$$
by considering the subset $\Omega'_d$ of $\Omega_d = \quotient{\AL{d,d,d}}{\GLGL{d}{d}}$ of elements represented by
matrices $(U,V)$ in reduced column echelon form such that $\rk(U) \leq  \min(m,d)$ and $\rk(V) \leq \min(n,d)$.
Given $m$ and $n$,
a set of representatives for
$$\quotient{\AL{m,n,d}}{\GLGL{m}{n}}$$
can be seen as matrices $(U',V')\in \mathcal{M}_{d,m}\times \mathcal{M}_{d,n}$ in reduced column echelon form
and for which there exists matrices $(U,V) \in \mathcal{M}_{d,d}^2$, representing an element of
$\Omega'_d$, obtained by adding $d-m$ an $d-n$
zero columns to $U'$ and $V'$, respectively, or by removing zero columns if $d<m$ or $d<n$.\\\smallskip

Our strategy consists in deducing $\Omega_d$ from the computation of $\Omega_{d-1}$.
Algorithm~\ref{alg:compomega} describes this strategy: for each
vector space $\VecSp{W}$ of $\Omega_{d-1}$, we extend it to a vector space of
$\Bl{d}{d}$ by padding with zeros, and we consider the vector spaces $\VecSp{W}\oplus \spn(\{\phi\})$
that can be obtained by adding an element $\phi$ of rank one.
We remove from the set of $\VecSp{W}\oplus \spn(\{\phi\})$ the vector spaces that are isomorphic via
an isomorphism test. We determine whether two vector spaces $\VecSp{W}'$ and $\VecSp{W}$
are isomorphic if there exists a basis of $\VecSp{W}'$ of rank-one bilinear forms
such that the corresponding couple of matrices $(U',V')$ in reduced column echelon form is equal to $(U,V)$.
The complexity of this approach depends on the number of bases of rank-one bilinear forms of $\VecSp{W}$,
which, compared to $d$, is not large generically.
However, there are degenerate cases for which the nomber of bases is very large (exponential in $d^2$).
These cases require specific code to recognize them and to treat them separately.

\begin{algorithm}[H]
		\caption{\myalgorithm{IterativeQuotientsComputation}}
	\label{alg:compomega}
	\begin{algorithmic}[1]
	\Require{$\Omega_{d-1}$, a set $\mathcal{G}$ of rank-one bilinear forms}
	\Ensure{$\Omega_d$}
	\State $\widehat{\Omega}_{d-1} \leftarrow \myalgorithm{Extend}(\Omega_{d-1})$
	\Comment We compute extensions of elements of $\Omega_{d-1}$ in $\Bl{d}{d}$
	\State $\mathcal{L} \leftarrow \emptyset$
	\For{$\VecSp{W} \in \widehat{\Omega}_{d-1}$}
		\State $\mathcal{H} \leftarrow \quotient{\mathcal{G}}{\stb \VecSp{W}}$
		\For{$h \in \mathcal{H}$}
			\State $\mathcal{L} \leftarrow \mathcal{L} \cup \{\VecSp{W} \oplus \spn(\{h\})\}$
		\EndFor
	\EndFor
	\State \label{iterquotcompt:return} \Return $\quotient{\mathcal{L}}{\GLGL{d}{d}}$
		\Comment We remove isomorphic vector spaces in $\mathcal{L}$ 
\end{algorithmic}
\end{algorithm}


The naive algorithm which checks for each pair of elements
of the set $\mathcal{L}$ whether or not they are isomorphic, computed in Line~\ref{iterquotcompt:return} of
Algorithm~\ref{alg:compomega},
can be improved. Indeed, we propose to compute invariants for the group action induced by $\GLGL{d}{d}$
and to compare subspaces having the same invariants.
For example,
for $\VecSp{W} \in \AL{d,d,d}$,
we consider the polynomial
$P_{\VecSp{W}} = \sum_{0\leq t \leq d} p_t(\VecSp{W}) X^t$
such that
$$\forall t\geq 0,\ p_t(\VecSp{W}) = \#\{\phi \in \VecSp{W} |\ \rk (\phi) = t \}.$$
Therefore,
for any $\sigma \in \GLGL{d}{d}$, $P_{\VecSp{W} \circ \sigma} = P_{\VecSp{W}}$.

We have been able to compute $\Omega_d$ for $d \in \closedinterval{1}{8}$ and $K= \mathbb{F}_2$
with an implementation in Magma~V2.21-3~\cite{MR1484478}\footnote{\label{first}The code of this implementation
can be found at the address \href{http://karancode.gforge.inria.fr}{http://karancode.gforge.inria.fr}}.
The timings are described in Table~\ref{tab:timeomegad}.

\begin{table}[H]
	\begin{center}
	\begin{tabular}{|c|c|c|c|c|c|c|c|c|}
	\hline
	set & $\Omega_1$ & $\Omega_2$ & $\Omega_3$ & $\Omega_4$ & $\Omega_5$ & $\Omega_6$ & $\Omega_7$ & $\Omega_8$ \\
	\hline
	cardinality & $1$ & $3$ & $9$ & $31$ & $141$ & $969$ & $11,289$ & $265,577$ \\
	\hline
	upper bound & $1$ & $9$ & $4.4 \cdot 10^2$ & $9.9 \cdot 10^{4}$ & $9.5 \cdot 10^{7}$ & $3.8 \cdot 10^{11}$ & $6.1 \cdot 10^{15}$ & $4.0 \cdot 10^{20}$ \\
	\hline
	time (s) & $0$ & $4.0\cdot 10^{-2}$ & $6.0\cdot 10^{-2}$ & $1.8\cdot 10^{-1}$ & $1.5$ & $1.8\cdot 10$ &
	$4.7 \cdot 10^2$ & $1.8\cdot 10^4$ \\
	\hline
	\end{tabular}
	\end{center}
	\caption{Timings for our approach to compute the sets $\Omega_d$ over $K = \mathbb{F}_2$ on
	a single core of a 3.3 GHz Intel Core i5-4590.}
	\label{tab:timeomegad}
\end{table}

It would be interesting to obtain an upper bound on $\#\Omega_d$ with the good order of
magnitude. Indeed, we are able to say for instance that $\#\Omega_d$ is bounded by the quantity
$$\#\Omega_d \leq \left(\#K^d-1\right)^2\cdot \#\Omega_{d-1},$$
corresponding to the number of possible rank-one bilinear forms that we add to elements of $\Omega_{d-1}$
to obtain an element of $\Omega_d$. This formula leads recursively to the following bound:
$$\#\Omega_d \leq \left(\prod_{t\in \closedinterval{1}{d}}(\#K^t-1)\right)^2.$$
However, this upper bound differs by a huge factor from the true cardinality of $\Omega_d$ and
cannot consequently be used in a complexity analysis.
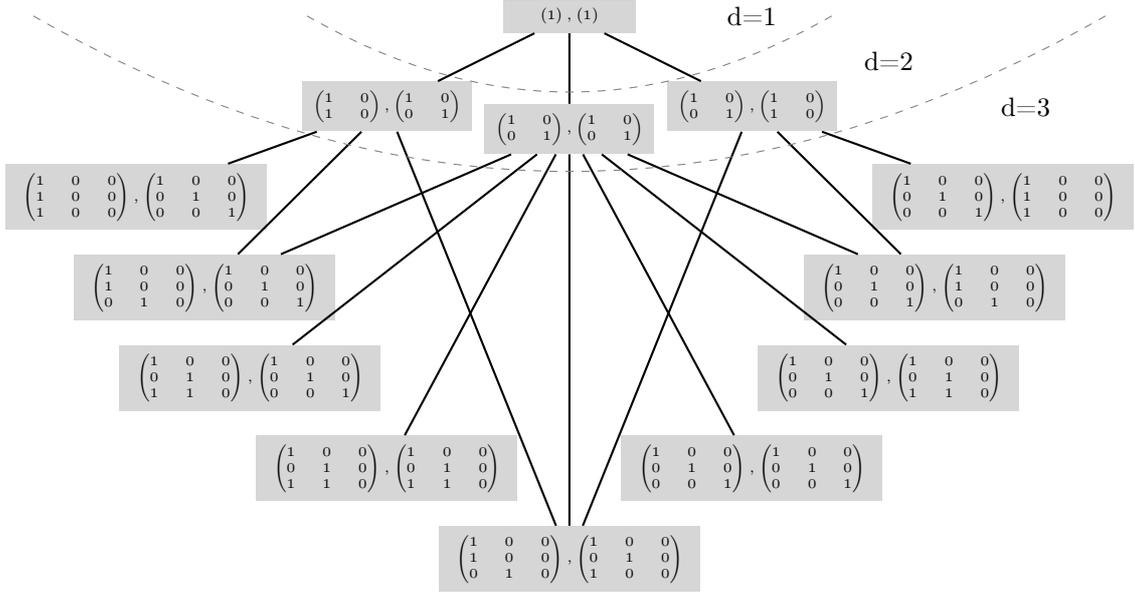
\begin{figure}
	\centering
\begin{tikzpicture}[scale=0.60, font=\tiny]
\node[text width=1.5cm,text centered,rectangle,fill=black!20!lightgray!40](M1) at (0.0,0.0)
{$\begin{pmatrix}
1
\end{pmatrix},
\begin{pmatrix}
1
\end{pmatrix}
$
};
\node[text width=1.5cm,text centered](D1) at (4.0,0.0)
	{\normalsize{d=1}};
\node[text width=1.5cm,text centered](D2) at (7.0,-1.0)
	{\normalsize{d=2}};
\node[text width=1.5cm,text centered](D3) at (10.0,-2.0)
	{\normalsize{d=3}};
\node[text width=2cm,text centered,text centered,rectangle,fill=black!20!lightgray!40] (M21) at (-4.0,-2.0){
$\begin{pmatrix}
	1 & 0\\
	1 & 0\\
\end{pmatrix},
\begin{pmatrix}
	1 & 0\\
	0 & 1\\
\end{pmatrix}
$
};
\node[text width=2cm,text centered,rectangle,fill=black!20!lightgray!40] (M22) at (0.0,-2.5){
$\begin{pmatrix}
	1 & 0\\
	0 & 1\\
\end{pmatrix},
\begin{pmatrix}
	1 & 0\\
	0 & 1\\
\end{pmatrix}
$
};
\node[text width=2cm,text centered,rectangle,fill=black!20!lightgray!40] (M23) at (4.0,-2.0){
$\begin{pmatrix}
	1 & 0\\
	0 & 1\\
\end{pmatrix},
\begin{pmatrix}
	1 & 0\\
	1 & 0\\
\end{pmatrix}
$
};
\node[text width=3.2cm,text centered,rectangle,fill=black!20!lightgray!40] (M31) at (-9.5,-4.0){
$\begin{pmatrix}
	1 & 0 & 0\\
	1 & 0 & 0\\
	1 & 0 & 0\\
\end{pmatrix},
\begin{pmatrix}
	1 & 0 & 0\\
	0 & 1 & 0\\
	0 & 0 & 1\\
\end{pmatrix}
$
};
\node[text width=3.2cm,text centered,rectangle,fill=black!20!lightgray!40] (M32) at (4.0,-10.0){
$\begin{pmatrix}
	1 & 0 & 0\\
	0 & 1 & 0\\
	0 & 0 & 1\\
\end{pmatrix},
\begin{pmatrix}
	1 & 0 & 0\\
	0 & 1 & 0\\
	0 & 0 & 1\\
\end{pmatrix}
$
};

\node[text width=3.2cm,text centered,rectangle,fill=black!20!lightgray!40] (M33) at (9.5,-4.0){
$\begin{pmatrix}
	1 & 0 & 0\\
	0 & 1 & 0\\
	0 & 0 & 1\\
\end{pmatrix},
\begin{pmatrix}
	1 & 0 & 0\\
	1 & 0 & 0\\
	1 & 0 & 0\\
\end{pmatrix}
$
};
\node[text width=3.2cm,text centered,rectangle,fill=black!20!lightgray!40] (M34) at (-8.0,-6.0){
$\begin{pmatrix}
	1 & 0 & 0\\
	1 & 0 & 0\\
	0 & 1 & 0\\
\end{pmatrix},
\begin{pmatrix}
	1 & 0 & 0\\
	0 & 1 & 0\\
	0 & 0 & 1\\
\end{pmatrix}
$
};
\node[text width=3.2cm,text centered,rectangle,fill=black!20!lightgray!40] (M35) at (0.0,-12.0){
$\begin{pmatrix}
	1 & 0 & 0\\
	1 & 0 & 0\\
	0 & 1 & 0\\
\end{pmatrix},
\begin{pmatrix}
	1 & 0 & 0\\
	0 & 1 & 0\\
	1 & 0 & 0\\
\end{pmatrix}
$
};
\node[text width=3.2cm,text centered,rectangle,fill=black!20!lightgray!40] (M36) at (8.0,-6.0){
$\begin{pmatrix}
	1 & 0 & 0\\
	0 & 1 & 0\\
	0 & 0 & 1\\
\end{pmatrix},
\begin{pmatrix}
	1 & 0 & 0\\
	1 & 0 & 0\\
	0 & 1 & 0\\
\end{pmatrix}
$
};
\node[text width=3.2cm,text centered,rectangle,fill=black!20!lightgray!40] (M37) at (-7.0,-8.0){
$\begin{pmatrix}
	1 & 0 & 0\\
	0 & 1 & 0\\
	1 & 1 & 0\\
\end{pmatrix},
\begin{pmatrix}
	1 & 0 & 0\\
	0 & 1 & 0\\
	0 & 0 & 1\\
\end{pmatrix}
$
};
\node[text width=3.2cm,text centered,rectangle,fill=black!20!lightgray!40] (M38) at (-4.0,-10.0){
$\begin{pmatrix}
	1 & 0 & 0\\
	0 & 1 & 0\\
	1 & 1 & 0\\
\end{pmatrix},
\begin{pmatrix}
	1 & 0 & 0\\
	0 & 1 & 0\\
	1 & 1 & 0\\
\end{pmatrix}
$
};
\node[text width=3.2cm,text centered,rectangle,fill=black!20!lightgray!40] (M39) at (7.0,-8.0){
$\begin{pmatrix}
	1 & 0 & 0\\
	0 & 1 & 0\\
	0 & 0 & 1\\
\end{pmatrix},
\begin{pmatrix}
	1 & 0 & 0\\
	0 & 1 & 0\\
	1 & 1 & 0\\
\end{pmatrix}
$
};
\draw[thick] (M1) -- (M21);
\draw[thick] (M1) -- (M22);
\draw[thick] (M1) -- (M23);
\draw[thick] (M21) -- (M31);
\draw[thick] (M21) -- (M34);
\draw[thick] (M21) -- (M35);
\draw[thick] (M23) -- (M33);
\draw[thick] (M23) -- (M36);
\draw[thick] (M23) -- (M35);
\draw[thick] (M22) -- (M32);
\draw[thick] (M22) -- (M37);
\draw[thick] (M22) -- (M38);
\draw[thick] (M22) -- (M39);
\draw[thick] (M22) -- (M34);
\draw[thick] (M22) -- (M36);
\draw[thick] (M22) -- (M35);
	\draw[dashed,gray] (-5.75,0) to[bend right] (5.75,0);
	\draw[dashed,gray] (-11.75,0) to[bend right] (11.75,0);
\end{tikzpicture}
	\caption{Partially ordered structure of the $\Omega_d$ for $d\leq 3$ and $K= \mathbb{F}_2$.}
	\label{fig:poset}
\end{figure}

To conclude, we show in Figure~\ref{fig:poset} how the subspaces of $\Omega_3$ over $\mathbb{F}_2$
are related to $\Omega_2$ and $\Omega_1$
by using its partially ordered set
structure. Each element of $\Omega_d$ is represented by the corresponding  couple of
matrices $(U,V)$ of $\mathcal{M}_{d,d}^2$.

\section{Experimental results}
\label{sec:expresults}
An implementation in Magma~V2.21-3~\cite{MR1484478} of the algorithms presented in
the previous sections has been done\footref{first}. We compare in this section the timings
obtained from various instances of the bilinear rank problem for these different algorithms.
Our Magma implementation of the algorithm described in~\cite{Barbulescu2012} is clearly slower than the original C version.
However, since we are interested in the speed-up obtained from our work, we need 
a fair approach. We show in particular that Algorithm~\myalgorithm{BDEZStab},
although it is neither multithreaded nor written in C, improves
considerably on the timings estimated in~\cite{Barbulescu2012}.
The new algorithm proposed in the current article is denoted by \myalgorithm{CoveringSetsMethod}:
compared to Algorithm~\myalgorithm{BDEZStab}, it constitutes a huge speed-up on particular
instances of the bilinear rank problem among which the matrix product, discussed in Sections~\ref{exp:matprod323}
and~\ref{exp:matprod}, and the short product,
discussed in Section~\ref{exp:shortprod}.
All the timings presented in this section have been done on a single core of a 3.3 GHz Intel Core i5-4590 processor.
\subsection{Recursive approach}
We need a few notations to denote the various bilinear maps we are interested in:
\begin{itemize}
	\item $\myalgorithm{MatProduct}_{(p,q,r)}$ denotes the product of matrices $p\times q$ by $q\times r$,
	\item $\myalgorithm{ShortProduct}_{\ell}$ denotes the product of polynomials modulo $X^\ell$,
	\item $\myalgorithm{CirculantProduct}_{\ell}$ denotes the product of polynomials modulo $X^{\ell}-1$.
\end{itemize}

We give in Table~\ref{comptimerecapproach} timings for various bilinear maps
and for the implementations of~\myalgorithm{BDEZ} and~\myalgorithm{BDEZStab}. The number of tests represents the number of calls
to~\myalgorithm{HasRankOneBasis}.

It is possible to estimate the time it would take to obtain a result for a bilinear rank problem
out of reach for \myalgorithm{BDEZ} or~\myalgorithm{BDEZStab}.
We denote by $\mathcal{N}_t$ the number of calls to \myalgorithm{HasRankOneBasis} in
these algorithms when the input $r$ is equal to $\ell+t$.
($\ell$ is the dimension of the vector space $\VecSp{T}$ corresponding to the bilinear map).
Since when $r$ is too large, \myalgorithm{BDEZ} is too expensive, there is a practical limit on the known
values of $\mathcal{N}_{t}$, $t$ being a positive integer.
We consider the ratio $\lceil\frac{\mathcal{N}_{t}}
{\mathcal{N}_{t-1}}\rceil$ to estimate $\mathcal{N}_{t+1}$. Assuming that this ratio decreases with $t$, which seems to hold empirically, we have
\begin{equation}
	\label{eq:estimation}
	\mathcal{N}_{t+1} \leq \left\lceil\frac{\mathcal{N}_{t}}{\mathcal{N}_{t-1}}\right\rceil\cdot \mathcal{N}_{t},
\end{equation}
$t$ being a positive integer of $\closedinterval{1}{r-\ell}$.

Thus, we are able to predict timings for bilinear maps indicated in Table~\ref{comptimerecapproach} via
to this assumption, which allows us to compare Algorithm~\myalgorithm{BDEZ} to other approaches for problems
of larger sizes.
We estimate the number of tests by computing
$$\mathcal{N}_{t}\cdot\left\lceil\frac{\mathcal{N}_{t}}{\mathcal{N}_{t-1}}\right\rceil^{r-\ell-t}$$
where $r-\ell$ is the difference $\rk(\VecSp{T}) - \dim(\VecSp{T})$ for $\VecSp{T}$ representing a bilinear map
and $t$ is the largest integer for which we are able to compute $\mathcal{N}_{t}$.
The time can be estimated with a similar technique.
We observe that the speed-up seems to match with $\#\stb(\VecSp{T})$, as expected.
The estimated values in Table~\ref{comptimerecapproach} relying on~\myalgorithm{BDEZStab}
have not been effectively done because
the implementation of~\myalgorithm{CoveringSetsMethod} allowed us to obtain more results, more efficiently.
The estimations rely on the heuristic given by the Inequality~\ref{eq:estimation}.
In the global strategy, we increase progressively the lower bound $r$ on the rank,
before running \myalgorithm{BDEZ}, \myalgorithm{BDEZStab} or \myalgorithm{CoveringSetsMethod}. For $r < \rk(\VecSp{T})$, the time
spent in those algorithms is negligible, because of the exponential growth of their complexity.

\begin{table}
	\begin{center}
	\begin{tabular}{|c|c|c|c|c|}
	\hline
		bilinear map & rank & algorithm & nb. of tests  & time (s)\\
	\hline
	\hline
		\multirow{2}{*}{$\myalgorithm{MatProduct}_{(2,2,2)}$}& \multirow{2}{*}{$7$} & \myalgorithm{BDEZ}      & $1.05 \cdot 10^6$ & $8.5 \cdot 10$ \\
		\cline{3-5} &   &  \myalgorithm{BDEZStab}   & $6.8 \cdot 10^3$ & $5.0 \cdot 10^{-1}$  \\
		\hline
		\multirow{2}{*}{$\myalgorithm{MatProduct}_{(3,2,3)}$} & \multirow{2}{*}{$15$} & \myalgorithm{BDEZ} &$9.2 \cdot 10^{19}$ (est.) &  $1.1 \cdot 10^{17}$ (est.)\\
		\cline{3-5} & &  \myalgorithm{BDEZStab}& $2.6 \cdot 10^{13}$ (est.) & $3.4\cdot 10^{10}$ (est.)\\
		\cline{3-5} & &  \textbf{\myalgorithm{CoveringSetsMethod}}& $\mathbf{1.6\cdot 10^9}$ & $\mathbf{8.5\cdot 10^{5}}$ \textbf{}\\
		\hline
		\multirow{3}{*}{$\myalgorithm{MatProduct}_{(2,3,2)}$} & \multirow{3}{*}{$11$} & \myalgorithm{BDEZ} &$2.3\cdot 10^{23}$ (est.) &  $2.7 \cdot 10^{20}$ (est.)\\
		\cline{3-5} & &  \myalgorithm{BDEZStab}& $4.6\cdot 10^{18}$ (est.) & $5.4\cdot 10^{15}$ (est.)\\
		\cline{3-5} & &  \textbf{\myalgorithm{CoveringSetsMethod}}& $\mathbf{6.3\cdot 10^{10}}$ & $\mathbf{4.1\cdot 10^{6}}$ \textbf{}\\
	\hline
	\hline
	\multirow{2}{*}{$\myalgorithm{ShortProduct}_3$} & \multirow{2}{*}{$5$} & \myalgorithm{BDEZ} &  $5.9\cdot 10^2$ & $1.4 \cdot 10^{-1}$ \\
	\cline{3-5} &   &  \myalgorithm{BDEZStab} & $3.4\cdot 10$ & $0.0$  \\
	\hline
	\multirow{3}{*}{$\myalgorithm{ShortProduct}_4$} & \multirow{3}{*}{$8$} & \myalgorithm{BDEZ} & $5.2\cdot 10^7$  & $4.3\cdot 10^3$  \\
	\cline{3-5} &   &  \myalgorithm{BDEZStab} & $3.1\cdot 10^5$ & $2.7\cdot 10$ \\
	\cline{3-5} &   &  \textbf{\myalgorithm{CoveringSetsMethod} }& $\mathbf{2.8\cdot 10^2}$ & $\mathbf{3.0}$ \textbf{}\\
	\hline
		\multirow{3}{*}{$\myalgorithm{ShortProduct}_{5}$} & \multirow{3}{*}{$11$} & \myalgorithm{BDEZ} &$1.8\cdot 10^{16}$ (est.) &  $5.7 \cdot 10^{12}$ (est.)\\
		\cline{3-5} & &  \myalgorithm{BDEZStab}& $6.9\cdot 10^{11}$ (est.) & $2.2 \cdot 10^8$ (est.) \\
		\cline{3-5} & &  \textbf{\myalgorithm{CoveringSetsMethod}}& $\mathbf{6.3\cdot 10^6}$ & $\mathbf{2.4\cdot 10^{3}}$ \textbf{} \\
	\hline
		\multirow{2}{*}{$\myalgorithm{ShortProduct}_{6}$} & \multirow{2}{*}{$14$} & \myalgorithm{BDEZ} &$3.9\cdot 10^{26}$ (est.)&  $4.7\cdot 10^{23}$ (est.)\\
		\cline{3-5} & &  \myalgorithm{BDEZStab}& $2.0\cdot 10^{19}$ (est.) & $2.7 \cdot 10^{16}$ (est.)\\
	\hline
	\hline
	\multirow{2}{*}{$\myalgorithm{CirculantProduct}_3$} & \multirow{2}{*}{$4$} & \myalgorithm{BDEZ} &  $36$ & $0.0$ \\
	\cline{3-5} &   &  \myalgorithm{BDEZStab} &  $6$ & $0.1 \cdot 10^{-2}$ \\
	\hline
	\multirow{2}{*}{$\myalgorithm{CirculantProduct}_4$} & \multirow{2}{*}{$8$} & \myalgorithm{BDEZ} & $5.2\cdot 10^7$  & $4.3\cdot 10^3$ \\
	\cline{3-5} &   &  \myalgorithm{BDEZStab} & $3.1\cdot 10^5$ & $2.7\cdot 10$ \\
	\hline
		\multirow{3}{*}{$\myalgorithm{CirculantProduct}_{5}$} & \multirow{3}{*}{$10$} & \myalgorithm{BDEZ} &$4.0\cdot 10^{13}$ (est.) & $1.2\cdot 10^{10}$ (est.)\\
		\cline{3-5} & &  \myalgorithm{BDEZStab}& $1.0\cdot 10^{10}$ (est.) & $3.5\cdot 10^{6}$ (est.)\\
		\cline{3-5} & &  \textbf{\myalgorithm{CoveringSetsMethod}}& $ \mathbf{8.8\cdot 10^8}$ & $\mathbf{5.4\cdot 10^{3}}$ \textbf{}\\
	\hline
		\multirow{2}{*}{$\myalgorithm{CirculantProduct}_{6}$} & \multirow{2}{*}{$12$} &\myalgorithm{BDEZ} & $1.0\cdot 10^{20}$ (est.) & $1.3\cdot 10^{17}$ (est.)\\
		\cline{3-5} & &  \myalgorithm{BDEZStab}& $1.1\cdot 10^{15}$ (est.) & $1.5\cdot 10^{12}$ (est.)\\
	\hline
	\end{tabular}
	\end{center}
	\caption{Timings obtained with Algorithm~\myalgorithm{BDEZ} and~\myalgorithm{BDEZStab}
	for various bilinear maps over $K = \mathbb{F}_2$.}
	\label{comptimerecapproach}
\end{table}


It is not clear how to estimate timings for our approach~\myalgorithm{CoveringSetsMethod} beyond what has been done and reported in Table~\ref{comptimerecapproach}. However, for the set
of bilinear maps for which~\myalgorithm{CoveringSetsMethod} allows one to compute all the optimal formulae, we observe
a clear speed-up compared to~\myalgorithm{BDEZStab}.

In order to compute bilinear maps of larger degrees using this method, we need to be able to compute and store all the elements of
$$\quotient{\AL{10}}{\GLGL{{10}}{{10}}}$$
for $\myalgorithm{ShortProduct}_6$ (and even more for other bilinear maps),
which has not been done yet and requires a specific effort for
an optimized implementation of the algorithm described in Section~\ref{sec:combobj}.
Moreover, being able to decompose a matrix product of larger dimensions, such as $3\times 3$ by $3\times 3$,
requires to improve on
the theoretical aspect of our strategy, since the size of the required set
$$\quotient{\AL{9,9,15}}{\GLGL{9}{9}}$$
is expected to be too large, based on the apparent exponential growth of the progression of the sets described in Table~\ref{tab:timeomegad}.

In the following, we describe how we computed optimal formulae for bilinear maps given in Table~\ref{comptimerecapproach} via
our approach using the stems. We provide some technical details, specific to each bilinear map, necessary for an implementation.

\subsection{Matrix product $3\times 2$ by $2\times 3$}
\label{exp:matprod323}
We give in this section the timings obtained with our approach for computing the bilinear rank of the matrix product
$(3,2,3)$ over $\mathbb{F}_2$. We use the same notations as in Section~\ref{sec:apptoprods}.
We recall that
we denote by $\boldsymbol{\Phi}_{3,2,3}$ the bilinear map
$$\begin{matrix}
	\boldsymbol{\Phi}_{3,2,3}:& \mathcal{M}_{3,2}(\mathbb{F}_2)\times \mathcal{M}_{2,3}(\mathbb{F}_2) &\longrightarrow &\mathcal{M}_{3,3}(\mathbb{F}_2)\\
	&(A,B) &\longmapsto& A\cdot B
\end{matrix}.$$
We denote by $\Phi_{i,j}$ the bilinear forms such that
$\Phi_{i,j}(A,B)$ is the coefficient $(i,j)$ of $\boldsymbol{\Phi}_{3,2,3}(A,B)$. The subspace $\VecSp{T}_{3,2,3}$ is defined by
$$\VecSp{T}_{3,2,3} = \spn(\{\Phi_{i,j}\}_{i,j})$$

As described in Section~\ref{sec:computetopoone}, we need to precompute the quotients
$$\quotient{\AL{6+k}}{\GLGL{{6+k}}{{6+k}}}$$
for $k\in \left\{1,2,3\right\}$, and, given the stem that is used, we can restrict the enumeration to
subspaces containing at least one element of rank $6$. The techniques for computing theses subsets are described in Section~\ref{sec:combobj}.

The intermediate sets, corresponding to the quotient $\mathcal{Q}$ computed using \myalgorithm{IntermediateSetViaQuotientComputation}
in Section~\ref{sec:computetopo}, were computed in $1.6\cdot 10^5$ seconds.
They are defined as the following sets:
	$\tilde{\mathcal{E}}_0 = \QCover{7}{\Phi_{0,0}+\Phi_{1,1}+\Phi_{2,2}}$,
	$\tilde{\mathcal{E}}_1 = \QCover{8}{\Phi_{0,0}+\Phi_{1,1},
		\Phi_{0,1}+\Phi_{2,2}}$,
	$\tilde{\mathcal{E}}_2 = \QCover{8}{\Phi_{0,0}+\Phi_{1,1},
		\Phi_{1,1}+\Phi_{2,2}}$,
	$\tilde{\mathcal{E}}_3 = \QCover{8}{\Phi_{0,0}+\Phi_{1,1},\Phi_{2,2}}$,
	$\tilde{\mathcal{E}}_4 = \QCover{9}{\Phi_{0,0},
		\Phi_{1,1}, \Phi_{2,2}}$.
For the set $\tilde{\mathcal{E}}_4$, we actually used an additional trick,
described in~\ref{appendix:hamweightproof}, which allowed us
to consider only a much smaller subset $\tilde{\mathcal{E}}'_4$.

We give in Table~\ref{esttimesols323} the time required to compute the second
step of Section~\ref{sec:computetopoone}, which corresponds to $\mathcal{Q} \circ \mathcal{L}$
calls to \myalgorithm{HasRankOneBasis}.

\begin{table}[H]
	\begin{center}
	\begin{tabular}{|c|c|c|c|c|}
	\hline
	set              & cardinality & nb. tests & time (s) & nb. of solutions found\\
	\hline                                   
	$\tilde{\mathcal{E}}_0$  & $8.8\cdot 10$    & $1.2\cdot 10^8$ & $2.0\cdot 10^5$ & $5$\\
	\hline                              
	$\tilde{\mathcal{E}}_1$  & $7.5\cdot 10^5$  & $2.2\cdot 10^7$ & $3.3\cdot 10^5$ & $13$ \\
	\hline                              
	$\tilde{\mathcal{E}}_2$  & $1.0\cdot 10^4$  & $2.8\cdot 10^5$ & $4.1\cdot 10^2$ & $1$ \\
	\hline                              
	$\tilde{\mathcal{E}}_3$  & $2.7\cdot 10^5$  & $5.9\cdot 10^8$ & $9.1\cdot 10^5$ & $46$ \\
	\hline                              
	$\tilde{\mathcal{E}}'_4$ & $2.5\cdot 10^7$  & $9.1\cdot 10^8$ & $1.3\cdot 10^6 $ & $2$\\
	\hline
	\end{tabular}
	\end{center}
	\caption{Computation of elements of $\protect\CCover{15}{\VecSp{T}_{3,2,3}}$.}
	\label{esttimesols323}
\end{table}

In conclusion, we are able to decompose $\boldsymbol{\Phi}_{3,2,3}$ over $\mathbb{F}_2$ and to
give all the possible optimal decompositions. We have a speed-up of $10^4$ compared to our implementation
of Algorithm~\myalgorithm{BDEZStab}. Although the rank of this bilinear map was already
known thanks to Hopcroft and Kerr~\cite{doi:10.1137/0120004}, determining
all the possible optimal decompositions was not a well studied problem to our knowledge.

We prove with our algorithm that there is only one class of equivalence of
vector spaces $\VecSp{W}\in \AL{6,6,15}$ containing $\VecSp{T}_{3,2,3}$, for the group
action induced by $\stb(\VecSp{T}_{3,2,3})$.
It is interesting to note that this is also the case for $\VecSp{T}_{2,2,2}$.
We do not have this kind of result for the short product for example.
\subsection{Matrix product $2\times 3$ by $3\times 2$}
\label{exp:matprod}
We denote by $\boldsymbol{\Phi}_{2,3,2}$ the bilinear map
$$\begin{matrix}
	\boldsymbol{\Phi}_{2,3,2}:& \mathcal{M}_{2,3}(\mathbb{F}_2)\times \mathcal{M}_{3,2}(\mathbb{F}_2) &\longrightarrow &\mathcal{M}_{2,2}(\mathbb{F}_2)\\
	&(A,B) &\longmapsto& A\cdot B
\end{matrix}$$
and $\Phi_{i,j}$ its coefficients.

We compute the following sets, corresponding to the quotient $\mathcal{Q}$ computed with \myalgorithm{IntermediateSetViaQuotientComputation}
in Section~\ref{sec:computetopo}, within $1.5\cdot 10^6$ seconds:
\begin{itemize}
	\item $\tilde{\mathcal{E}}_0 = \QCover{9}{\Phi_{0,0}+\Phi_{1,1},\Phi_{0,0}+\Phi_{0,1}+\Phi_{1,0}}$,
	\item $\tilde{\mathcal{E}}_1 = \QCover{9}{\Phi_{0,0}+\Phi_{1,1},\Phi_{0,0}}$.
\end{itemize}
We used, in particular, the fact that for any basis $\mathcal{B}$ of $\VecSp{T}_{2,3,2}$, there exist two elements
$\Phi$ and $\Psi$ of $\mathcal{B}$ such that there exists an element in $\spn(\{\Phi,\Psi\})$ whose decomposition over
$(\Phi_{0,0},\Phi_{1,1},\Phi_{0,1},\Phi_{1,0})$
has the following shape:
$$(1,0,\lambda_3,\lambda_4)\ \text{or}\ (0,1,\lambda_3,\lambda_4).$$
The timings for the second step of the method
proposed in Section~\ref{sec:computetopo} are described in Table~\ref{esttimesols232}.

\begin{table}[H]
	\begin{center}
	\begin{tabular}{|c|c|c|c|c|}
	\hline
		set & cardinality & nb. tests & time (s) & nb. of solutions found\\
	\hline
		$\tilde{\mathcal{E}}_0$ & $139$ & $5.0\cdot 10^4$ & $6.2\cdot 10^4$ & $44$\\
	\hline                
		$\tilde{\mathcal{E}}_1$ & $3.8\cdot 10^8 $ & $6.3\cdot 10^{10}$ & $4.1\cdot 10^6$ & $5,614$ \\
	\hline
	\end{tabular}
	\end{center}
	\caption{Computation of $\protect\CCover{11}{\VecSp{T}_{2,3,2}}$.}
	\label{esttimesols232}
\end{table}

We obtained a speed-up of $10^9$ compared to our implementation of~\myalgorithm{BDEZStab}, and we found
$1{,}096{,}452$ elements of $\CCover{11}{\VecSp{T}_{2,3,2}}$, divided in $196$ equivalence classes of solutions with
respect to the action of $\stb(\VecSp{T}_{2,3,2})$.
The computations described in Table~\ref{esttimesols232} used an improved basic test~\myalgorithm{HasRankOneBasis} specialized for $\VecSp{T}_{2,3,2}$.
This test uses the fact that, given a subspace $\VecSp{W}$ of $\tilde{\mathcal{E}}_0$ or $\tilde{\mathcal{E}}_1$, we have two elements $t_0$ and $t_1$ in $\VecSp{T}_{2,3,2}$
such that there exist $w_0,w_1 \in \VecSp{W}$ such that $t_0-w_0$ and $t_1-w_1$ have rank one.
We enumerate the elements $w \in \VecSp{W}$ such that the rank of $t_0-w$ or $t_1-w$ is one, instead of enumerating the whole set of rank-one
bilinear forms.

\subsection{Short product}
\label{exp:shortprod}

We present in this section the timings obtained with our method for the decomposition
of the short product.
We managed to obtain all the elements of $\CCover{r}{\ShProdMap{\ell}}$, where $\ShProdMap{\ell}$ is the vector space
generated by the bilinear forms associated to $\myalgorithm{ShortProduct}_\ell$ for $\ell=4$ and $\ell=5$ and $r = \rk(\ShProdMap{\ell})$.

\begin{table}[H]
	\begin{center}
	\begin{tabular}{|c|c|c|c|c|c|}
	\hline
	bilinear map & nb. of tests & time (s) & nb. of solutions & equivalence classes\\
	\hline
		$\myalgorithm{ShortProduct}_{4}$ & $2.8\cdot 10^2$ & $3.0$ & $1{,}440$ & $220$\\
	\hline
		$\myalgorithm{ShortProduct}_{5}$ & $6.3\cdot 10^6$ & $2.4\cdot 10^3 $ & $146{,}944$ & $11{,}424$\\
	\hline
	\end{tabular}
	\end{center}
	\caption{Computation of $\protect\CCover{r}{\ShProdMap{\ell}}$.}
	\label{esttimeshortprod}
\end{table}

The last column of Table~\ref{esttimeshortprod} describes the number of equivalence classes of
vector spaces in $\CCover{r}{\ShProdMap{\ell}}$, with respect to the group $\stb(\ShProdMap{\ell})$.

\subsection{Circulant product}

We present in this section how to find, with our approach, optimal decompositions of the polynomial product modulo $(X^5-1)$.
We denote by $\VecSp{T}$ the target space spanned by the coefficients $\Phi_i$ of the bilinear map
$$\boldsymbol{\Phi}:(A,B) \mapsto A\cdot B \bmod (X^5-1) = \begin{pmatrix}
	\Phi_0\\
	\Phi_1\\
	\Phi_2\\
	\Phi_3\\
	\Phi_4\\
\end{pmatrix} = 
\begin{pmatrix}
	a_4b_1+a_3b_2+a_2b_3+a_1b_4+a_0b_0\\
	a_4b_2+a_3b_3+a_2b_4+a_1b_0+a_0b_1\\
	a_4b_3+a_3b_4+a_2b_0+a_1b_1+a_0b_2\\
	a_4b_4+a_3b_0+a_2b_1+a_1b_2+a_0b_3\\
	a_4b_0+a_3b_1+a_2b_2+a_1b_3+a_0b_4\\
\end{pmatrix}
.$$
The structure of $\VecSp{T}$ allows us to gain an interesting speed-up. Indeed, $\VecSp{T}$ has the following structure:
there exists, up to a constant mulitplicative factor, a unique element of rank one $\phi = \Phi_0 +\Phi_1+\Phi_2+\Phi_3+\Phi_4$
and a hyperplane $H$ such that $H$ contains all the elements of rank $4$
and such that all the elements of rank $5$ are included in $\spn(\{\phi\}) \oplus H$.
Moreover, the action of $\stb(\VecSp{T})$ on $H-\{0\}$ is transitive (proved by an exhaustive enumeration
in $\mathbb{F}_2$), which
means that all the elements of rank $4$ are in the same orbit.
Consequently, it is also transitive on $\spn(\{\phi\})\oplus H$ and all the elements of rank $5$ are
in the same orbit.

Let $\mathcal{B} = \{\Phi_0,\ldots,\Phi_4\}$ be a basis of $\VecSp{T}$.
We distinguish then $2$ cases: either there exists $i$ such that $\Phi_i$ has rank $5$, or
there is no such $i$, which implies that $\phi \in \mathcal{B}$.
We deduce from these observations the following sets to compute:

\begin{itemize}
	\item $\tilde{\mathcal{E}}_0 = \QCover{6}{\Phi_4}$ and
	\item $\mathcal{E}_1 = \quotient{\Cover{9}{\VecSp{H}}}{\stb(\VecSp{H})}$ (any element $\VecSp{V}\in\mathcal{E}_1$ satisfies
		$\VecSp{T}\subset \VecSp{V}+ \spn(\{\phi\})\in\AL{10}$).
\end{itemize}

We obtain the set $\mathcal{E}_1$ via the computation of a covering of $\mathcal{E}_1$ obtained with
$$\tilde{\mathcal{E}}_1 = \QCover{6}{\Phi_0+\Phi_1+\Phi_2+\Phi_3}.$$

\begin{table}[H]
	\begin{center}
	\begin{tabular}{|c|c|c|c|c|}
	\hline
		set & cardinality & nb. tests & time (s) & nb. of solutions found\\
	\hline
	$\tilde{\mathcal{E}}_0$ &  $5.2\cdot 10$ & $8.7\cdot 10^7$ & $3.1\cdot 10^3$ & $0$\\
	\hline
	$\tilde{\mathcal{E}}_1$ &  $2.0\cdot 10^3$ & $6.7\cdot 10^5$ & $2.4\cdot 10^2$ & $264$ \\
	\hline
	\end{tabular}
	\end{center}
	\caption{Computation of $\protect\CCover{10}{\VecSp{T}}$.}
	\label{esttimesolscirc5}
\end{table}
We have in Table~\ref{esttimesolscirc5} the timings for the second step of the procedure
described in Section~\ref{sec:computetopoone}.
The set $\CCover{10}{\VecSp{T}}$ contains $2025$ elements divided in $9$ equivalence classes of solutions.
Interestingly, the set $\tilde{\mathcal{E}}_0$ does not correspond to any element of $\CCover{10}{\VecSp{T}}$. It
means that, for a basis $\mathcal{B}$ of bilinear forms of rank one containing $\phi$ and generating a subspace of
$\CCover{10}{\VecSp{T}}$, the coordinate of the elements of rank $4$ on $\phi$ is zero.

\section{Conclusions}
One of the most challenging problems in the field of bilinear complexity is the decomposition of the
bilinear map given by the product of $3\times 3$ matrices.
Currently, our approach cannot be used to tackle this problem. However, we believe
that it could be approached by further research in the direction of the Hamming weight idea developed
in~\ref{appendix:hamweightproof}.
An important obstacle is the fact that, assuming that the rank is $21$, it would require to compute
$\quotient{\AL{15}}{\GLGL{9}{9}}$,
which is very large.

Another aspect which is not well understood currently for our approach is to establish a realistic complexity analysis.
It requires a theoretical understanding of how the cardinality of the quotients
$\quotient{\AL{d}}{\GLGL{d}{d}}$
behave asymptotically and a classification of their representatives.

Further research could focus on symmetric decompositions of bilinear maps, which have applications
for the multiplication of polynomials over ``small'' finite fields (such as $\mathbb{F}_2$).
Especially, we can improve on the upper bounds on the rank of the product of two polynomials of
fixed degrees by improving on the bilinear complexity of the multiplication algorithms used in the Chudnovsky-Chudnovsky
approach~\cite{CHUDNOVSKY1988285,Randriambololona2012489,Rambaud2015}.

Finally, the approach proposed in this work allows one to compute exhaustively the optimal formulae for
new bilinear maps, which was not feasible with~\cite{Barbulescu2012}. Moreover, it uses
combinatorial objects which are not well documented in the litterature, which may rekindle curiosity for them.

\section*{Acknowledgements}
The author is grateful to J\'er\'emie Detrey and Emmanuel
Thom\'e for their helpful comments and suggestions.

\section*{Bibliography}
\bibliographystyle{abbrvurl}
\bibliography{karanshort}  
\appendix
\section{Computation of stabilizers}
\subsection{Stabilizer of the short product}
\label{app:stabshort}
In this section, we prove Theorem~\ref{thm:stabshortprod}, using the notations of Section~\ref{subsec:structstabshort}:
the bilinear map
$$\boldsymbol{\Phi}_\ell: (A,B) \longmapsto A\cdot B \bmod X^\ell = 
\begin{pmatrix}
	a_{\ell-1}b_0 + \cdots + a_0b_{\ell-1}\\
	\vdots\\
	a_1b_0 + a_0b_1 \\
	a_0b_0 \\
\end{pmatrix}$$
is the bilinear map corresponding to the short product modulo $X^\ell$, $\Phi_0,\ldots,\Phi_{\ell-1}$
are the bilinear forms such that
$$\forall i \geq 0,\ \Phi_i(A,B) = \sum_{j\in \closedinterval{0}{\ell-1-i}} a_{i-j}b_j$$
and $\ShProdMap{\ell}$ is the subspace $\spn(\{\Phi_0,\ldots,\Phi_{\ell-1}\})$.

We recall that $\ShProdMap{\ell}$ is represented
by the ring $K[N]$ of polynomials of degree less than or equal to $\ell-1$ evaluated in the matrix $N$,
which is a nilpotent matrix.
For example, for $\ell = 4$,
$$a_0N^0 + a_1N^1+a_2N^2 +a_3N^3 = 
\begin{pmatrix}
	a_0 & a_1 & a_2 & a_3 \\
	0 & a_0 & a_1 & a_2 \\
	0 & 0 & a_0 & a_1 \\
	0 & 0 & 0 & a_0 \\
\end{pmatrix}.
$$

We observe that the bilinear forms of rank exactly $\ell$ within $\ShProdMap{\ell}$ are described by the matrices that can be expressed
as $P(N)$ where $P$ is a polynomial of degree smaller or equal to $\ell-1$ over $K$ such that $P(0) \neq 0$.
\stabshortprod*
\begin{proof}
	\begin{itemize}
		\item 
	First, we prove that, for any element $M\in \ShProdMap{\ell}$ of rank $\ell$, there exists $R \in K[N]$ a polynomial
	of degree at most $\ell-1$ such that $R(0) \neq 0$ and $R(N) = M$, and that
	$$\exists M_1 \in \stb(\ShProdMap{\ell}),\ I_\ell \cdot M_1 = R(N).$$


	Any element in the orbit of $I_\ell$ has rank $\ell$ and any element of rank $\ell$ in $\ShProdMap{\ell}$
	is associated to a polynomial $R \in K[N]$ evaluated in $N$ of degree $\ell-1$ such that $R(0) \neq 0$.
	It remains to prove that the orbit of $I_\ell$ corresponds exactly to the set of rank-$\ell$ elements.
	Given $R \in K[N]$ such that $R(0) \neq 0$, we denote by $M_1(R)$ the element $(I_\ell,R(N))$. This element is in $\stb(\ShProdMap{\ell})$ because, for
any $S$, we have $R(N)S(N) = (RS \bmod X^{\ell})(N)$, which is a polynomial
evaluated in $N$ of degree at most $\ell-1$. We have:
	$$I_\ell \cdot M_1(R) = \Trans{(I_\ell)} \cdot I_\ell \cdot R(N) = R(N).$$
		\item
	We prove that, for any element $M\in \ShProdMap{\ell}$ of rank $\ell-1$, there exists $R \in K[N]$ a polynomial
	of degree at most $\ell-1$ such that $R(0) = 0$, $R'(0) \neq 0$ and $R(N) = M$, and that
	$$\exists M_2 \in \stb(I_\ell) \cap \stb(\ShProdMap{\ell}),\ N\cdot M_2= R(N).$$

	An element of the orbit of $N$ is an element of rank $\ell-1$ and an element
	of rank $\ell-1$ in $\ShProdMap{\ell}$ is associated to a polynomial $R \in K[N]$ evaluated in $N$
	of degree at most $\ell-1$ such that $R(0) = 0$ and $R'(0) \neq 0$.
	It remains to prove that $N$ can mapped to any element of rank $\ell-1$ via the action of
	$\stb(I_\ell) \cap \stb(\ShProdMap{\ell})$.


			Let $e_\ell$ be the vector $(0,\cdots,0,1)$, such that $R(N)\cdot e_\ell$ corresponds to
	the last colum of $R(N)$.
	We have $R(N)^{\ell-1} e_\ell \neq 0$. Thus, let $P(N)$ be the matrix whose columns are given by the tuple
	$(R(N)^{\ell-1} \cdot e_\ell, R(N)^{\ell-2} \cdot e_\ell,\ldots,R(N) \cdot e_\ell, e_\ell)$.
			We have $R(N)P(N) = (0, R(N)^{\ell-1} \cdot e_\ell,\ldots,R(N)^2 \cdot e_\ell, R(N)\cdot e_\ell) = R(N)$
			and $P(N)N = (0,R(N)^{\ell-1} \cdot e_\ell, R(N)^{\ell-2} \cdot e_\ell,\ldots,R(N) \cdot e_\ell)$. Consequently,
	we have $R(N) P(N) = P(N)N$ and $P(N)^{-1} R(N) P(N) = N$.
	We take $M_2(R) = (\Trans{P(N)},P(N)^{-1})$: $$N \cdot M_2(R) = R(N)\ \text{and}\ M_2(R) 
	\in \stb(I) \cap \stb(\ShProdMap{\ell}).$$

		\item 
	Let $(\Psi,\Psi')$ be a couple of elements of $\ShProdMap{\ell}$ such that
			$\rk(\Psi) = \ell$ and $\rk(\Psi') = \ell-1$.
	Let $(P,P')$ be the corresponding matrices.
	According to the previous points, there exist $M_1 \in \stb(\ShProdMap{\ell})$ such that
			$I_\ell \cdot M_1 = P$
	and $M_2 \in \stb(\ShProdMap{\ell}) \cap \stb(I_\ell)$ such that
			$N \cdot M_2 = P'\cdot M_1^{-1}$. Consequently, we have
			$$(I_\ell,N) = (P\cdot M_1^{-1} \cdot M_2^{-1},P' \cdot M_1^{-1} \cdot M_2^{-1}).$$

		\item 
	We prove that we have $\stb(I_\ell)\cap \stb(N) \subset \stb(\ShProdMap{\ell})$ and that,
	for any $M_3 \in \stb(I_\ell)\cap \stb(N)$,
	there exists $R \in K[N]$ a polynomial
	of degree at most $\ell-1$ such that $R(0) \neq 0$ and 
	$$M_3= (\Trans{(R(N)^{-1})},R(N)).$$

	Let $M_3 \in \stb(I_\ell)\cap \stb(N)$. Since $M_3 \in \stb(I_\ell)$,
	there exists $P \in \GL_\ell$ such that $M_3 = (\Trans{(P^{-1})},P)$ and,
	since $M_3 \in \stb(N)$, $P^{-1} N P = N$.
	We have $PN = N P$.
	
	Multiplying a matrix by $N$ on the left shifts the rows upward and multiplying $N$ on the right
	shifts the columns on the right.
	Therefore, denoting by $p_{ij}$ the coefficients of $P$, with $p_{00} \neq 0$ and $p_{i0} = 0$ for  $i \geq 1$,
	we have
	$$\forall (i,j)\in \closedinterval{1}{\ell-1}\times \closedinterval{0}{\ell-1}, p_{i,j} = p_{i+1,j+1}.$$
	More particularly, $P$ is equal to the evaluation in $N$ of a polynomial $R$ such that $R(0) \neq 0$, from which we deduce that
	$$M_3 = (\Trans{(R(N)^{-1})},R(N))\ \text{and}\ M_3 \in \stb(\ShProdMap{\ell}).$$

	Given the form of the elements of $\stb(I_\ell)\cap \stb(N)$, its cardinality is equal
	to the number of polynomials $R$ of degree at most $\ell-1$ such that $R(0) \neq 0$, which is $\#K^{\ell-1}(K-1)$.
	Combining with the fact that there are $\#K^{\ell-1}(K-1) \cdot \#K^{\ell-2}(K-1)$ pairs $(\Psi,\Psi')$ of elements of $\ShProdMap{\ell}$
	such that $\rk(\Psi) = \ell$ and $\rk(\Psi') = \ell-1$, we have $\#\stb(\ShProdMap{\ell}) = \#K^{3\ell-4}(\#K-1)^3$.
	\end{itemize}
\end{proof}
\subsection{Stabilizer of the matrix product}
\label{app:stabmat}
	We denote by $\VecSp{T}_{p,q,r}$ the vector space given by the product of matrices $p\times q$ by
	$q\times r$, which is isomorphic to $\mathcal{M}_{p,r} \otimes I_q$ (we do not use the canonical basis for this
	representation).
	For the group action $M  \cdot(X,Y) \mapsto \Trans{X} M Y$, we want to prove Theorem~\ref{thm:stabmat}.
	\stabmat*
	\begin{proof}
	Let $(X,Y)$ be a pair of invertible matrices such that $\Trans{X} \VecSp{T}_{p,q,r} Y = \VecSp{T}_{p,q,r}$.
	For any $i \in \closedinterval{0}{p-1}$ and $j \in \closedinterval{0}{q-1}$, we denote by
	$M_{i,j}$ the matrix $\Trans{X} \cdot (e_{i,j}) \cdot Y$, where $e_{i,j}$ is the canonical basis of $\mathcal{M}_{p,r}$.
	Denoting by $X_{i,h}$ the $q\times q$ blocks of $X$ and $Y_{\ell,j}$ the $q\times q$ blocks of $Y$, we have
	$M_{i,j} = (X_{i,h}Y_{j,\ell})_{h,\ell}$ for any $i$ and $j$. Consequently, since $\Trans{X} \cdot (e_{i,j}) \cdot Y \in \VecSp{T}_{p,q,r}$,
	we have
	\begin{equation}
		\label{eq:equalidentity}
		\forall i,j,h,\ell,\ X_{i,h}Y_{j,\ell} \in \spn(\{I_q\}).
	\end{equation}

	Let $(i,h)$ such that $X_{i,h}$ is not null and $j$ any integer in $\closedinterval{0}{q-1}$.
	We have  the inclusion
		$$X_{i,h} \cdot \spn(\closedinterval{Y_{j,0}}{Y_{j,q-1}}) \subset \spn(\{I_q\})$$
	and, since $Y$ is invertible, we even have the equality.
	Thus, for any $(i,h)$ such that $X_{i,h}$ is not null, we have shown that $X_{i,h}$ is invertible. We have the same property for the blocks
	of $Y$.

	Combining the fact that the blocks of $X$ and $Y$ that are not null are invertible and Equation~\eqref{eq:equalidentity}, we
	can conclude that the stabilizer of $\VecSp{T}_{p,q,r}$ is generated by matrices $(X,Y)$
	such that there exists $g \in \GL_q$ satisfying
	$$\Trans{X} \in \GL_p \otimes g \text{ and } Y \in \GL_r \otimes g^{-1}.$$
	\end{proof}

\section{Using the Hamming weight for the matrix product}
\label{appendix:hamweightproof}
We describe in this section a trick allowing one to speed-up the execution of our approach for the matrix product. However,
this part is technical and can be skipped on a first read.

We still denote by $\VecSp{T}$ the subspace of $\Bl{6}{6}$ corresponding to the coefficients of the product of
$3\times 2$ by $2 \times 3$ matrices.
We recall the stem of $\VecSp{T}$ that we consider:
	$$\mathcal{C} = \{\spn(\{\Phi_{0,0}+\Phi_{1,1}+\Phi_{2,2}\}),\spn(\{\Phi_{0,0}+\Phi_{1,1},\Phi_{0,1}+\Phi_{2,2}\}),
	\spn(\{\Phi_{0,0}+\Phi_{1,1},\Phi_{1,1}+\Phi_{2,2}\}),$$
	$$\spn(\{\Phi_{0,0}+\Phi_{1,1},\Phi_{2,2}\}),\spn(\{\Phi_{0,0},\Phi_{1,1},\Phi_{2,2}\})\}.$$
We define the following sets:
\begin{itemize}
	\item $\mathcal{E}_0 = \Cover{7}{\Phi_{0,0}+\Phi_{1,1}+\Phi_{2,2}}$,
	\item $\mathcal{E}_1 = \Cover{8}{\Phi_{0,0}+\Phi_{1,1},\Phi_{0,1}+\Phi_{2,2}}$,
	\item $\mathcal{E}_2 = \Cover{8}{\Phi_{0,0}+\Phi_{1,1},\Phi_{1,1}+\Phi_{2,2}}$,
	\item $\mathcal{E}_3 = \Cover{8}{\Phi_{0,0}+\Phi_{1,1},\Phi_{2,2}}$ and
	\item $\mathcal{E}_4 = \Cover{9}{\Phi_{0,0},\Phi_{1,1},\Phi_{2,2}}$.
\end{itemize}
In theory, we have to enumerate the elements of the sets
$\QCover{7}{\Phi_{0,0}+\Phi_{1,1}+\Phi_{2,2}}$, $\QCover{8}{\Phi_{0,0}+\Phi_{1,1},\Phi_{0,1}+\Phi_{2,2}}$,
$\QCover{8}{\Phi_{0,0}+\Phi_{1,1},\Phi_{1,1}+\Phi_{2,2}}$, $\QCover{8}{\Phi_{0,0}+\Phi_{1,1},\Phi_{2,2}}$ and
$\QCover{9}{\Phi_{0,0},\Phi_{1,1},\Phi_{2,2}}$, denoted by
$\tilde{\mathcal{E}}_0$, $\tilde{\mathcal{E}}_1$, $\tilde{\mathcal{E}}_2$, $\tilde{\mathcal{E}}_3$ and $\tilde{\mathcal{E}}_4$, respectively.
However, one can notice that, given $\VecSp{V} \in \CCover{15}{\VecSp{T}}$ such that
$$\exists \VecSp{W} \in \tilde{\mathcal{E}}_4, \sigma \in \stb(\{\Phi_{0,0},\Phi_{1,1},\Phi_{2,2}\}),\ 
\VecSp{V} = \VecSp{T}+ \VecSp{W}\circ \sigma,$$
it may happen that there exists $\VecSp{W}'\subset\VecSp{V}$ such that
$$\VecSp{V} = \VecSp{T}+ \VecSp{W}'$$
and
$$
\exists \sigma \in \stb(\VecSp{T}),\ 
\VecSp{W}'\circ \sigma \in
\begin{cases}
	\Cover{7}{\Phi_{0,0}+\Phi_{1,1}+\Phi_{2,2}}\\
	\hfil \text{or}\\
	\Cover{8}{\Phi_{0,0}+\Phi_{1,1},\Phi_{0,1}+\Phi_{2,2}}\\
	\hfil \text{or}\\
	\Cover{8}{\Phi_{0,0}+\Phi_{1,1},\Phi_{1,1}+\Phi_{2,2}}\\
	\hfil \text{or}\\
	\Cover{8}{\Phi_{0,0}+\Phi_{1,1},\Phi_{2,2}}\\
\end{cases}.$$
In other terms, the $\VecSp{V}$'s corresponding to the$5$ sets to enumerate do not form a partition
of $\CCover{15}{\VecSp{T}}$.

Thus, we propose, if possible, to enumerate a subset of $\mathcal{E}_4$,
rather than the whole set, without losing exhaustivity.
The strategy that is proposed is related to the notion of Hamming weight of the elements
$\Phi_{0,0}$, $\Phi_{1,1}$ and $\Phi_{2,2}$.

\begin{definition}[Hamming weight for $\AL{d}$]
	Let $\VecSp{W}\in \AL{d}$ and $\mathcal{B}= (\psi_0,\ldots,\psi_{d-1})$ a basis of rank-one bilinear forms of $\VecSp{W}$.
	Any $x \in \VecSp{W}$ has a unique decomposition over $\mathcal{B}$:
	$$x = \sum_{0\leq t<d} \lambda_t \cdot \psi_t.$$
	We define its Hamming weight over $\mathcal{B}$ as
	$$\mathbb{H}_{\mathcal{B}}(x) = \#\{ t \in \closedinterval{0}{d-1} \ |\ \lambda_t \neq 0\}.$$
	We can extend the definition of the Hamming weight to any subset $\mathcal{S}$ of $\VecSp{W}$:
	$$\mathbb{H}_{\mathcal{B}}(\mathcal{S}) = \min(\{\#I \ |\ I \subset \closedinterval{0}{d-1}, \mathcal{S} \subset
	\spn(\{\psi_i\}_{i\in I})\}).$$
\end{definition}
The Hamming weight over some basis has a useful property related to the bilinear rank stated in Lemma~\ref{prop:rankhw}.
\begin{lemma}
	\label{prop:rankhw}
	Let $\VecSp{W}\in \AL{d}$ and $\mathcal{B}$ a basis of $\VecSp{W}$ composed of rank-one bilinear forms.
	For any subset $\mathcal{S}$ of $\VecSp{W}$, we have
	$$\rk(\spn(\mathcal{S})) \leq \mathbb{H}_{\mathcal{B}}(\mathcal{S}).$$
\end{lemma}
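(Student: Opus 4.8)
The plan is to unfold the two relevant definitions and observe that the Hamming weight directly exhibits a small rank-one spanning set for $\spn(\mathcal{S})$. First I would set $h = \mathbb{H}_{\mathcal{B}}(\mathcal{S})$. By the definition of the Hamming weight of a subset, there exists an index set $I \subset \closedinterval{0}{d-1}$ with $\#I = h$ such that $\mathcal{S} \subset \spn(\{\psi_i\}_{i \in I})$.

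The key observation is that the subfamily $\{\psi_i\}_{i \in I}$ is itself the right object for bounding $\rk(\spn(\mathcal{S}))$. Since $\mathcal{B} = (\psi_0,\ldots,\psi_{d-1})$ is a free family, any subfamily $\{\psi_i\}_{i \in I}$ is again free; and since every $\psi_i$ has rank one, this is a free family of rank-one bilinear forms of cardinality $h$. Hence $\spn(\{\psi_i\}_{i \in I}) \in \AL{h}$.

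Next I would use the inclusion $\spn(\mathcal{S}) \subset \spn(\{\psi_i\}_{i \in I})$, which follows at once from $\mathcal{S} \subset \spn(\{\psi_i\}_{i \in I})$ by taking spans. By Notation~\ref{not:delta}, this inclusion together with the membership $\spn(\{\psi_i\}_{i \in I}) \in \AL{h}$ shows that $\spn(\{\psi_i\}_{i \in I}) \in \CCover{h}{\spn(\mathcal{S})}$, so that $\CCover{h}{\spn(\mathcal{S})} \neq \emptyset$. By Definition~\ref{def:ranksp}, $\rk(\spn(\mathcal{S}))$ is the smallest $r$ for which $\CCover{r}{\spn(\mathcal{S})} \neq \emptyset$, and we have just exhibited a nonempty such set for $r = h$. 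Therefore $\rk(\spn(\mathcal{S})) \leq h = \mathbb{H}_{\mathcal{B}}(\mathcal{S})$, which is the claim.

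I do not expect any genuine obstacle here: the whole argument is a direct unfolding of the definitions of Hamming weight and of the rank of a subspace of $\Bl{m}{n}$. The only mild point worth stating explicitly is that a subfamily of a free family of rank-one forms is again a free family of rank-one forms, which is what licenses $\spn(\{\psi_i\}_{i \in I}) \in \AL{h}$. Once the witnessing coordinate set $I$ furnished by the Hamming weight is produced, the inequality is essentially immediate.
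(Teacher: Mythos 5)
Your proof is correct and is essentially the argument the paper intends: the paper's own proof consists of the single sentence that the claim is ``clear from the definition of the rank'' (Definition~\ref{def:ranksp}), and your write-up is precisely the careful unfolding of that definitional argument, exhibiting the free rank-one subfamily $\{\psi_i\}_{i\in I}$ as a witness that $\CCover{h}{\spn(\mathcal{S})}\neq\emptyset$. No gap; you simply made explicit what the paper left implicit.
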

\begin{proof}
	Clear from the definition of the rank of a set $\mathcal{S}$ given in Definition~\ref{def:ranksp}.
\end{proof}

We describe in Theorem~\ref{thm:mainthmhamweight} what is the subset of $\tilde{\mathcal{E}}_4$
that we consider.
\begin{theorem}
\label{thm:mainthmhamweight}
	Let $\VecSp{W}$ be a subspace such that
	$\VecSp{W} \in \tilde{\mathcal{E}}_4\ \text{and}\ \VecSp{T}_{3,2,3}+\VecSp{W} \in \AL{15}$
	and let $\mathcal{B}$ be a basis of $\VecSp{W}$ composed of rank-one bilinear forms.
	Let $\tilde{\mathcal{E}}'$ be the subset of elements $\VecSp{W} \in \tilde{\mathcal{E}}_4$
	such that
	$$\mathbb{H}_{\mathcal{B}}(\Phi_{0,0}+\Phi_{1,1}+\Phi_{2,2})=
	\mathbb{H}_{\mathcal{B}}(\Phi_{0,0})+\mathbb{H}_{\mathcal{B}}(\Phi_{1,1})+\mathbb{H}_{\mathcal{B}}(\Phi_{2,2})
	\ \text{and}\ \mathbb{H}_{\mathcal{B}}(\Phi_{0,0}+\Phi_{1,1}+\Phi_{2,2}) > 6.$$
	We obtain all the elements of $\CCover{15}{\VecSp{T}}$ via the enumeration of
	$\tilde{\mathcal{E}}_{0}$, $\tilde{\mathcal{E}}_{1}$, $\tilde{\mathcal{E}}_{2}$, $\tilde{\mathcal{E}}_{3}$ and $\tilde{\mathcal{E}}'$.
\end{theorem}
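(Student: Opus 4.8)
The plan is to show that the subspaces discarded from $\tilde{\mathcal{E}}_4$ are redundant: every $\VecSp{V}\in\CCover{15}{\VecSp{T}}$ produced from some $\VecSp{W}\in\tilde{\mathcal{E}}_4\setminus\tilde{\mathcal{E}}'$ is also produced from one of $\tilde{\mathcal{E}}_0,\ldots,\tilde{\mathcal{E}}_3$. Since Proposition~\ref{prop:formeirs} applied to the full stem $\mathcal{C}$ already guarantees that enumerating $\tilde{\mathcal{E}}_0,\ldots,\tilde{\mathcal{E}}_4$ yields all of $\CCover{15}{\VecSp{T}}$, this redundancy gives the theorem. After translating by a suitable element of $\stb(\VecSp{T})$ I may assume $\VecSp{V}=\VecSp{T}+\VecSp{W}$ with $\VecSp{W}\in\AL{9}$ containing $\Phi_{0,0},\Phi_{1,1},\Phi_{2,2}$, and comparing dimensions forces $\VecSp{T}\cap\VecSp{W}=\spn(\{\Phi_{0,0},\Phi_{1,1},\Phi_{2,2}\})$. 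Writing the coordinate (support) vectors of $\Phi_{0,0},\Phi_{1,1},\Phi_{2,2}$ in the rank-one basis $\mathcal{B}=(\psi_0,\ldots,\psi_8)$ of $\VecSp{W}$ as $\chi_0,\chi_1,\chi_2\in\mathbb{F}_2^9$, the fact that $\mathcal{B}$ is a basis gives $\mathbb{H}_{\mathcal{B}}(\Phi_{t,t})=|\chi_t|$ and $\mathbb{H}_{\mathcal{B}}(\Phi_{0,0}+\Phi_{1,1}+\Phi_{2,2})=|\chi_0+\chi_1+\chi_2|$ with no ambiguity of minimal support. A short parity count shows the additivity condition holds exactly when $\chi_0,\chi_1,\chi_2$ have pairwise disjoint supports, so $\VecSp{W}\notin\tilde{\mathcal{E}}'$ means either two of the supports overlap, or they are disjoint with $|\chi_0+\chi_1+\chi_2|=6$ (equivalently every $|\chi_t|=2$).

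The main mechanism is a padding step. Suppose for some $i\in\closedinterval{0}{3}$ I can find a subspace $\VecSp{W}'\subseteq\VecSp{V}$ spanned by rank-one forms, with $\dim\VecSp{W}'\leq 6+k_i$ and $\VecSp{T}\cap\VecSp{W}'$ equal (up to $\stb(\VecSp{T})$) to the stem space $\spn(\mathcal{F}_i)$. As long as $\VecSp{T}+\VecSp{W}'\neq\VecSp{V}$ I adjoin a rank-one vector of a fixed rank-one basis of $\VecSp{V}$ lying outside $\VecSp{T}+\VecSp{W}'$; each such step raises $\dim(\VecSp{T}+\VecSp{W}')$ by one while leaving the intersection dimension unchanged. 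Iterating yields $\VecSp{W}_i\in\AL{6+k_i}$ with $\VecSp{T}+\VecSp{W}_i=\VecSp{V}$, and the identity $\dim(\VecSp{T}\cap\VecSp{W}_i)=9+(6+k_i)-15=k_i$ shows $\VecSp{W}_i\in\Cover{6+k_i}{\mathcal{F}_i}$, so $\VecSp{V}$ is reached from $\tilde{\mathcal{E}}_i$. I first dispose of the case $|\chi_0+\chi_1+\chi_2|=6$: the $6$ basis vectors in the support of $\Phi_{0,0}+\Phi_{1,1}+\Phi_{2,2}$ span a subspace $\VecSp{W}''\subseteq\VecSp{W}$ of dimension $6$ containing this rank-$6$ element. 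If $\dim(\VecSp{T}\cap\VecSp{W}'')=1$ the padding step with $i=0$ applies directly. Otherwise the intersection is a plane or the whole diagonal space, and every plane of $\spn(\{\Phi_{0,0},\Phi_{1,1},\Phi_{2,2}\})$ through the rank-$6$ element contains a rank-$4$ element together with a complementary rank-$2$ element whose sum is invertible, which is exactly the configuration $\mathcal{F}_3$; the padding step with $i=3$ then applies once a diagonal generator is discarded to make the intersection equal $\spn(\mathcal{F}_3)$.

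When instead two supports overlap, say those of $\Phi_{0,0}$ and $\Phi_{1,1}$, one has $|\chi_0+\chi_1|<|\chi_0|+|\chi_1|$, so the rank-$4$ element $\Phi_{0,0}+\Phi_{1,1}$ has Hamming weight strictly below $\mathbb{H}_{\mathcal{B}}(\Phi_{0,0})+\mathbb{H}_{\mathcal{B}}(\Phi_{1,1})$. Spanning the rank-one basis vectors in the support of $\Phi_{0,0}+\Phi_{1,1}$ produces a low-dimensional rank-one-spanned subspace of $\VecSp{V}$ carrying this rank-$4$ element; adjoining a transversal generator realizing $\Phi_{2,2}$ or a second diagonal combination, and controlling the intersection with $\VecSp{T}$, I again reach one of $\tilde{\mathcal{E}}_1,\tilde{\mathcal{E}}_2,\tilde{\mathcal{E}}_3$ through the padding step. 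In each subcase the appearance of a rank-$4$ element in the relevant intersection is precisely what lets the case analysis of Proposition~\ref{prop:covmatprod} route $\VecSp{V}$ away from the all-rank-$2$ branch $\tilde{\mathcal{E}}_4$, using Lemma~\ref{prop:rankhw} to bound the ranks of the extracted generators.

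The hard part is the bookkeeping that turns ``a rank-$4$ element is present'' into reachability from the \emph{correct} one of $\tilde{\mathcal{E}}_1,\tilde{\mathcal{E}}_2,\tilde{\mathcal{E}}_3$. This uses the description of $\stb(\VecSp{T}_{3,2,3})$ as $\{(P\otimes\Trans{R},Q\otimes R^{-1})\}$ from Theorem~\ref{thm:stabmat}, under which $\VecSp{T}$ is identified with $3\times 3$ matrices acted on by $M\mapsto\Trans{P}MQ$ and bilinear rank is twice the matrix rank (together with Corollary~\ref{prop:stabmat}), and the classification, established inside the proof of Proposition~\ref{prop:covmatprod}, of the three orbits of a rank-$4$/second-generator pair, distinguished by the rank of the second generator and of the sum, namely $(4,6)$, $(4,4)$ and $(2,6)$. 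Matching each constructed pair to its orbit, and in particular arranging that the chosen subspace meets $\VecSp{T}$ in exactly a stem plane rather than in a larger diagonal subspace or a non-stem plane of rank profile $\{2,2,4\}$ --- which forces one to pass instead to a completion basis of $\VecSp{T}$ carrying a rank-$\geq 4$ element and re-invoke Proposition~\ref{prop:covmatprod} directly --- is the delicate combinatorial core of the argument. The two Hamming-weight hypotheses defining $\tilde{\mathcal{E}}'$ are exactly the device certifying that these redundant configurations can be detected and removed without loss of exhaustivity.
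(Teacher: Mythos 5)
Your overall plan (show that every $\VecSp{V}$ arising from $\tilde{\mathcal{E}}_4\setminus\tilde{\mathcal{E}}'$ is redundant), your dimension count giving $\VecSp{T}\cap\VecSp{W}=\spn(\{\Phi_{0,0},\Phi_{1,1},\Phi_{2,2}\})$, your characterization of additivity as pairwise disjoint supports, and your padding mechanism are all sound. But the decisive step is missing, and where you do commit to a construction it breaks. For the additive case with total weight $6$ you try to prove that $\VecSp{V}$ is reachable from $\tilde{\mathcal{E}}_0,\ldots,\tilde{\mathcal{E}}_3$; the paper neither needs nor claims this. Its Lemma~\ref{lem:onelementrank3} proves something different: if $\VecSp{V}$ is \emph{not} reachable from $\tilde{\mathcal{E}}_0,\ldots,\tilde{\mathcal{E}}_3$ (hypothesis \textbf{H2}) and all qualifying witnesses have additive weights (hypothesis \textbf{H1}), then some \emph{other} witness $\VecSp{W}'\subset\VecSp{V}$ has Hamming weight $>6$, so $\VecSp{V}$ is caught by $\tilde{\mathcal{E}}'$ itself. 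The engine of that lemma is a counting argument entirely absent from your proposal: complete a rank-one basis of a complement $\VecSp{W}\in\AL{6}$ to a rank-one basis $\mathcal{B}$ of $\VecSp{V}$, project the nine completion vectors onto $\VecSp{T}$ to get a basis $\{\Phi_i\}$ of $\VecSp{T}$ of rank-$2$ elements; if all pairs had additive weights then $\#\mathcal{B}\geq 18>15$, and a non-additive pair of weight-$2$ elements has joint weight $\leq 3$, hence rank $\leq 3$ by Lemma~\ref{prop:rankhw}, contradicting the fact that two independent rank-$2$ elements of $\VecSp{T}$ span a space of rank $4$ (via $\VecSp{T}_{2,2,1}$). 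This pigeonhole-plus-rank contradiction is exactly what justifies the threshold ``$\mathbb{H}>6$'' in the theorem, and nothing in your argument replaces it.

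Moreover, your concrete construction in the disjoint-support case cannot work as stated. Over $\mathbb{F}_2$, each $\Phi_{t,t}$ equals the sum of the basis vectors in its support, so the space $\VecSp{W}''$ spanned by the six support vectors of $\Phi_{0,0}+\Phi_{1,1}+\Phi_{2,2}$ automatically contains $\Phi_{0,0}$, $\Phi_{1,1}$ and $\Phi_{2,2}$: the intersection $\VecSp{T}\cap\VecSp{W}''$ is \emph{always} the full diagonal, so your sub-cases of dimension $1$ and $2$ are vacuous. Worse, any subspace spanned by a subset of those six vectors meets the diagonal in a coordinate subspace $\spn(\{\Phi_{t,t}\}_{t\in S})$; it can contain $\Phi_{0,0}+\Phi_{1,1}$ only by containing $\Phi_{0,0}$ and $\Phi_{1,1}$ separately, so ``discarding a diagonal generator'' can only produce the rank-profile-$\{2,2,4\}$ plane and never the stem plane $\spn(\{\Phi_{0,0}+\Phi_{1,1},\Phi_{2,2}\})$. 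You flag precisely this configuration, defer it to ``the delicate combinatorial core'', and never resolve it; the same deferral occurs in the overlapping-support case (``controlling the intersection with $\VecSp{T}$''). There the paper's Lemma~\ref{lem:linham} supplies the actual mechanism, which is a downward step rather than your upward one: delete from $\mathcal{B}$ a vector $\Psi$ on which some $\Phi_{t,t}$ has nonzero coordinate while $\Phi_{0,0}+\Phi_{1,1}+\Phi_{2,2}$ has coordinate zero; then $\VecSp{W}'=\spn(\mathcal{B}-\{\Psi\})$ still satisfies $\VecSp{T}+\VecSp{W}'=\VecSp{V}$ and meets $\VecSp{T}$ in a $2$-dimensional space containing the rank-$6$ element, which then lands in $\mathcal{E}_0$, $\mathcal{E}_1$ or $\mathcal{E}_3$. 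In short: correct framing and a usable padding lemma, but both halves of the actual proof (the weight-$>6$ counting argument and the one-vector deletion) are missing, and the step you do spell out is contradicted by the structure it is applied to.
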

We prove Theorem~\ref{thm:mainthmhamweight} within $2$ steps:
\begin{enumerate}
	\item We prove in Lemma~\ref{lem:linham} that if $\mathbb{H}_{\mathcal{B}}(\Phi_{0,0}+\Phi_{1,1}+\Phi_{2,2})\neq
	\mathbb{H}_{\mathcal{B}}(\Phi_{0,0})+\mathbb{H}_{\mathcal{B}}(\Phi_{1,1})+\mathbb{H}_{\mathcal{B}}(\Phi_{2,2})$,
	a subspace $\VecSp{V}$ obtained as $\VecSp{V} = \VecSp{T} + \VecSp{W}\circ \sigma$ can also be obtained
	as $\VecSp{V} = \VecSp{T} + \VecSp{W}'\circ \sigma$, with $\VecSp{W}' \in \tilde{\mathcal{E}}_{0}$, $\tilde{\mathcal{E}}_{1}$ or $\tilde{\mathcal{E}}_{3}$.
	\item Otherwise, if $\mathbb{H}_{\mathcal{B}}(\Phi_{0,0}+\Phi_{1,1}+\Phi_{2,2}) =
	\mathbb{H}_{\mathcal{B}}(\Phi_{0,0})+\mathbb{H}_{\mathcal{B}}(\Phi_{1,1})+\mathbb{H}_{\mathcal{B}}(\Phi_{2,2})$,
	it remains to prove that we do not lose in generality if we assume that
	$\mathbb{H}_{\mathcal{B}}(\Phi_{0,0}+\Phi_{1,1}+\Phi_{2,2}) > 6$, which is done in Lemma~\ref{lem:onelementrank3}.
\end{enumerate}

\begin{lemma}
	\label{lem:linham}
	Let $\VecSp{W} \in \AL{9}$ and let $\VecSp{V} = \VecSp{T}+\VecSp{W}$.
	We assume that $\VecSp{T}+\VecSp{W} \in \CCover{15}{\VecSp{T}}$ and
	$\spn(\{\Phi_{0,0},\Phi_{1,1},\Phi_{2,2}\}) \subset \VecSp{W}$.
	Let $\mathcal{B}$ be a basis of rank-one bilinear forms of $\VecSp{W}$.
	If $\mathbb{H}_{\mathcal{B}}(\Phi_{0,0}+\Phi_{1,1}+\Phi_{2,2})\neq
	\mathbb{H}_{\mathcal{B}}(\Phi_{0,0})+\mathbb{H}_{\mathcal{B}}(\Phi_{1,1})+\mathbb{H}_{\mathcal{B}}(\Phi_{2,2})$,
	there exists $\VecSp{W}'\subset \VecSp{W}$ such that $\VecSp{V} = \VecSp{T} + \VecSp{W}'$ and
	there exists $\sigma' \in \stb(\VecSp{T})$ such that
	$\VecSp{W}'\circ \sigma' \in {\mathcal{E}}_0, {\mathcal{E}}_1$ or ${\mathcal{E}}_3$.
\end{lemma}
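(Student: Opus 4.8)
The plan is to convert the hypothesis on Hamming weights into a combinatorial statement about the supports of $\Phi_{0,0},\Phi_{1,1},\Phi_{2,2}$ in the fixed rank-one basis $\mathcal{B}=(\psi_0,\ldots,\psi_8)$ of $\VecSp{W}$, and then to produce $\VecSp{W}'$ by discarding one carefully chosen rank-one form. Writing $J_i\subset\closedinterval{0}{8}$ for the support of $\Phi_{i,i}$ over $\mathcal{B}$ (the indices $t$ with nonzero $\psi_t$-coordinate), the support of a sum is contained in the union of supports, so $\mathbb{H}_{\mathcal{B}}$ is subadditive and $\mathbb{H}_{\mathcal{B}}(\Phi_{0,0}+\Phi_{1,1}+\Phi_{2,2})=|J_0|+|J_1|+|J_2|$ holds exactly when $J_0,J_1,J_2$ are pairwise disjoint. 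Hence the hypothesis $\mathbb{H}_{\mathcal{B}}(\Phi_{0,0}+\Phi_{1,1}+\Phi_{2,2})\neq\mathbb{H}_{\mathcal{B}}(\Phi_{0,0})+\mathbb{H}_{\mathcal{B}}(\Phi_{1,1})+\mathbb{H}_{\mathcal{B}}(\Phi_{2,2})$ is equivalent to asserting that some basis index $t$ lies in at least two of $J_0,J_1,J_2$.

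The core construction is a ``drop one rank-one form'' move. Since $9+9-15=3$ and $\Phi_{0,0},\Phi_{1,1},\Phi_{2,2}\in\VecSp{T}\cap\VecSp{W}$, we have $\VecSp{T}\cap\VecSp{W}=\spn(\{\Phi_{0,0},\Phi_{1,1},\Phi_{2,2}\})$. For a chosen $t$ with $t\in J_i$, set $\VecSp{W}'=\spn(\mathcal{B}\setminus\{\psi_t\})\in\AL{8}$. Because $\Phi_{i,i}\in\VecSp{T}$ and $\psi_t=\Phi_{i,i}+\sum_{s\in J_i\setminus\{t\}}\psi_s$, we get $\psi_t\in\VecSp{T}+\VecSp{W}'$, whence $\VecSp{W}\subset\VecSp{T}+\VecSp{W}'$ and $\VecSp{T}+\VecSp{W}'=\VecSp{T}+\VecSp{W}=\VecSp{V}$. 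A dimension count then forces $\VecSp{W}'\cap\VecSp{T}$ to be the $2$-dimensional hyperplane of $\spn(\{\Phi_{0,0},\Phi_{1,1},\Phi_{2,2}\})$ cut out by the single condition that the $\psi_t$-coordinate vanishes, i.e. $\sum_i\epsilon_i[t\in J_i]=0$ on $\epsilon_0\Phi_{0,0}+\epsilon_1\Phi_{1,1}+\epsilon_2\Phi_{2,2}$; its type is thus dictated by which supports contain $t$.

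Next I would classify the outcome by the multiset of ranks of the three nonzero elements of $\VecSp{W}'\cap\VecSp{T}$, a $\stb(\VecSp{T})$-invariant. Using $\VecSp{T}\cong\mathcal{M}_{3,3}\otimes I_2$, each $\Phi_{i,i}$ has rank $2$, a pairwise sum has rank $4$, and $\Phi_{0,0}+\Phi_{1,1}+\Phi_{2,2}$ has rank $6$. If the chosen index lies in exactly two supports, say $t\in J_i\cap J_j\setminus J_k$, the surviving plane is $\spn(\{\Phi_{i,i}+\Phi_{j,j},\Phi_{k,k}\})$ with rank multiset $\{2,4,6\}$. Taking $R=I_2$ and $P=Q$ a permutation matrix in Theorem~\ref{thm:stabmat} yields $\sigma'\in\stb(\VecSp{T})$ acting as a simultaneous row/column permutation of the diagonal, carrying this plane to $\spn(\{\Phi_{0,0}+\Phi_{1,1},\Phi_{2,2}\})$; hence $\VecSp{W}'\circ\sigma'\in\mathcal{E}_3$. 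So as soon as the hypothesis is witnessed by an index common to exactly two supports, the lemma follows with target $\mathcal{E}_3$.

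The remaining, genuinely delicate, case is when every shared index belongs to all three supports at once, so the $J_i$ overlap only through a common triple part $D\neq\emptyset$. Dropping $\psi_t$ with $t\in D$ imposes $\epsilon_0+\epsilon_1+\epsilon_2=0$ and leaves the all-rank-$4$ plane $\spn(\{\Phi_{0,0}+\Phi_{1,1},\Phi_{1,1}+\Phi_{2,2}\})$, while dropping a private vector destroys one $\Phi_{i,i}$; moreover no basis-subset reduction can retain $\Phi_{0,0}+\Phi_{1,1}+\Phi_{2,2}$, since in this configuration every index meets an odd number of supports. This is the main obstacle, and I expect to clear it by leaving the basis $\mathcal{B}$: choose an optimal rank-one decomposition $\chi_1,\ldots,\chi_6$ of $\Phi_{0,0}+\Phi_{1,1}+\Phi_{2,2}\in\VecSp{W}$, extend it to a rank-one basis $\chi_1,\ldots,\chi_9$ of $\VecSp{W}$ (possible as $\VecSp{W}\in\AL{9}$), and set $\VecSp{W}'=\spn(\chi_1,\ldots,\chi_7)\in\AL{7}$. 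Then $\Phi_{0,0}+\Phi_{1,1}+\Phi_{2,2}\in\VecSp{W}'$, and a generic extension gives $\dim(\VecSp{W}'\cap\spn(\{\Phi_{0,0},\Phi_{1,1},\Phi_{2,2}\}))=7+3-9=1$, which is necessarily the line $\spn(\{\Phi_{0,0}+\Phi_{1,1}+\Phi_{2,2}\})$, so $\VecSp{T}+\VecSp{W}'=\VecSp{V}$ and $\VecSp{W}'\in\mathcal{E}_0$ with $\sigma'=\Id$. The step demanding the most care is showing the rank-one extension can be arranged so that $\Phi_{0,0},\Phi_{1,1},\Phi_{2,2}$ project onto a rank-$2$ set of the last coordinates, forcing the intersection to be exactly this line; this is precisely where the rank-one structure of $\VecSp{W}$ and the hypothesis $\VecSp{V}\in\AL{15}$ enter.
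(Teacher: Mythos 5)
Your support-theoretic translation of the hypothesis and your ``drop one rank-one form'' construction are correct over $\mathbb{F}_2$, and in the case where some basis index lies in \emph{exactly two} of the supports $J_0,J_1,J_2$ your argument is essentially the paper's proof: the paper chooses precisely a $\Psi\in\mathcal{B}$ on which some $\Phi_{i,i}$ has nonzero coordinate while $\Phi_{0,0}+\Phi_{1,1}+\Phi_{2,2}$ has coordinate zero, which over $\mathbb{F}_2$ is the same thing as an index shared by exactly two supports, and it then analyses the plane $\VecSp{T}\cap\VecSp{W}'$ exactly as you do. (Your observation that this plane is then always of type $\spn(\{\Phi_{i,i}+\Phi_{j,j},\Phi_{k,k}\})$, hence lands in $\mathcal{E}_3$ after a permutation from Theorem~\ref{thm:stabmat}, is in fact sharper than the paper's three-way case split.)

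The genuine gap is your last case, where every shared index lies in all three supports, and your proposed repair does not close it. An optimal rank-one decomposition $\chi_1,\ldots,\chi_6$ of $\Phi_{0,0}+\Phi_{1,1}+\Phi_{2,2}$ is a family of rank-one forms of $\Bl{6}{6}$; nothing in the hypotheses puts such a family \emph{inside} $\VecSp{W}$, and without $\chi_1,\ldots,\chi_6\in\VecSp{W}$ the step ``extend it to a rank-one basis of $\VecSp{W}$'' is not even well posed. Worse, producing $7$ independent rank-one elements of $\VecSp{W}$ whose span contains the rank-$6$ element and meets $\spn(\{\Phi_{0,0},\Phi_{1,1},\Phi_{2,2}\})$ only in the line $\spn(\{\Phi_{0,0}+\Phi_{1,1}+\Phi_{2,2}\})$ is exactly the assertion $\VecSp{W}'\in\mathcal{E}_0$ with $\VecSp{T}+\VecSp{W}'=\VecSp{V}$ that you are trying to prove, so the ``fix'' assumes what has to be shown; and the auxiliary claim that a \emph{generic} extension gives intersection dimension $7+3-9=1$ is a genericity argument that is not available over $\mathbb{F}_2$ and that you yourself flag as unproved. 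You should know that this is precisely the case the paper's own proof glosses over: its opening deduction (existence of the $\Psi$ above) fails over $\mathbb{F}_2$ when all overlaps are triple, since then every index of every $J_i$ carries a nonzero coordinate of the sum. As your own analysis shows, in that configuration every single-drop move that preserves $\VecSp{T}+\VecSp{W}'=\VecSp{V}$ produces the all-rank-$4$ plane $\spn(\{\Phi_{0,0}+\Phi_{1,1},\Phi_{1,1}+\Phi_{2,2}\})$, i.e.\ lands in $\mathcal{E}_2$, which the lemma's conclusion excludes; so either one proves this configuration cannot occur, or one exhibits a $\VecSp{W}'$ not spanned by a subfamily of $\mathcal{B}$, or one weakens the statement to allow $\mathcal{E}_2$ --- which would still suffice for Theorem~\ref{thm:mainthmhamweight}, since its enumeration includes $\tilde{\mathcal{E}}_2$. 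Any of these requires an idea that neither your proposal nor, strictly speaking, the paper supplies.
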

\begin{proof}
	We have by hypothesis
	$\mathbb{H}_{\mathcal{B}}(\Phi_{0,0}+\Phi_{1,1}+\Phi_{2,2}) <
\mathbb{H}_{\mathcal{B}}(\Phi_{0,0})+\mathbb{H}_{\mathcal{B}}(\Phi_{1,1})+\mathbb{H}_{\mathcal{B}}(\Phi_{2,2})$.
	Thus, there exist two elements $\Psi\in\mathcal{B}$ and $\Phi\in\{\Phi_{0,0}, \Phi_{1,1}, \Phi_{2,2}\}$ such that
	the coordinate of $\Phi$ on $\Psi$ is not zero and the coordinates of $\Phi_{0,0}+\Phi_{1,1}+\Phi_{2,2}$ on $\Psi$ is zero.
	By considering the vector space $\VecSp{W}' = \spn(\mathcal{B}-\{\Psi\})$, we have
	$\VecSp{W}' \in \AL{8}$.
	Moreover, we have $\VecSp{W} = \spn(\{\Psi\}) \oplus \VecSp{W}' = \spn(\{\Phi\}) \oplus \VecSp{W}'$
	and $\VecSp{T}+(\spn(\{\Phi\}) \oplus \VecSp{W}') = \subset\VecSp{T} + \VecSp{W}'$.
	Thus, $\dim(\VecSp{T}+\VecSp{W}') = \dim(\VecSp{T}+\VecSp{W}) = 15$.
	Consequently, $\dim(\VecSp{T} \cap \VecSp{W}') = 2$.
	
	If there exists in $\VecSp{T} \cap \VecSp{W}'$ two elements $\Phi_1$ and $\Phi_2$ of rank smaller or equal to $4$ 
	such that $\Phi_1 + \Phi_2 = \Phi_{0,0}+\Phi_{1,1}+\Phi_{2,2}$, then
	$$\exists \sigma \in \stb(\VecSp{T}),\ \VecSp{W}'\circ\sigma \in
	\begin{cases}
		\Cover{8}{\Phi_{0,0}+\Phi_{1,1},\Phi_{0,1}+\Phi_{2,2}}\\
		\hfil \text{or}\\
		\Cover{8}{\Phi_{0,0}+\Phi_{1,1},\Phi_{2,2}}\\
	\end{cases}.$$
	Otherwise, there exists $\VecSp{W}''\subset\VecSp{W}'$ such that
	$$\exists \sigma \in \stb(\VecSp{T}),\ \VecSp{W}''\circ\sigma \in
		\Cover{7}{\Phi_{0,0}+\Phi_{1,1}+\Phi_{2,2}}$$
	and
	$\VecSp{T}+\VecSp{W}'' \in \AL{15}$,
	which concludes.
\end{proof}

\begin{lemma}
	\label{lem:onelementrank3}
	Let $\VecSp{V} \in \CCover{15}{\VecSp{T}}$. The subspace $\VecSp{V}$ satisfies hypotheses \textbf{H1} and
	\textbf{H2} state as follows.
	\begin{itemize}
		\item[\textbf{H1:}] For any $\VecSp{W} \subset \VecSp{V}$ such that
	there exists $\sigma \in \stb(\VecSp{T})$ satisfying $\VecSp{W}\circ \sigma \in \Cover{9}{\Phi_{0,0},\Phi_{1,1},\Phi_{2,2}}$ and $\VecSp{T}+\VecSp{W} \in \AL{15}$,
	we have, for any basis $\mathcal{B}$ of rank-one bilinear forms of $\VecSp{W}$,
	$$\mathbb{H}_{\mathcal{B}}(\Phi_{0,0}+\Phi_{1,1}+\Phi_{2,2})=
	\mathbb{H}_{\mathcal{B}}(\Phi_{0,0})+\mathbb{H}_{\mathcal{B}}(\Phi_{1,1})+\mathbb{H}_{\mathcal{B}}(\Phi_{2,2}).$$
		\item[\textbf{H2:}] There do not exist $\VecSp{W}\subset\VecSp{V}$ and $\sigma \in \stb(\VecSp{T})$
	such that $\VecSp{W}\circ \sigma \in {\mathcal{E}}_0, {\mathcal{E}}_1$, $\mathcal{E}_2$ or ${\mathcal{E}}_3$ and $\VecSp{V} = \VecSp{T} + \VecSp{W}$ (in other terms, $\VecSp{V}$ can not be obtained via the enumeration of
	$\tilde{\mathcal{E}}_0$, $\tilde{\mathcal{E}}_1$, $\tilde{\mathcal{E}}_2$ or $\tilde{\mathcal{E}}_3$).
	\end{itemize}

	Then, there exists $\VecSp{W}' \subset \VecSp{V}$ and $\sigma' \in \stb(\VecSp{T})$
	such that $\VecSp{W}'\circ \sigma' \in \Cover{9}{\Phi_{0,0},\Phi_{1,1},\Phi_{2,2}}$, $\VecSp{T} + \VecSp{W}' \in \AL{15}$,
	and $\VecSp{W}'$ has a basis $\mathcal{B}'$ of rank-one
	bilinear forms such that
	$$\mathbb{H}_{\mathcal{B}'\circ \sigma'}(\Phi_{0,0}+\Phi_{1,1}+\Phi_{2,2}) >6.$$
\end{lemma}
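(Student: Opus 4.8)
The plan is to reduce to the tight (minimal Hamming weight) situation and then either spread $s:=\Phi_{0,0}+\Phi_{1,1}+\Phi_{2,2}$ over more generators or contradict \textbf{H2}. First I would record the existence of a witness coming from $\mathcal{E}_4$. Since $\VecSp{V}\in\CCover{15}{\VecSp{T}}$, the stem of Proposition~\ref{prop:covmatprod} together with the covering property of Proposition~\ref{prop:formeirs} shows that $\VecSp{V}$ is reached from one of the five sets, and \textbf{H2} rules out $\mathcal{E}_0,\dots,\mathcal{E}_3$; hence there exist $\VecSp{W}'\subset\VecSp{V}$ and $\sigma'\in\stb(\VecSp{T})$ with $\VecSp{W}'\circ\sigma'\in\Cover{9}{\Phi_{0,0},\Phi_{1,1},\Phi_{2,2}}$ and $\VecSp{T}+\VecSp{W}'=\VecSp{V}$. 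Writing $\VecSp{W}''=\VecSp{W}'\circ\sigma'$, a dimension count gives $\dim(\VecSp{T}\cap\VecSp{W}'')=9+9-15=3$, and since $\Phi_{0,0},\Phi_{1,1},\Phi_{2,2}\in\VecSp{T}\cap\VecSp{W}''$ are independent, $\VecSp{T}\cap\VecSp{W}''=\spn(\{\Phi_{0,0},\Phi_{1,1},\Phi_{2,2}\})$.

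Next I would isolate the only case needing work. If some rank-one basis $\mathcal{B}$ of $\VecSp{W}''$ satisfies $\mathbb{H}_{\mathcal{B}}(s)>6$, we are done by taking $\mathcal{B}'=\mathcal{B}\circ\sigma'^{-1}$, so that $\mathcal{B}'\circ\sigma'=\mathcal{B}$. Otherwise every rank-one basis gives $\mathbb{H}_{\mathcal{B}}(s)=6$ (it cannot be smaller, by Lemma~\ref{prop:rankhw} and $\rk(s)=6$). Fixing such a basis $\{\psi_0,\dots,\psi_8\}$ and applying \textbf{H1} together with Lemma~\ref{prop:rankhw} to each summand forces $\mathbb{H}_{\mathcal{B}}(\Phi_{0,0})=\mathbb{H}_{\mathcal{B}}(\Phi_{1,1})=\mathbb{H}_{\mathcal{B}}(\Phi_{2,2})=2$ with pairwise disjoint supports; after renumbering, $\Phi_{0,0}=\psi_0+\psi_1$, $\Phi_{1,1}=\psi_2+\psi_3$, $\Phi_{2,2}=\psi_4+\psi_5$, whence $s=\psi_0+\dots+\psi_5$ and $\psi_6,\psi_7,\psi_8$ are ``extra'' generators. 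Since $\spn(\{\psi_0,\dots,\psi_5\})\cap\VecSp{T}=\spn(\{\Phi_{0,0},\Phi_{1,1},\Phi_{2,2}\})$, one finds $\dim(\VecSp{T}+\spn(\{\psi_0,\dots,\psi_5\}))=12$, so the three extras must supply the missing dimensions of $\VecSp{V}$.

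Finally I would raise the weight of $s$ by a rank-one change of basis folding an extra generator into its support: replacing $\psi_0$ by $\psi_0+\psi_6$, when the latter is again rank one, yields a rank-one basis in which $s=(\psi_0+\psi_6)+\psi_1+\dots+\psi_5+\psi_6$ has weight $7$. The hard part will be guaranteeing such a rank-one ``mixing'' element: over $\mathbb{F}_2$ a sum of two rank-one forms is rank one exactly when they share a left or right factor, and I would derive the existence of a usable mixing from the outer-product description of rank-one forms combined with the fact, forced by the dimension count above, that the extras interact nontrivially with the support. The residual obstruction is a \emph{rigid} witness, every rank-one basis of which keeps $s$ at weight $6$; here I would argue that such rigidity, via the disjoint-pair structure and Corollary~\ref{prop:stabmat}, lets one extract a rank-one-spanned subspace $\VecSp{W}$ with $\VecSp{T}+\VecSp{W}=\VecSp{V}$ and $\VecSp{W}\circ\sigma\in\mathcal{E}_0,\mathcal{E}_1,\mathcal{E}_2$ or $\mathcal{E}_3$, contradicting \textbf{H2}. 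Establishing this spreading-versus-rigidity dichotomy, i.e.\ proving one cannot remain stuck at weight $6$ without falling into an already-enumerated set, is the main difficulty of the proof.
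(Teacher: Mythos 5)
Your preparatory steps are sound: the existence of an $\mathcal{E}_4$-witness forced by \textbf{H2}, the dimension count $\dim(\VecSp{T}\cap\VecSp{W}'')=3$, and the tight-case structure $\Phi_{0,0}=\psi_0+\psi_1$, $\Phi_{1,1}=\psi_2+\psi_3$, $\Phi_{2,2}=\psi_4+\psi_5$ forced by \textbf{H1} and Lemma~\ref{prop:rankhw} are all correct. But the proof stops exactly where the lemma's content begins. Your ``folding'' step needs $\psi_0+\psi_6$ to have rank one, which over $\mathbb{F}_2$ happens only when $\psi_0$ and $\psi_6$ share a linear factor on the left or on the right; nothing in your configuration guarantees that any extra generator shares a factor with any support element, and you give no argument that one does. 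The fallback --- that a ``rigid'' witness lets one produce a subspace landing in $\mathcal{E}_0,\dots,\mathcal{E}_3$ and so contradicts \textbf{H2} --- is asserted, not proved, and you yourself concede that this spreading-versus-rigidity dichotomy is ``the main difficulty of the proof''. Since that dichotomy \emph{is} the lemma, what you have is the easy reductions plus a restatement of the problem. There is also a structural limitation in the plan itself: you only ever perturb bases of one fixed witness $\VecSp{W}''$, whereas the conclusion quantifies existentially over all $\VecSp{W}'\subset\VecSp{V}$, and it is precisely this freedom that the paper exploits.

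The paper's proof works with a rank-one basis of the whole of $\VecSp{V}$, not of a witness. Write $\VecSp{V}=\VecSp{T}\oplus\VecSp{W}$ with $\VecSp{W}\in\AL{6}$ (Proposition~\ref{prop:baseincomplete}), complete a rank-one basis $\mathcal{W}$ of $\VecSp{W}$ by nine rank-one forms $\psi_0,\dots,\psi_8$ into a basis $\mathcal{B}$ of $\VecSp{V}$, and write $\psi_i=\Phi_i+\Psi_i$ with $\Phi_i\in\VecSp{T}$, $\Psi_i\in\VecSp{W}$; the $\Phi_i$ form a basis of $\VecSp{T}$, all of rank $2$ by \textbf{H2}. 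Two ingredients then produce a heavy element. First, a counting argument: each $\Phi_i$ satisfies $\mathbb{H}_{\mathcal{B}}(\Phi_i)\geq 2$ by Lemma~\ref{prop:rankhw}, so if all supports were pairwise disjoint one would need $\#\mathcal{B}\geq 2\cdot 9=18>15$; hence some pair $(\Phi_i,\Phi_{i'})$ has overlapping supports. Second, a rank obstruction: if both had weight exactly $2$, their joint support would have size at most $3$, giving $\rk(\spn(\{\Phi_i,\Phi_{i'}\}))\leq 3$ by Lemma~\ref{prop:rankhw}; but two linearly independent rank-$2$ elements of $\VecSp{T}_{3,2,3}$ always span a subspace of rank $4$ (either their sum has rank $4$, or the span is isomorphic to $\VecSp{T}_{2,2,1}$, whose rank is $4$). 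So some $\Phi_{i^*}$ has $\mathbb{H}_{\mathcal{B}}(\Phi_{i^*})>2$. Completing $\Phi_{i^*}$ to a triple $\mathcal{F}$ of the $\Phi_j$'s that is mapped onto $\{\Phi_{0,0},\Phi_{1,1},\Phi_{2,2}\}$ by some $\sigma\in\stb(\VecSp{T})$ (the all-rank-$2$ case of Proposition~\ref{prop:covmatprod}), and setting $\VecSp{W}'=\VecSp{W}\oplus\spn(\{\psi_j\ |\ \Phi_j\in\mathcal{F}\})$, hypothesis \textbf{H1} applied to this new witness gives $\mathbb{H}(\Phi_{0,0}+\Phi_{1,1}+\Phi_{2,2})=\sum_{\Phi_j\in\mathcal{F}}\mathbb{H}_{\mathcal{B}}(\Phi_j)>2+2+2=6$. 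This counting-plus-rank mechanism --- in particular the lower bound $\rk(\VecSp{T}_{2,2,1})=4$ converting ``overlapping supports'' into ``weight greater than $2$'' --- is the engine missing from your proposal; without it, or an actual proof of your dichotomy, the argument does not go through.
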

\begin{proof}
	Let $\VecSp{W}\in \AL{6}$ be such that
	$\VecSp{T}\oplus\VecSp{W} \in \AL{15}$.
	Take a basis $\mathcal{W}$ of $\VecSp{W}$, and complete it into a basis $\mathcal{B}$
	of $\VecSp{T}\oplus \VecSp{W}$ using $9$ rank-one bilinear forms, denoted
	by $\{\psi_i\}_{0\leq i<9}$. For all $i \in \closedinterval{0}{8}$,
	write $\psi_i = \Phi_i + \Psi_i$, with $\Phi_i\in \VecSp{T}$ and $\Psi_i \in \VecSp{W}$.
	The $\Phi_i$'s form a basis of $\VecSp{T}$.
	In our context, we are concerned by the case $\rk(\Phi_i) = 2$ for any $i$ (otherwise \textbf{H2} is not satisfied).
	Since we assume Hypothesis \textbf{H1}, it is enough to prove that there exists $i$ such that
	$\mathbb{H}_{\mathcal{B}}(\Phi_i) > 2$.

	There is necessarily a couple $(\Phi_i,\Phi_{i'})$ such that
	$\mathbb{H}_{\mathcal{B}}(\Phi_i+\Phi_{i'})<\mathbb{H}_{\mathcal{B}}(\Phi_i)+\mathbb{H}_{\mathcal{B}}(\Phi_{i'}).$
	Otherwise, we would have $\#\mathcal{B} \geq 2\cdot 9 = 18 \neq 15 = \rk(\VecSp{T})$.
	
	If $\mathbb{H}_{\mathcal{B}}(\Phi_i) = 2$ and $\mathbb{H}_{\mathcal{B}}(\Phi_{i'}) = 2$,
	then
	$$\mathbb{H}_{\mathcal{B}}(\{\Phi_i,\Phi_{i'}\})\leq 3$$
	and, by Lemma~\ref{prop:rankhw}, $\rk(\{\Phi_i,\Phi_{i'}\}) \leq 3$.
	Henceforth, we prove that this is contradictory, because $\rk(\{\Phi_i,\Phi_{i'}\}) = 4$. 
	Indeed, there are two cases.
	\begin{itemize}
	\item If $\Phi_i + \Phi_{i'}$ has rank $4$, the conclusion follows.
	\item If $\spn(\{\Phi_i,\Phi_{i'}\})$ is isomorphic to $\VecSp{T}_{2,2,1}$, whose rank is equal to $4$:
	$\VecSp{T}_{2,2,1}$ and $\VecSp{T}_{2,1,2}$ have the same rank according to~\cite{deGroote:1987:LCB:23720}
	and $\VecSp{T}_{2,1,2}$ is a vector space of dimension one generated by a bilinear form of rank $4$.
	\end{itemize}
	Consequently, $\mathbb{H}_{\mathcal{B}}(\Phi_i) > 2$ or $\mathbb{H}_{\mathcal{B}}(\Phi_{i'}) > 2$.
\end{proof}
\end{document}